\documentclass{article}
\usepackage{fullpage}

\usepackage{natbib}

\usepackage{array,xspace,multirow,hhline,tikz,tabularx,booktabs,fixltx2e}
\usepackage{pgflibraryshapes}
\usetikzlibrary{fit} 
\usepackage{graphicx}
\usepackage{epstopdf}
\usepackage{subcaption}
\usepackage{mathtools}
\usepackage{hhline}
\DeclareGraphicsExtensions{.eps}

\usepackage{enumitem}

\usepackage{authblk}
\usepackage{amsmath,amssymb,amsfonts,amsthm}

\makeatletter
\newtheorem*{rep@theorem}{\rep@title}
\newcommand{\newreptheorem}[2]{%
\newenvironment{rep#1}[1]{%
 \def\rep@title{#2 \ref{##1}}%
 \begin{rep@theorem}}%
 {\end{rep@theorem}}}
\makeatother
\newreptheorem{theorem}{Theorem}
\newreptheorem{corollary}{Corollary}

\usepackage{algorithm,algorithmic}

\newcommand{\romanenumi}{\renewcommand{\theenumi}{({\emph{\roman{enumi}}})}
   \renewcommand{\labelenumi}{\theenumi}}

\newcommand{\ie}{i.e.,\xspace}
\newcommand{\eg}{e.g.,\xspace}

\newtheorem{definition}{Definition}%
\newtheorem{theorem}[definition]{Theorem}%
\newtheorem{lemma}[definition]{Lemma}%
\newtheorem{proposition}[definition]{Proposition}%
\newtheorem{observation}{Observation}%
\newtheorem{claim}[definition]{Claim}%
%


\newcommand{\secref}[1]{Section~\ref{#1}}
\newcommand{\figref}[1]{Figure~\ref{#1}}

\newcommand{\lemref}[1]{Lemma~\ref{#1}}
\newcommand{\thmref}[1]{Theorem~\ref{#1}}

\newcommand{\defref}[1]{Definition~\ref{#1}}
\newcommand{\propref}[1]{Proposition~\ref{#1}}


\newcommand{\obsref}[1]{Observation~\ref{#1}}

\newcommand{\dom}{\overline{D}}
\newcommand{\set}[1]{\{#1\}}
\newcommand{\midd}{\mathrel{:}}

\newcommand{\es}[0]{\ensuremath{\mathit{ES}}\xspace}
\newcommand{\bp}[0]{\ensuremath{\mathit{BP}}\xspace}
\newcommand{\supp}[0]{\ensuremath{\mathit{supp}}\xspace}

\sloppy

\title{Computing Possible and Necessary Equilibrium Actions\\(and Bipartisan Set Winners)\thanks{A short version of this article has appeared in the \textit{Proceedings of the 30th AAAI Conference on Artificial Intelligence} \citep{BFC16a}.}}

\author[1]{Markus Brill}
\author[2]{Rupert Freeman}
\author[3]{Vincent Conitzer}

\affil[1]{Efficient Algorithms Research Group, TU Berlin, Germany}
\affil[2]{Darden School of Business, University of Virginia, USA}
\affil[3]{Department of Computer Science, Duke University, USA}

\date{}

\begin{document}

\maketitle

\begin{abstract}
\noindent In many multiagent environments, a designer has some, but limited control over the game being played. In this paper, we formalize this by considering incompletely specified games, in which some entries of the payoff matrices can be chosen from a specified set. We show that it is NP-hard for the designer to make these choices optimally, even in zero-sum games.  In fact, it is already intractable to decide whether a given action is (potentially or necessarily) played in equilibrium. 
We also consider incompletely specified symmetric games in which all completions are required to be symmetric. Here, hardness holds even in weak tournament games (symmetric zero-sum games whose entries are all $-1$, $0$, or $1$) 
and in tournament games 
(symmetric zero-sum games whose non-diagonal entries are all $-1$ or~$1$). 
The latter result settles the complexity of the possible and necessary winner problems for a social-choice-theoretic solution concept known as the bipartisan set. 
We finally give a mixed-integer linear programming formulation for weak tournament games and evaluate it experimentally.
\end{abstract}

\section{Introduction}

Game theory provides the natural toolkit for reasoning about systems of
multiple self-interested agents.  In some cases, the game is exogenously
determined and all that is left to do is to figure out how to play it.  For
example, if we are trying to solve heads-up limit Texas hold'em poker (as
was effectively done
by~\citet{BBJT15a}), there is no question about
what the game is.  Out in the world, however, the rules of the game are
generally not set in stone.  Often, there is an agent, to whom we will
refer as the {\em designer} or {\em principal}, that has some control over
the game played.  Consider, for example, applications of game theory to
security
domains~\citep{Pita08:Using,Tsai09:IRIS,An12:PROTECT,Yin12:TRUSTSAIMAG}.  In
the long run, the game could be changed, by adding or subtracting security
resources~\citep{Bhattacharya11:Approximation} or reorganizing the targets
being defended (roads, flights, etc.).

{\em Mechanism design} constitutes the extreme case of this, where the
designer typically has complete freedom in choosing the game to be played
by the agents (but still faces a challenging problem due to the agents'
private information).  However, out in the world, we generally also do not
find this other extreme.  Usually, some existing systems are in place and
place constraints on what the designer can do.  This is true to some extent
even in the contexts where mechanism design is most fruitfully applied.
For example, one can imagine that it would be difficult and costly for a
major search engine to entirely redesign its existing auction mechanism for
allocating advertisement space, because of existing users' expectations,
interfacing software, etc.  But this does not mean that aspects of the game
played by the advertisers cannot be tweaked in the designer's favor.

In this paper, we introduce a general framework for addressing intermediate
cases, where the designer has some but not full control over the game.  We
focus on {\em incompletely specified games}, where some entries of the game
matrix contain {\em sets} of payoffs, from among which the designer must
choose.  The designer's aim is to choose so that the resulting equilibrium
of the game is desirable to her.  
This problem is conceptually related to $k$-implementation~\citep{Monderer04:K} and the closely related internal
implementation~\citep{Anderson10:Internal}, where one of the parties is
also able to modify an existing game to achieve better equilibria for
herself. However, in those papers the game is modified by committing to
payments, whereas we focus on choosing from a fixed set of payoffs in an entry.

We focus on two-player zero-sum games, both symmetric and not (necessarily) symmetric, and show NP-hardness in both cases. (Due to a technical reason explained later, hardness for the symmetric case does not imply hardness for the not-necessarily-symmetric case.) The hardness result for the symmetric case holds even for {\em weak tournament games}, in which the payoffs are all $-1$, $0$, or $1$, and for {\em tournament games}, in which the off-diagonal payoffs are all $-1$ or $1$. 

These results have direct implications for related problem in {\em computational social choice}, another important research area in multiagent systems. In social choice (specifically, voting), we take as input a vector of rankings of the alternatives (\eg $a \succ c \succ b$) and as output return some subset of the alternatives. Some social choice functions are based on the {\em pairwise majority graph} which has a directed edge from one alternative to another if a majority of voters prefers the former. 
One attractive concept is that of the {\em essential set}~\citep{LLL93b,DuLa99a}, which can be thought of as based on the following weak tournament game. Two abstract players simultaneously pick an alternative, and if one player's chosen alternative has an edge to the other's, the former wins. Then, the essential set ($\es$) consists of all alternatives that are played with positive probability in some equilibrium. In the absence of majority ties, this game is a tournament game and its essential set is referred to as the {\em bipartisan set} ($\bp$). 

An important computational problem in social choice is the {\em possible (necessary) winner problem}~\citep{Conitzer02:Elicitation,KoLa05a,LPR+11a,XiCo11a,ABF+15a}: given only {\em partial} information about the voters' preferences---for example, because we have yet to elicit the preferences of some of the voters---is a given alternative potentially (necessarily) one of the chosen ones?
It can thus be seen that our hardness results for incompletely specified (weak) tournament games directly imply hardness for the possible/necessary winner problems for $\es$ and $\bp$.

We conclude the paper by formulating and evaluating the efficacy of a
mixed-integer linear programming formulation for the possible equilibrium action problem in weak tournament games. 
Due to the space constraint, most (details of) proofs have been omitted and can be found in the full version of this paper.

\section{Examples}

The following is an incompletely specified two-player symmetric zero-sum game with actions $a,b,c,d$.

\newcommand{\WidestEntry}{$\set{-1,0,1}$}%
\newcommand{\SetToWidest}[1]{\makebox[\widthof{\WidestEntry}]{#1}}%

\vspace{-.3em}
\begin{center}
	\renewcommand{\arraystretch}{1.2}
	  $\begin{array}{r|cccc|}
	\multicolumn{1}{c}{}&\multicolumn{1}{c}{a}&\multicolumn{1}{c}{b}&\multicolumn{1}{c}{c}&\multicolumn{1}{c}{d}\\
	    \cline{2-5}
		a & 0 & \SetToWidest{1} & 0 &  \set{-1,0,1} \\
	    b & -1 & 0 & 1 & 0  \\
	    c &  0 & -1 & 0 & 1 	\\
		d & \set{-1,0,1} & 0 & -1 & 0 \\
		\cline{2-5}
	  \end{array}$
\end{center}
\vspace{.25em}
Here, each entry specifies the payoff to the row player (since the game is
zero-sum, the column player's payoff is implicit), and the set notation
indicates that the payoff in an entry is not yet fully specified. E.g., $\{-1,0,1\}$ indicates that the designer may choose either $-1$, $0$, or
$1$ for this entry.
In the case of symmetric games, we require that the designer keep the game
symmetric, so that if she sets\footnote{Let $u_r(x,y)$ denote the payoff to the row player in row $x$ and column $y$.} $u_r(d,a)=1$ then she must also
set $u_r(a,d)=-1$. Thus, our example game has three possible completions.
The goal for the designer, then, is to choose a completion in such a way
that the equilibrium of the resulting game is desirable to her. For
example, the designer may aim to have only actions~$a$ and~$c$ played with positive
probability in equilibrium. Can she set the payoffs so that this happens?
The answer is yes, because the completion with $u_r(a,d)=1$ has this property.
Indeed, for any $p \ge \frac{1}{2}$, the mixed strategy $pa+(1-p)c$ is an equilibrium strategy for this completion (and no other equilibrium strategies exist).
On the other hand, the completion with $u_r(a,d)=-1$ does have Nash equilibria in which $b$ and $d$ are played with positive probability (for example, both players mixing uniformly is a Nash equilibrium of this game).

Next, consider the following incompletely specified asymmetric zero-sum game:
\vspace{-.5em}
\begin{center}
	\renewcommand{\arraystretch}{1.2}
	\renewcommand{\WidestEntry}{$\set{-1,1}$}%
	  $\begin{array}{r|cc|}
	\multicolumn{1}{c}{}&\multicolumn{1}{c}{\ell}&\multicolumn{1}{c}{r}\\
	    \cline{2-3}
  	  t & -2 & \SetToWidest{1} \\
  	  b & \set{-1,1} & 0\\
		\cline{2-3}
	  \end{array}$
\end{center}
\vspace{.25em}
Suppose the designer's goal is to \emph{avoid} row $t$ being played in
equilibrium.  One might think that the best way to achieve this is to make
row $b$ (the only other row) look as good as possible, and thus set
$u_r(b,\ell)=1$. This results in a fully mixed equilibrium where~$t$ is
played with probability~$\frac{1}{4}$ (and~$\ell$ with~$\frac{1}{4}$).  On
the other hand, setting $u_r(b,\ell)=-1$ results in $\ell$ being a strictly
dominant strategy for the column player, and thus the row player would actually play
$b$ with probability $1$.

\section{Preliminaries}
\label{sec:prelims}

In this section, we formally introduce the concepts 
and computational problems studied in the paper.
For a natural number $n$, let $[n]$ denote the set $\set{1,\ldots,n}$.

\subsection{Games}

A matrix $M \in \mathbb{Q}^{m \times n}$ defines a two-player zero-sum game
(or \emph{matrix game}) as follows. Let the rows of $M$ be indexed by
$I=[m]$ and the columns of $M$ be indexed by $J=[n]$, so that $M =
(m(i,j))_{i \in I, j \in J}$. Player $1$, the \emph{row player}, has action
set $I$ and player $2$, the \emph{column player}, has action set~$J$. If
the row player plays action $i \in I$ and the column player plays action $j
\in J$, the payoff to the row player is given by $m(i,j)$ and the payoff to
the column player is given by $-m(i,j)$. A \emph{(mixed) strategy} of the
row (resp., column) player is a probability distribution over $I$ (resp.,
$J$).
Payoffs are
extended to mixed strategy profiles in the usual way.

A matrix game $M=(m(i,j))_{i \in I, j \in J}$ is \emph{symmetric} if $I=J$
and $m(i,j)= -m(j,i)$ for all $(i,j) \in I \times J$. 
A \emph{weak tournament game} is a symmetric matrix game in which all
payoffs are from the set $\set{-1,0,1}$.  Weak tournament games naturally
correspond to directed graphs $W=(A,\succ)$ as follows: vertices
correspond to actions and there is a directed edge from
action $a$ to action $b$ (denoted $a \succ b$) if and only if the payoff to the row player in
action profile $(a,b)$ is~$1$. 
A \emph{tournament game} is a weak tournament game with the additional property that the payoff is $0$ \emph{only if} both players choose the same action. The corresponding graph thus has a directed edge for every pair of (distinct) vertices.

\subsection{Incomplete Games}

An \emph{incompletely specified matrix game} (short: \emph{incomplete
  matrix game}) is given by a matrix $M \in (2^\mathbb{Q})^{m \times
  n}$. That is, every entry of the matrix $M = (m(i,j))_{i \in I, j \in J}$
is a subset $m(i,j) \subseteq \mathbb{Q}$. If $m(i,j)$ consists of a single
element, 
we say that the payoff for action profile $(i,j)$ is \emph{specified}, and
write $m(i,j)=m$ instead of the more cumbersome $m(i,j)=\set{m}$. 
For an incomplete matrix game, the set of \emph{completions} is given by the 
set of all matrix games that arise from selecting a number from the corresponding set for every action profile for which the payoff is unspecified.

An \emph{incomplete symmetric game} is an incomplete matrix game with $m(j,i) =
\set{-m: m \in m(i,j)}$ for all $i \in I$ and $j \in J$. The set of
\emph{symmetric completions} of an incomplete symmetric game is given
by the set of all completions that are symmetric.  When considering
incomplete symmetric games, we will restrict attention to symmetric
completions, which is the reason hardness results do not transfer from the
symmetric case to the general case.
An \emph{incomplete weak tournament game} is an incomplete symmetric game for which 
(1) every unspecified payoff has the form $m(i,j)=\set{-1,0,1}$ with $i \neq j$, and 
(2) every symmetric completion is a weak tournament game. 
An \emph{incomplete tournament game} is an incomplete symmetric game for which 
(1) every unspecified payoff has the form $m(i,j)=\set{-1,1}$ with $i \neq j$, and 
(2) every symmetric completion is a tournament game. 
Every incomplete (weak) tournament game corresponds to a directed graph in which the relation for certain pairs $(i,j)$ of distinct vertices is unspecified. Whereas every completion of an incomplete tournament game satisfies either $m(i,j)=1$ or $m(i,j)=-1$ for any such pair, a completion of an incomplete \emph{weak} tournament game also allows ``ties,'' \ie $m(i,j)=0$.

\subsection{Equilibrium Concepts}

The standard solution concept for normal-form games is Nash equilibrium. A strategy profile $(\sigma,\tau)$ is a Nash equilibrium of a matrix game $M$ if the strategies $\sigma$ and $\tau$ are best responses to each other, \ie $m(\sigma,j) \ge m(\sigma,\tau) \ge m(i,\tau)$ for all $i \in I$ and $j \in J$. The payoff to the row player is identical in all Nash equilibria, and is known as the \emph{value} of the game. 

We are interested in the question whether an action is played with positive
probability in at least one Nash equilibrium. 
For improved readability, the following definitions are only formulated for the row player; definitions for the column player are analogous.
The support $\supp(\sigma)$ of a strategy $\sigma$ is the set of actions that are played with positive probability in $\sigma$.
An action $i \in I$ is called \emph{essential} 
if there exists a Nash equilibrium $(\sigma,t)$ with $i \in \supp(\sigma)$. By $\es_\text{row}(M)$ we denote the set of all actions $i \in I$ that are essential.

\begin{definition}
	The \emph{essential set} $\es(M)$ of a matrix game~$M$ contains all actions that are essential, \ie $\es(M) = \es_\text{row}(M) \cup \es_\text{column}(M)$.
\end{definition}  

There is a useful connection between the essential set and \emph{quasi-strict} (Nash) equilibria. Quasi-strictness is a refinement of Nash equilibrium that requires that every best response is played with positive probability \citep{Hars73a}. Formally, a Nash equilibrium $(\sigma,t)$ of a matrix game $M$ is \emph{quasi-strict} if $m(\sigma,j) > m(\sigma,\tau) > m(i,\tau)$ for all $i \in I \setminus \supp(\sigma)$ and $j \in J \setminus \supp(\tau)$. 
Since the set of Nash equilibria of a matrix game $M$ is convex, there
always exists a Nash equilibrium $(\sigma,\tau)$ with $\supp(\sigma) \cup
\supp(\tau) = \es(M)$. Moreover, it has been shown that all quasi-strict
equilibria of a matrix game have the same support \citep{BrFi08a}.  
Thus, an action is contained in the essential set of a matrix game if and only if it is played with positive probability in some quasi-strict Nash equilibrium.
\citet{BrFi08a} have shown that quasi-strict equilibria, and thus the essential set, can be computed in polynomial time.

\subsection{Computational Problems}
\label{sec:comp}

\newcommand{\problem}{Equilibrium Containment Problem\xspace}
\newcommand{\prob}{ECP\xspace}

We are interested in the computational complexity of the following decision problems.
\begin{itemize}
	\item \textbf{Possible Equilibrium Action:}
Given an incomplete matrix game $M$ and an action $a$, is there a completion $M'$ of $M$ such that $a \in \es(M')$?
	\item \textbf{Necessary Equilibrium Action:}
Given an incomplete matrix game $M$ and an action $a$, is it the case that $a \in \es(M')$ for all completions $M'$ of $M$?
\end{itemize}
One may wonder why these are the right problems to solve.
Most generally, the designer could have a utility for each possible outcome (\ie action profile) of the game.  The next proposition shows that hardness of the possible
equilibrium action problem immediately implies hardness of the problem of maximizing
the designer's utility.
\begin{proposition}
  Suppose
the possible equilibrium action problem is NP-hard.
Then, if the designer's payoffs are nonnegative, no
  positive approximation guarantee for the designer's utility (in the
  optimistic model where the best equilibrium for the designer is chosen) can be given
  in polynomial time unless P=NP.
\end{proposition}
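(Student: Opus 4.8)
The plan is to establish the statement by a gap-preserving reduction directly from the possible equilibrium action problem, which we are allowed to assume is NP-hard. Given an instance consisting of an incomplete matrix game $M$ and a distinguished action $a$, I would build a designer-utility instance on the very same incomplete game $M$, and define the designer's utility over action profiles by $u_d(i,j)=1$ if $i=a$ or $j=a$, and $u_d(i,j)=0$ otherwise. These payoffs are nonnegative, as the hypothesis requires. The point of this choice is that for any mixed profile $(\sigma,\tau)$ the designer's expected utility is strictly positive precisely when $a$ is played with positive probability by at least one of the two players, \ie when $a$ lies in $\supp(\sigma) \cup \supp(\tau)$.

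First I would lift this into a statement about completions. Fix a completion $M'$. In the optimistic model the designer receives the maximum, over all Nash equilibria $(\sigma,\tau)$ of $M'$, of her expected utility. By the observation above, this quantity is positive if and only if some equilibrium of $M'$ plays $a$ with positive probability on some side, which is exactly the condition $a \in \es(M')$. Taking the further maximum over all completions, the designer's optimal utility $\mathrm{OPT}$ therefore satisfies $\mathrm{OPT}>0$ whenever there exists a completion $M'$ with $a \in \es(M')$ (a YES instance), and $\mathrm{OPT}=0$ otherwise (a NO instance, where $a$ is non-essential in every completion, forcing $\sigma(a)=\tau(a)=0$ in every equilibrium of every completion).

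Next I would exploit this all-or-nothing gap. Suppose for contradiction that some polynomial-time algorithm guarantees a positive approximation factor $\alpha \in (0,1]$ for the designer's utility, \ie it always returns a completion whose optimistic value is at least $\alpha \cdot \mathrm{OPT}$. Running it on the constructed instance, on a YES instance it returns a value $\ge \alpha \cdot \mathrm{OPT} > 0$, whereas on a NO instance it necessarily returns value $0$, since there every completion has optimistic value $0$. Hence testing whether the returned value is positive decides the possible equilibrium action problem in polynomial time, contradicting its NP-hardness unless P=NP.

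The construction and the two equivalences are routine; the single place that genuinely uses the hypotheses is the clean zero-versus-positive gap. Nonnegativity of the designer's payoffs is exactly what guarantees $\mathrm{OPT}=0$ on NO instances (rather than merely small or negative), and it is this precise zero that makes a multiplicative guarantee collapse onto the decision problem: no factor $\alpha>0$ can lift $0$ above $0$, yet any $\alpha>0$ must report a positive value whenever $\mathrm{OPT}>0$. I expect the main subtlety to be stating the optimistic model carefully---maximizing over equilibria inside the maximization over completions---so that ``the designer's utility'' is well defined and the reduction's value coincides with $\mathrm{OPT}$ exactly.
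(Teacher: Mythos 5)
Your proposal is correct and follows essentially the same route as the paper: assign the designer utility $1$ to every outcome in which $a$ is played and $0$ elsewhere, observe that the optimistic optimum is positive exactly on YES instances and zero on NO instances, and note that any positive multiplicative guarantee cannot bridge this zero-versus-positive gap. Your write-up is simply a more explicit version of the paper's three-sentence argument.
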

\begin{proof}
  Suppose, for the sake of contradiction, that there is a polynomial time
  algorithm with a positive approximation guarantee for the problem of
  maximizing the designer's optimistic utility.  Then we can use this
  algorithm for determining whether there is a completion where a strategy
  receives positive probability in some equilibrium: simply give the
  designer utility $1$ for all outcomes in which that strategy is played,
  and $0$ everywhere else.  The designer can get the strategy to be played
  with positive probability if and only if she can obtain positive utility
  from this game, and she can obtain positive utility from this game if and
  only if the approximation algorithm returns a positive utility.
\end{proof}
A similar connection can be given between the necessary equilibrium action
problem and the case where designer utilities are
nonpositive and a pessimistic model is used (assigning a payoff of $-1$
to the action in question and $0$ otherwise).

In the context of weak tournament games, the essential set ($\es$) is often interpreted as a (social) choice function identifying desirable alternatives \citep{DuLa99a}. In the special case of tournament games, the essential set is referred to as the \emph{bipartisan set} ($\bp$) \citep{LLL93b}. 
The possible and necessary equilibrium action problems defined above thus correspond to possible and necessary \emph{winner} queries for the social choice functions $\es$ (for weak tournament games) and $\bp$ (for tournament games).
The computational complexity of possible and necessary winners has been studied for many common social choice functions (\eg \citealp{XiCo11a}; \citealp{ABF+15a}). To the best of our knowledge, we are the first to provide complexity results for $\es$ and $\bp$.

\section{Zero-Sum Games}
\label{sec:zerosum}

\newcommand{\cyc}{alternating\xspace}
\newcommand{\tr}{\mathit{tr}}

In this section, we show that computing possible and necessary equilibrium actions is intractable for (not-necessarily-symmetric) matrix games. In the proofs, we will make use of a class of games that we call \emph{\cyc games}. Intuitively, an \cyc
game is a generalized version of Rock-Paper-Scissors that additionally
allows ``tiebreaking payoffs'' which are small payoffs in cases where both
players play the same action. A formal definition and proofs of some required properties are given in the full version of this paper.

We first consider the \emph{necessary} equilibrium action problem.
For ease of readability, we only give an informal proof sketch here. Much of the work in the complete proof (to be found in the appendix) is to correctly set 
values for constants so that the desired equilibrium properties hold.

\begin{theorem} \label{thm:matrix-nec} 
	The necessary equilibrium action problem (in matrix games that are not
        necessarily symmetric) is {coNP-complete}.
\end{theorem}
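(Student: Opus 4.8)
The plan is to prove both membership in coNP and coNP-hardness. For membership, I would work with the complementary problem: does there exist a completion $M'$ of $M$ with $a \notin \es(M')$? Since every unspecified entry is selected from an explicitly given finite subset of $\mathbb{Q}$, any completion has a description of size polynomial in the input and thus serves as a polynomial certificate. Given such a certificate $M'$, the essential set $\es(M')$ can be computed in polynomial time by the algorithm of \citet{BrFi08a}, so one can verify in polynomial time that $a \notin \es(M')$. Hence the complement lies in NP, which places the necessary equilibrium action problem in coNP.

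For hardness, I would reduce an NP-complete problem---say 3-SAT---to the \emph{complement} of the necessary equilibrium action problem, \ie to the question of whether \emph{some} completion makes $a$ non-essential. The construction would be built around an \cyc game, exploiting the fact that in a symmetric Rock-Paper-Scissors-like core the unique equilibrium is fully mixed, so that every action---including the designated action~$a$---is forced into the support. The unspecified payoff entries would encode a truth assignment: each variable corresponds to a choice between two admissible payoff values, and each clause is represented by a gadget whose payoffs are arranged so that probability mass can be drained away from the core (ejecting $a$ from every equilibrium support) precisely when the encoded assignment satisfies that clause.

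Concretely, I would calibrate the game so that a satisfying assignment induces a completion in which the equilibrium redistributes probability away from~$a$, rendering $a$ non-essential, whereas for any unsatisfying assignment the \cyc structure keeps $a$ in the support of every Nash equilibrium. Then the 3-SAT instance is satisfiable if and only if there is a completion with $a \notin \es(M')$, which is exactly the ``no'' answer to the necessary equilibrium action problem; equivalently, the instance is \emph{unsatisfiable} if and only if $a$ is essential in all completions. Since 3-SAT (and hence its co-problem) is complete, this together with the membership argument yields coNP-completeness.

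The main obstacle---as anticipated by the remark that most of the work lies in ``correctly setting values for constants''---is the precise calibration of the payoff magnitudes, in particular the small tiebreaking payoffs of the \cyc game. These constants must be chosen so that (i)~the cyclic core genuinely forces all actions into the support in the absence of a satisfying assignment, and (ii)~the best-response inequalities $m(\sigma,j) \ge m(\sigma,\tau) \ge m(i,\tau)$ hold with the intended supports and admit no spurious equilibrium that either retains~$a$ under a satisfying assignment or ejects~$a$ under an unsatisfying one. Verifying that each constructed completion has exactly the intended essential set, and ruling out these unintended equilibria, is the delicate heart of the argument.
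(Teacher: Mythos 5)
Your coNP membership argument is correct and matches the paper's: a completion in which the action is not essential is a polynomial-size certificate for the complement, checkable via the polynomial-time algorithm for the essential set. The hardness half, however, has a genuine gap: it is a statement of intent rather than a reduction. You never specify the payoff matrix, the variable and clause gadgets, the location of the designated action $a$ relative to the alternating core, or any of the constants, and you explicitly defer ``the delicate heart of the argument''---which, for this theorem, \emph{is} the argument. The paper's proof (which reduces from \textsc{SetCover} rather than 3-SAT, though that difference is inessential) spends essentially all of its effort on exactly the part you omit: choosing $H$, $x$, $y$, $v$, $G$, and $\epsilon^*$ so that the intended equilibria exist and no unintended ones do, and then carrying out a case analysis over the ways a completion can fail to encode a cover.

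There is also a conceptual tension in your sketch that suggests the mechanism you have in mind would not work as described. You first say the alternating core's unique equilibrium is fully mixed, ``so that every action---including the designated action~$a$---is forced into the support,'' and then claim that a satisfying assignment ``drains probability away from the core,'' ejecting $a$. If $a$ sits inside the fully-mixed alternating core, the same full-support property that forces $a$ into the support under unsatisfying assignments forces it in under satisfying ones too, so $a$ could never be ejected. The working construction places the designated action \emph{outside} the core: it gets a small negative payoff $-v$ against core columns (so it is not a best response when the equilibrium stays inside the core, i.e., when the completion encodes a cover) and a huge payoff $G$ against the auxiliary columns (so it is dragged into the support exactly when a bad completion lets the column player profitably leave the core). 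Your sketch has the trigger direction muddled as well---in the correct mechanism it is the \emph{unsatisfied} clause (uncovered element) that pushes the opponent out of the core and thereby pulls $a$ in. Without a concrete gadget realizing this and the accompanying calibration of constants, the reduction is not established.
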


\begin{proof}[Proof sketch]

	For NP-hardness, we give a reduction from \textsc{SetCover}. An instance of \textsc{SetCover} is given by a collection $\{ S_1, \ldots, S_n \}$ of subsets of a universe $U$, and an integer~$k$; the question is whether we can cover $U$ using only $k$ of the subsets. 
We may assume that $k$ is odd (it is always possible to add a singleton
subset with an element not covered by anything else and increase $k$ by $1$). Define an incomplete matrix game $M$ where the row player has $3n-k+1$ actions, and the column player has $2n-k+|U|$ actions. The row player's actions are given by $\{ S_{i,j} : i \in [n], j \in [2] \} \cup \{ x_i : i \in [n-k]\} \cup \{ r_* \}$.

		\begin{figure} 
		\renewcommand{\arraystretch}{1.8}
		\begin{center}
\footnotesize
\setlength{\tabcolsep}{0.19em}
\scalebox{1.2}{
\begin{tabular}{r|c|c|c|c|c||c|c|c|c|c|}
		
		    		  \multicolumn{1}{c}{} & \multicolumn{1}{c}{$c_1$} & \multicolumn{1}{c}{$c_2$} & \multicolumn{1}{c}{$c_3$} & \multicolumn{1}{c}{$c_4$} & \multicolumn{1}{c}{$c_5$} & \multicolumn{1}{c}{$s_1$} & \multicolumn{1}{c}{$s_2$} & \multicolumn{1}{c}{$s_3$} & \multicolumn{1}{c}{$s_4$} & \multicolumn{1}{c}{$s_5$}  \\ \cline{2-11}
		    	    $S_{1,1}$ & $0$ & $\phantom{\scalebox{0.75}[1.0]{\( - \)}}H$ & $\scalebox{0.75}[1.0]{\( - \)}H$ & $\phantom{\scalebox{0.75}[1.0]{\( - \)}}H$ & \phantom{h} $\scalebox{0.75}[1.0]{\( - \)}H$ \phantom{h} & $y$ & $y$ & $0$ & $0$ & $0$  \\ \cline{2-11}
		    	    $S_{1,2}$ & $\{\scalebox{0.75}[1.0]{\( - \)}1, 1 \}$ & $\phantom{\scalebox{0.75}[1.0]{\( - \)}}H$ & $\scalebox{0.75}[1.0]{\( - \)}H$ & $\phantom{\scalebox{0.75}[1.0]{\( - \)}}H$ & $\scalebox{0.75}[1.0]{\( - \)}H$ & $x$ & $x$ & $0$ & $0$ & $0$  \\  \cline{2-11}
		    	    $S_{2,1}$ & $\scalebox{0.75}[1.0]{\( - \)}H$ & $0$ & $\phantom{\scalebox{0.75}[1.0]{\( - \)}}H$ & $\scalebox{0.75}[1.0]{\( - \)}H$ & $\phantom{\scalebox{0.75}[1.0]{\( - \)}}H$ & $0$ & $0$ & $y$ & $y$ & $0$ \\          \cline{2-11}
		    	    $S_{2,2}$ & $\scalebox{0.75}[1.0]{\( - \)}H$ & $\{\scalebox{0.75}[1.0]{\( - \)}1, 1 \}$ & $\phantom{\scalebox{0.75}[1.0]{\( - \)}}H$ & $\scalebox{0.75}[1.0]{\( - \)}H$ & $\phantom{\scalebox{0.75}[1.0]{\( - \)}}H$ & $0$ & $0$ & $x$ & $x$ & $0$\\ \cline{2-11}
		    	    $S_{3,1}$ & $\phantom{\scalebox{0.75}[1.0]{\( - \)}}H$ & $\scalebox{0.75}[1.0]{\( - \)}H$ & $0$ & $\phantom{\scalebox{0.75}[1.0]{\( - \)}}H$ & $\scalebox{0.75}[1.0]{\( - \)}H$ & $0$ & $0$ & $y$ & $0$ & $y$\\          \cline{2-11}
		    	    $S_{3,2}$ & $\phantom{\scalebox{0.75}[1.0]{\( - \)}}H$ & $\scalebox{0.75}[1.0]{\( - \)}H$ & $\{\scalebox{0.75}[1.0]{\( - \)}1, 1 \}$ & $\phantom{\scalebox{0.75}[1.0]{\( - \)}}H$ & $\scalebox{0.75}[1.0]{\( - \)}H$ & $0$ & $0$ & $x$ & $0$ & $x$ \\ \cline{2-11}
		    	    $S_{4,1}$ & ${\scalebox{0.75}[1.0]{\( - \)}}H$ & $\phantom{\scalebox{0.75}[1.0]{\( - \)}}H$ & $\scalebox{0.75}[1.0]{\( - \)}H$ & $0$ & $\phantom{\scalebox{0.75}[1.0]{\( - \)}}H$  & $0$ & $y$ & $0$ & $y$ & 0 \\          \cline{2-11}
		    	    $S_{4,2}$ & ${\scalebox{0.75}[1.0]{\( - \)}}H$ & $\phantom{\scalebox{0.75}[1.0]{\( - \)}}H$ & $\scalebox{0.75}[1.0]{\( - \)}H$ & $\{\scalebox{0.75}[1.0]{\( - \)}1, 1 \}$ & $\phantom{\scalebox{0.75}[1.0]{\( - \)}}H$ &  $0$ & $x$ & $0$ & $x$ & $0$\\ \cline{2-11}
		    		$x_1$     & $\phantom{\scalebox{0.75}[1.0]{\( - \)}}H$ & $\scalebox{0.75}[1.0]{\( - \)}H$ & $\phantom{\scalebox{0.75}[1.0]{\( - \)}}H$ & $\phantom{\scalebox{0.75}[1.0]{\( - \)}}H$ & $\scalebox{0.75}[1.0]{\( - \)}1$ & $0$ & $0$ & $0$ & $0$ & $0$  \\ \hhline{~|=|=|=|=|=#-|-|-|-|-|}
\multicolumn{1}{c|}{$r_*$} & \multicolumn{1}{c|}{$-v$} & \multicolumn{1}{c|}{$-v$} & \multicolumn{1}{c|}{$-v$} & \multicolumn{1}{c|}{$-v$} & \multicolumn{1}{c|}{$-v$} & \multicolumn{1}{c|}{G} & \multicolumn{1}{c|}{G} & \multicolumn{1}{c|}{G} & \multicolumn{1}{c|}{G} & \multicolumn{1}{c|}{G} 

\\       \cline{2-11}
		  \end{tabular}
		  }
		\caption{The incomplete matrix game $M$ used in the proof of \thmref{thm:matrix-nec} for the \textsc{SetCover} instance given by {$|U|=5$}, $n=4$, $k=3$, $S_1=\{s_1, s_2 \}$, $S_2 = \{s_3, s_4 \}$, $S_3=\{s_3, s_5 \}$, and $S_4 = \{ s_2, s_4 \}$. $L$ lies in the top left, indicated by double lines.
		}
		\end{center}
		\end{figure}

	            Let $L$ denote the restriction of the game to the first
                $2n-k$ columns and $3n-k$ rows. We denote the column player's actions in
                this part of the game by $c_1, \ldots, c_{2n-k}$. We
                set $m(S_{i,1},c_i)=0$ and $m(S_{i,2},c_i)=\{ -1, 1 \}$ for all $i \in [n]$ and
                $m(x_{i}, c_{n+i})=-1$ for all $i \in [n-k]$. We fill in the remaining entries with
                $H$ and $-H$, where $H$ is a large positive number, so that if we consider only one of each pair of rows $\{ S_{i,1}$, $S_{i,2} \}$, $L$ acts as an \cyc game. Setting $m(S_{i,2},c_i)=-1$ will correspond to choosing $S_i$ for the set cover, and setting $m(S_{i,2},c_i)=1$ will correspond to not choosing $S_i$. Note that, considering only $L$, the row player will put positive probability on exactly one of $S_{i,1}$ and $S_{i,2}$ (as well as all rows $x_i$) and, as long as $H$ is sufficiently large, each row that is played with positive probability receives approximately $\frac{1}{N}$ probability. $S_{i,1}$ is played if $S_i$ is chosen for the set cover, $S_{i,2}$ is played otherwise. Also note that the value of $L$ is close to zero, depending on the exact setting of the undetermined entries.
	
We have additional columns $s_1, \ldots, s_{|U|}$ corresponding to elements of $U$. For every set $S_j$ containing $s_i$, column $s_i$ has a positive entry $y$ in row $S_{j,1}$ and a negative entry $x$ in row $S_{j,2}$. 
If $s_i$ is not covered by any chosen set, then the equilibrium can not be contained in $L$: If it were, then the column player could best respond by playing $s_i$, where all entries (on rows played with positive probability by the row player) are either $0$ or $x<0$. However, if $s_i$ is covered by some set, then we can make $y$ large enough (relative to~$x$) that the column player will not play $s_i$. Thus, if every $s_i$ is covered, the column player plays only columns from $L$.

Finally, we have a single extra row labeled $r_*$. This row has a small negative payoff $-v$ for all columns in $L$, and a very large positive payoff $G$ for all columns not in $L$. As long as the equilibrium is contained in $L$ (\ie all elements are covered), it is not a best response for the row player to play~$r_*$. However, if the column player puts positive probability on some $s_i$ (that is, $s_i$ is uncovered), then $G$ is large enough that the row player can best respond by playing $r_*$ with positive probability.
\end{proof}

By modifying the construction in the proof of \thmref{thm:matrix-nec}, we also get a hardness result for the problem of deciding whether an action is a \emph{possible} equilibrium action.

\begin{theorem} \label{thm:matrix-pos}
	The possible equilibrium action problem (in matrix games that are not
        necessarily symmetric) is NP-complete.
\end{theorem}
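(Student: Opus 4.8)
The plan is to reduce from \textsc{SetCover} once more, reusing almost the entire gadget from the proof of \thmref{thm:matrix-nec} and adding a single ``target'' row whose essentiality tracks coverage in the direction \emph{opposite} to that of $r_*$. Membership in NP is routine: every unspecified entry ranges over a finite set, so a completion $M'$ is a polynomial-size certificate, and one can verify $a \in \es(M')$ in polynomial time using the algorithm of \citet{BrFi08a} for computing essential sets. So the content is the NP-hardness reduction.

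For hardness, recall the behavior of the game $M$ built in \thmref{thm:matrix-nec}: a completion encodes a guess at which of the sets to include in the cover (via the entries $m(S_{i,2},c_i) \in \set{-1,1}$), the \cyc structure of $L$ forces the row player's equilibrium to put weight on exactly one of $S_{i,1},S_{i,2}$ for each $i$ together with the rows $x_i$, and the resulting equilibrium is confined to $L$ \emph{if and only if} the included sets cover $U$ within the budget governed by $k$. Thus ``equilibrium confined to $L$'' is equivalent to the \textsc{SetCover} instance being a YES-instance. In the necessary version the escape row $r_*$ was essential precisely in the \emph{uncovered} case; for the possible version I want instead an action that is essential precisely in the \emph{covered} case.

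To achieve this I would add one new row $a$ to the row player, defined as a ``shadow'' of some always-played row $w$ of $L$ (say an $x_i$-row): set $a(c_j)=w(c_j)$ on every $L$-column $c_j$, and set $a(s_i)=-G'$ for a large positive constant $G'$ on every element column $s_i$ (where $w(s_i)=0$). Correctness then splits along coverage. If the instance is a YES-instance, take the completion realizing a covering family; its equilibrium is confined to $L$, so the column player uses only the $c_j$, on which $a$ agrees with $w$; hence $a$ earns exactly the value of the game and is a best response. Since $\es(M')$ equals the support of a quasi-strict equilibrium and quasi-strictness plays \emph{every} best response with positive probability, $a \in \es(M')$. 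Conversely, if the instance is a NO-instance, then in every completion some element is uncovered, so the column player plays some $s_i$ with positive probability; as $a$ agrees with $w$ off the $s_i$-columns but is strictly smaller there ($-G' < 0 = w(s_i)$), $a$ earns strictly less than $w$ and hence strictly less than the value, so $a$ is never a best response and $a \notin \es(M')$.

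The main obstacle, exactly as in \thmref{thm:matrix-nec}, is calibrating the constants so that adding $a$ does not disturb the intended equilibria, and here the requirement is two-directional. On the one hand, $-G'$ must be negative enough that in the uncovered case the positive column weight on $s_i$ pushes $a$ strictly below the value; on the other hand, because quasi-strictness forces positive weight onto $a$ in the covered case, this same negative entry lowers the row player's payoff on the (unplayed) column $s_i$ and thereby \emph{tempts} the column player to deviate there. One must therefore choose the coverage-buffer $y$ large enough---relative to $G'$ and to the total equilibrium weight that can shift from $w$ onto $a$---that the column player still strictly prefers the $L$-columns whenever $U$ is covered. Checking that a consistent choice of $H$, $v$, $G$, $x$, $y$, $G'$ exists and that the value of the \cyc game $L$ is unaffected is the bulk of the work; the combinatorial correspondence with \textsc{SetCover} is then exactly as above.
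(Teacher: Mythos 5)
Your overall strategy---reuse the \textsc{SetCover} gadget from \thmref{thm:matrix-nec} and add one new row that matches the equilibrium value on the $L$-columns but is heavily penalized on the element columns, so that its essentiality flips with coverage---is essentially the route the paper takes (there the new row $r'$ has payoff $0$ on the $L$-columns and $-G$ on the element columns, and its essentiality in the covered case is established by exactly the perturbation-plus-quasi-strictness argument you sketch). However, your ``no''-direction contains a genuine gap: the claim that \emph{in every completion of a NO-instance some element is uncovered} is false. A completion may set \emph{more} than $k$ of the unspecified entries to the ``chosen'' value and thereby cover all of $U$ while violating the budget; this is possible even when no cover of size $k$ exists. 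In that case the column player need not place any probability on the element columns (the row player defects to $r_*$ and the column player stays inside $L$), so your stated reason for $a \notin \es(M')$---that $a$ earns strictly less than $w$ because some $s_i$ is played---evaporates. The conclusion can still be rescued, but only via the balance argument you never invoke: over-selecting unbalances the \cyc game $L$, its value drops below $-\frac{1}{n^3} < -v$ (\lemref{balanced}\ref{lem:ii}), the equilibrium moves to $r_*$, and no $L$-row (hence not your copy $a$) remains a best response to the column player's optimal strategy. This is precisely the paper's Case~2, and it is also why the paper restructures the gadget ($k$ dummy rows instead of $n-k$, ``choosing a set'' now meaning $+1$) so that the three-way case split (exactly $k$ / more than $k$ / fewer than $k$ selections) comes out right; your proposal needs this case analysis and currently omits it. The symmetric under-budget case (fewer than $k$ selections that nonetheless cover, which forces the instance to be a YES-instance) also needs a sentence.

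A second, more minor looseness: to conclude $a \in \es(M')$ in the covered case it is not enough that $a$ is a best response to \emph{one} equilibrium column strategy confined to $L$; you need that \emph{every} optimal column strategy is confined to $L$ (equivalently, you must exhibit an actual Nash equilibrium with positive weight on $a$, as the paper does by shifting a sufficiently small $\epsilon$ of probability onto the new row while keeping every element column strictly unattractive). Your remark about calibrating $y$ against $G'$ gestures at the right issue, but the clean fix is to make $\epsilon$ small relative to the fixed slack on the $s_i$-columns rather than to make $y$ large relative to $G'$.
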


\section{Weak Tournament Games}
\label{sec:weak}

We now turn to weak tournament games and analyze the computational complexity of possible and necessary $\es$ winners.

\begin{theorem} \label{thm:weak-pos}
	The possible $\es$ winner problem (in weak tournament games) is NP-complete. 
\end{theorem}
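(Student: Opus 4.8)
The plan is to establish membership in NP and then NP-hardness, yielding NP-completeness.

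For membership, note that an incomplete weak tournament game has only finitely many symmetric completions: each unspecified pair $(i,j)$ satisfies $m(i,j)=\set{-1,0,1}$, and fixing $m(i,j)$ simultaneously fixes $m(j,i)=-m(i,j)$, so a completion is described in polynomial space. By the result of \citet{BrFi08a} quoted above, the essential set of any (weak tournament) game is computable in polynomial time. Hence the natural certificate---a completion $M'$---can be checked in polynomial time: compute $\es(M')$ and test whether it contains the target action $a$. Thus the possible $\es$ winner problem lies in NP.

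For hardness I would reduce from a suitable NP-complete problem, most naturally \textsc{Sat} (or, in the spirit of \thmref{thm:matrix-nec}, \textsc{SetCover}). The guiding principle is that the unspecified edges of the incomplete weak tournament game encode the combinatorial choice (a truth assignment, resp.\ a choice of cover), so that completions correspond to candidate solutions. For each variable $x_i$ I would introduce a small vertex gadget whose unspecified edge can be oriented two ways, encoding $x_i=\text{true}$ or $x_i=\text{false}$; for each clause I would introduce a clause vertex that is beaten exactly by the literal vertices satisfying it. Finally I would attach the target $a$ to a fixed cyclic ``control gadget'' together with the clause vertices, arranged so that $a$ can receive positive probability in an optimal (maximin) strategy of the symmetric zero-sum game precisely when every clause vertex is beaten by some chosen literal, i.e.\ when the encoded assignment is satisfying.

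Concretely, recall that in a symmetric weak tournament game the value is $0$ and a strategy is optimal iff it guarantees payoff $\ge 0$ against every pure action; since $\es$ equals the support of the quasi-strict equilibrium, it suffices to reason about the supports of optimal strategies. The ``yes'' direction then amounts to constructing, from a satisfying assignment, an explicit optimal strategy (e.g.\ a balanced distribution over a cycle through $a$) with $a\in\supp(\cdot)$. The ``no'' direction amounts to exhibiting, for every completion that leaves some clause unsatisfied, an optimal opponent strategy against which $a$ is strictly suboptimal, so that $a$ lies outside the support of \emph{every} optimal strategy and hence $a\notin\es$.

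The main obstacle is that, unlike the general matrix-game constructions of \thmref{thm:matrix-nec} and \thmref{thm:matrix-pos}, all payoffs here are confined to $\set{-1,0,1}$, so I cannot use large constants such as $H$ or $G$, or a small $v$, to force the intended equilibrium structure. Every ``enforcement'' those constants provided must instead be simulated purely by graph structure---most plausibly by replacing single vertices with clouds of many mutually tied copies, so that the equilibrium is compelled to place the correct relative probability mass on each gadget, and by inserting balancing cycles so that the value of the relevant sub-game is exactly $0$. Getting these multiplicities right, and then verifying in both directions that the support of every optimal strategy is \emph{exactly} the intended set (so that $a$ enters if and only if the instance is satisfiable), is the delicate part of the argument; the NP-membership and the high-level gadget layout are comparatively routine.
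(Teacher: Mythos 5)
Your NP-membership argument is correct and matches the paper's: guess a completion, compute $\es$ in polynomial time via the algorithm of \citet{BrFi08a}, and check containment. Your hardness plan also identifies the paper's strategy at the level of architecture---reduce from \textsc{Sat}, encode each variable by a gadget carrying one unspecified edge, add one vertex per clause that interacts with the literal gadgets, and a target vertex $d$ whose membership in $\es$ tracks satisfiability. But the proposal stops exactly where the proof begins: it never exhibits a variable gadget whose symmetric completions over $\set{-1,0,1}$ have provably different quasi-strict equilibrium supports and probabilities, and without that nothing in either direction can be verified. The paper's gadget is a directed $6$-cycle $x_i^1 \succ x_i^2 \succ \cdots \succ x_i^6 \succ x_i^1$ with the single unspecified edge between $x_i^1$ and $x_i^4$. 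Its three completions behave as follows: with a tie, the equilibrium keeps all six vertices with $p(x_i^1)=p(x_i^3)=p(x_i^5)$ and $p(x_i^2)=p(x_i^4)=p(x_i^6)$; with $x_i^1\succ x_i^4$, the quasi-strict equilibrium has support $\set{x_i^1,x_i^3,x_i^5}$ with probabilities proportional to $3,2,4$; the reverse orientation gives the mirror image on $\set{x_i^2,x_i^4,x_i^6}$. A clause vertex $c_j$ beats $x_i^3$ and loses to $x_i^5$ when $v_i$ occurs positively in $C_j$ (and symmetrically via $x_i^6,x_i^2$ for negative occurrences), and beats $d$. Against the intended equilibrium this yields $u(c_j,p)=\frac{1}{Z}(1-2t_j^\alpha)$, where $t_j^\alpha$ is the number of true literals of $C_j$, so $c_j$ is excluded from $\es$ precisely when some literal is true; and $d\in\es(W)$ iff every $c_j$ is excluded. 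This quantitative imbalance ($4$ versus $2$) inside a single small cycle is the entire enforcement mechanism, and it is the piece your sketch leaves open.

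One further point: your anticipated obstacle---that the absence of large constants $H$, $G$, $v$ must be compensated by ``clouds of many mutually tied copies'' to control probability mass---does not materialize in the paper's construction. No vertex duplication or multiplicity tuning is needed; ties are simply declared between all pairs not otherwise connected, and the only probability control required is the internal asymmetry $p(x_i^5)>p(x_i^3)$ of one $6$-cycle. So the delicate part is not simulating magnitudes but designing a single gadget whose three completions realize three prescribed equilibrium profiles, together with the check (the paper's Lemma~\ref{lem:union}) that the product of these local equilibria, extended by $p(d)=\frac{1}{Z}$ and $p(c_j)=0$, is a quasi-strict equilibrium of the whole game.
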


\begin{proof}[Proof sketch]

	For NP-hardness, we provide a reduction from \textsc{Sat}. 
	Let $\varphi = C_1 \land \ldots \land C_m$ be a Boolean formula in conjunctive normal form over a finite set $V= \{v_1, \ldots, v_n\}$ of variables. 
	We define an incomplete weak tournament\footnote{We utilize the one-to-one correspondence between weak tournament games and directed graphs without cycles of length one or two (so-called \emph{weak tournaments}). For a weak tournament $(A,\succ)$, we use the notation $a \succ b$ to denote a directed edge from~$a$~to~$b$.} $W_\varphi=(A,\succ)$ as follows.
	The set $A$ of vertices is given by $A =  \cup_{i=1}^n X_i \cup \set{c_1, \ldots, c_m} \cup \set{d}$, where $X_i = \set{x_i^1,\ldots,x_i^6}$ for all $i \in [n]$. Vertex $c_j$ corresponds to clause $C_j$ and the set $X_i$ corresponds to variable~$v_i$.
	
	Within each set $X_i$, there is a cycle 
$x_i^1 \succ x_i^2 \succ x_i^3 \succ x_i^4 \succ x_i^5 \succ x_i^6 \succ x_i^1$ and an unspecified edge between $x_i^1$ and $x_i^4$. 
	If variable $v_i$ occurs as a positive literal in clause $C_j$, we
        have edges $c_j \succ x_i^3$ and $x_i^5 \succ c_j$. If variable
        $v_i$ occurs as a negative literal in clause $C_j$, we have edges
        $c_j \succ x_i^6$ and $x_i^2 \succ c_j$. Moreover, there is an edge from $c_j$ to $d$ for every $j \in [m]$. For all pairs of vertices for which neither an edge has been defined, nor an unspecified edge declared, we have a tie.  See \figref{fig:construction} for an example.

	We make two observations about $W_\varphi$. 
	
	\begin{observation} \label{obs1}
		For each completion $W$ of $W_\varphi$, we have $d \in \es(W)$ if and only if $\es(W) \cap \set{c_1, \ldots, c_m} = \emptyset$.
	\end{observation}
	
	\begin{observation} \label{obs2}
		For each $i$, there is exactly one unspecified edge within (and thus exactly three possible completions of) the subtournament $W_\varphi|_{X_i}$. If the we set a tie between $x_i^1$ and $x_i^4$, then all Nash equilibria $p$ of the subtournament $W_\varphi|_{X_i}$ satisfy $p(x_i^1)=p(x_i^3)=p(x_i^5)$ and $p(x_i^2)=p(x_i^4)=p(x_i^6)$. If we set $x_i^1 \succ x_i^4$, then every quasi-strict equilibrium $p$ of $W_\varphi|_{X_i}$ satisfies $p(x_i^2)=p(x_i^4)=p(x_i^6)=0$, $p(x_i^5)>p(x_i^1)>p(x_i^3)>0$, and $p(x_i^1)+p(x_i^3)>p(x_i^5)$. 
	By symmetry, setting $x_i^4 \succ x_i^1$ results in quasi-strict equilibria $p$ with $p(x_i^1)=p(x_i^3)=p(x_i^5)=0$, $p(x_i^4)>p(x_i^6)>p(x_i^2)>0$, and $p(x_i^2)+p(x_i^6)>p(x_i^4)$. 
	\end{observation}
	
	We can now show hat $\varphi$ is satisfiable if and only if there is a completion $W$ of $W_\varphi$ with $d \in \es(W)$. For the direction from left to right, let $\alpha$ be a satisfying assignment and consider the completion $W$ of $W_\varphi$ as follows: if $v_i$ is set to ``true'' under $\alpha$, add edge $x_i^1 \succ x_i^4$; otherwise, add edge $x_i^4 \succ x_i^1$.
	It can be shown that $\es(W) = \cup_{i \in [n]} \es(W|_{X_i}) \cup \set{d}$.
	
	For the direction from right to left, let $W$ be a completion of $W_\varphi$ with $d \in \es(W)$. Define the assignment $\alpha$ by setting variable $v_i$ to ``true'' if $x_i^1 \succ x_i^4$ and to ``false'' if $x_i^4 \succ x_i^1$. If there is a tie between $x_i^1$ and $x_i^4$, we set the truth value of $v_i$ arbitrarily.
	Since $d \in ES(W)$, we know by \obsref{obs1} that $c_j \notin \es(W)$ for all $j \in [m]$. It can now be shown that every $c_i$ has an incoming edge from a vertex in $\es(W)$, and that this vertex corresponds to a literal that appears in $C_i$ and that is set to ``true'' under $\alpha$.
\end{proof}

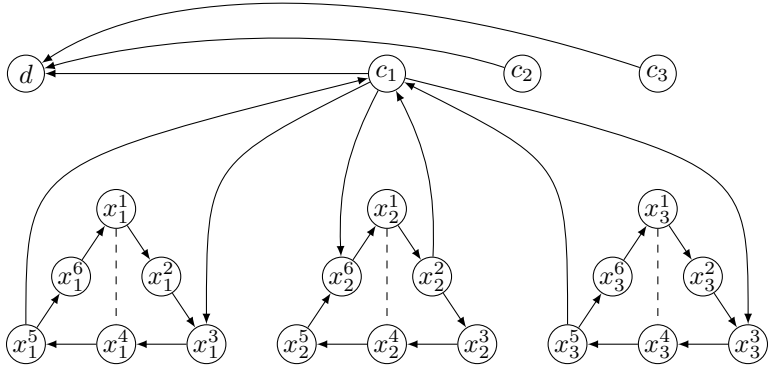
\begin{figure}[tb]
\centering
\begin{tikzpicture}[scale=1.2]
	\tikzstyle{every node}=[circle,draw,minimum size=1.4em,inner sep=0pt]

	\draw(-4,3.5) node(d){$d$};

	\draw (0,3.5) node(c1){$c_1$} ++(1.5,0) node(c2){$c_2$} ++(1.5,0) node(c3){$c_3$};

		\draw[-latex] (c1) -- (d);
		\draw[-latex] (c2) .. controls  (0,4) and (-2.25,4) .. (d);
		\draw[-latex] (c3) .. controls  (0,4.5) and (-2.5,4.5) .. (d);

	\draw (-3,2) node(x11){$x_1^1$}
	++(0.5,-.75) node(x12){$x_1^2$}
	++(0.5,-.75) node(x13){$x_1^3$}
	++(-1,0)     node(x14){$x_1^4$}
	++(-1,0)     node(x15){$x_1^5$}
	++(0.5,.75)  node(x16){$x_1^6$};
	\foreach \x / \y in {x11/x12,x12/x13,x13/x14,x14/x15,x15/x16,x16/x11}
		{ \draw[-latex,] (\x) -- (\y); }
	\draw[dashed] (x11) -- (x14);

	\draw (0,2) node(x21){$x_2^1$}
	++(0.5,-.75) node(x22){$x_2^2$}
	++(0.5,-.75) node(x23){$x_2^3$}
	++(-1,0)     node(x24){$x_2^4$}
	++(-1,0)     node(x25){$x_2^5$}
	++(0.5,.75)  node(x26){$x_2^6$};
	\foreach \x / \y in {x21/x22,x22/x23,x23/x24,x24/x25,x25/x26,x26/x21}
		{ \draw[-latex,] (\x) -- (\y); }
	\draw[dashed] (x21) -- (x24);

	\draw (3,2) node(x31){$x_3^1$}
	++(0.5,-.75) node(x32){$x_3^2$}
	++(0.5,-.75) node(x33){$x_3^3$}
	++(-1,0)     node(x34){$x_3^4$}
	++(-1,0)     node(x35){$x_3^5$}
	++(0.5,.75)  node(x36){$x_3^6$};
	\foreach \x / \y in {x31/x32,x32/x33,x33/x34,x34/x35,x35/x36,x36/x31}
		{ \draw[-latex,] (\x) -- (\y); }
	\draw[dashed] (x31) -- (x34);

	\draw[-latex] (c1) .. controls (-2,2.5)  .. (x13);
	\draw[-latex] (x15) .. controls (-4,2.5)  .. (c1);

	\draw[-latex] (c1) to [bend right=15] (x26);
	\draw[-latex] (x22) to [bend right=15] (c1);

	\draw[-latex] (c1) .. controls (4,2.5)  .. (x33);
	\draw[-latex] (x35).. controls (2,2.5)  .. (c1);

\end{tikzpicture}
\caption{The weak tournament $W_\varphi$ for formula $\varphi= C_1 \wedge C_2 \wedge C_3$ with $C_1 = x_1 \lor \lnot x_2 \lor x_3$. Dashed lines indicate unspecified edges. For improved readability, edges connecting $c_2$ and $c_3$ to $X$ have been omitted.}
\label{fig:construction}
\end{figure}

We get hardness for the necessary winner problem by slightly modifying the construction used in the proof above.

\begin{theorem} \label{thm:weak-nec}
	The necessary $\es$ winner problem (in weak tournament games) is coNP-complete. 
\end{theorem}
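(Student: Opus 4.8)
The plan is to handle membership and hardness separately. For membership in coNP, I would observe that a \emph{no}-instance admits a short certificate: a completion $W'$ of the input incomplete weak tournament game together with the set $\es(W')$, witnessing that the target action $a$ satisfies $a \notin \es(W')$. Since \citet{BrFi08a} show that $\es$ is computable in polynomial time, this certificate is verifiable in polynomial time, so the complement of the problem lies in NP and hence the necessary $\es$ winner problem lies in coNP.

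For coNP-hardness I would reduce from \textsc{Unsat}. Starting from the incomplete weak tournament $W_\varphi$ of \thmref{thm:weak-pos}, I would append a single new vertex $z$ that loses to $d$ (the edge $d \succ z$) and ties every other vertex, introducing \emph{no} new unspecified edges; call the result $W'_\varphi$. Then completions of $W'_\varphi$ are in one-to-one correspondence with completions of $W_\varphi$, and $z$ is a fully specified, legal weak-tournament vertex. The target action is $z$, and the goal is to prove that $z$ is a necessary $\es$ winner of $W'_\varphi$ if and only if $\varphi$ is unsatisfiable.

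The crux is a \emph{complementation identity}: in every completion $W'$ one has $z \in \es(W')$ if and only if $d \notin \es(W')$. To establish it I would invoke the standard characterization of essential actions in a matrix game of value $v$, namely that an action lies in the support of some optimal strategy if and only if it is a best response to \emph{every} optimal strategy of the opponent. Because $W'$ is symmetric, its value is $0$ and $\es_\text{row}(W') = \es_\text{column}(W')$; and since $z$ ties everything except $d$, the payoff of $z$ against a column strategy $\tau$ equals $-\tau(d)$, so $z$ is a best response to $\tau$ exactly when $\tau(d)=0$. Hence $z \in \es(W')$ iff every optimal column strategy avoids $d$, which is precisely the statement $d \notin \es(W')$. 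Chaining this with \thmref{thm:weak-pos} yields: $z$ is a necessary winner $\Leftrightarrow$ for all completions $d \notin \es$ $\Leftrightarrow$ no completion has $d \in \es$ $\Leftrightarrow$ $\varphi$ is unsatisfiable, which completes the reduction since \textsc{Unsat} is coNP-complete.

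The main obstacle I anticipate is verifying that appending the near-universal-tie vertex $z$ does not perturb the essential-set analysis of the original construction, so that the equivalence of \thmref{thm:weak-pos} can be invoked inside $W'_\varphi$; concretely, that $d \in \es(W')$ iff $d \in \es(W)$ for the corresponding completion $W$. I would check that every optimal strategy of $W$ remains optimal in $W'$ (playing $z$ with probability $0$, since $z$ can only raise the row payoff via $\sigma(d)\ge 0$ and can be held to $0$ by any original optimal column strategy), and that no optimal strategy of $W'$ placing weight on $z$ can render a clause vertex $c_j$ or the vertex $d$ essential when it was not already, using the quasi-strict support structure recorded in \obsref{obs2}. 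Carefully treating the degenerate tie-completions (where some $X_i$-gadget is set to a tie) in this step, and confirming they never spuriously make $z$ non-essential, is where I expect the bookkeeping to be most delicate.
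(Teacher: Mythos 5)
Your proposal is correct and follows essentially the same route as the paper: the paper also reduces from \textsc{Unsat} by appending a vertex $d'$ (your $z$) with $d \succ d'$ and ties to everything else, establishing the same complementation identity $d' \in \es(W) \Leftrightarrow d \notin \es(W)$ and chaining it with Theorem~\ref{thm:weak-pos}. Your extra justification of the identity via the best-response characterization, and your explicit concern about the new vertex not perturbing $\es$ on the original vertices (which the paper asserts as $\es(W) \cap A = \es(W|_A)$ without proof), only make the argument more complete.
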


It can actually be shown that the problems considered in Theorems \ref{thm:weak-pos} and \ref{thm:weak-nec} remain intractable even in the case where unspecified payoffs can be chosen from the interval $[-1,1]$ (while still maintaining symmetry). This is interesting insofar as this is our only hardness result for \emph{infinite and/or continuous} payoff sets; such relaxations often make problems computationally easier.

\begin{proposition} \label{prop:continuous}
	The possible equilibrium action problem in weak tournament games remains NP-complete (and the necessary equilibrium action problem coNP-complete) when every payoff set $m(i,j)$ with $m(i,j)=\set{-1,0,1}$ is replaced by $m(i,j)=[-1,1]$.
\end{proposition}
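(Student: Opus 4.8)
The plan is to reuse, essentially verbatim, the incomplete constructions from the proofs of \thmref{thm:weak-pos} and \thmref{thm:weak-nec}, the only change being that each unspecified edge becomes a continuum: I set $m(x_i^1,x_i^4)=t_i$ and $m(x_i^4,x_i^1)=-t_i$ with $t_i\in[-1,1]$, which keeps the completion symmetric and inside $[-1,1]$. A completion is then described by a vector $(t_1,\dots,t_n)\in[-1,1]^n$, and to it I associate the Boolean assignment $\alpha$ given by $v_i\mapsto\text{true}$ if $t_i>0$ and $v_i\mapsto\text{false}$ if $t_i<0$ (breaking the tie $t_i=0$ arbitrarily, exactly as in the discrete proof). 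One implication is immediate: the discrete completions $t_i\in\set{-1,0,1}$ are a subset of the continuous ones, so any ``yes''-instance of the discrete problem is a ``yes''-instance of the continuous one. All the content is therefore in showing that the continuous completions produce no genuinely new behaviour.

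The heart of the argument is a continuous analogue of \obsref{obs2}: the essential set of the gadget $W_\varphi|_{X_i}$ depends on $t_i$ only through its sign. I would analyse the parametric $6\times 6$ antisymmetric game on $X_i$ (the cycle $x_i^1\succ\cdots\succ x_i^6\succ x_i^1$ together with the single diagonal payoff $t_i$, all remaining off-diagonal entries being $0$). For $t_i>0$ one exhibits an explicit maximin strategy $p$ supported on $\set{x_i^1,x_i^3,x_i^5}$: writing out $p^\top M$, the only binding constraints are $p(x_i^1)>p(x_i^3)$, $p(x_i^5)>p(x_i^1)$, and $t_i\,p(x_i^1)+p(x_i^3)>p(x_i^5)$, which are simultaneously satisfiable for every $t_i\in(0,1]$ and which make $(p,p)$ a quasi-strict equilibrium. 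A symmetric computation handles $t_i\in[-1,0)$ with support $\set{x_i^2,x_i^4,x_i^6}$, and $t_i=0$ reproduces the tie case. Since all quasi-strict equilibria of a matrix game share the same support \citep{BrFi08a} and $\es$ is exactly that support, this pins down $\es(W_\varphi|_{X_i})$ as $\set{x_i^1,x_i^3,x_i^5}$, $\set{x_i^2,x_i^4,x_i^6}$, or all of $X_i$ according to $\operatorname{sign}(t_i)$, identical to the three discrete completions.

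Granting this, the two observations and the composition/covering arguments of \thmref{thm:weak-pos} and \thmref{thm:weak-nec} carry over unchanged, because every step referred to the gadget only through the support of its equilibria and through \obsref{obs1}, which concerns $d$ and the $c_j$ and is unaffected by the value of $t_i$. Equivalently, the discrete completion obtained by rounding each $t_i$ to $\operatorname{sign}(t_i)$ has the same essential set as the original continuous completion, so the families of achievable essential sets coincide for the discrete and continuous variants. This yields both hardness (the reductions are literally the discrete ones) and membership: an NP/coNP certificate is just the sign vector together with the polynomial-time-computable essential set of the corresponding discrete completion. Hence the possible problem remains NP-complete and the necessary problem coNP-complete.

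The step I expect to be the main obstacle is precisely this continuous \obsref{obs2}, and within it the verification that perturbing $t_i$ within a fixed sign does not move the support once the gadget sits inside the full game rather than in isolation. I would address this by observing that the equilibrium inequalities constructed in the original proofs are strict, hence stable under such perturbations, so the explicit full-game equilibria remain valid for every $t_i$ of the prescribed sign; uniqueness of quasi-strict support then forces the full-game essential set to be locally constant on each sign interval and to change only at $t_i=0$.
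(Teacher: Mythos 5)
Your proposal follows essentially the same route as the paper: the paper also keeps the constructions from Theorems \ref{thm:weak-pos} and \ref{thm:weak-nec} verbatim, parametrizes the single unspecified entry of each gadget $W_\varphi|_{X_i}$ by $z\in[-1,1]$, and proves a lemma showing that the quasi-strict equilibria of the gadget (specifically the signs of $p(x_i^3)-p(x_i^5)$ and $p(x_i^6)-p(x_i^2)$, which are exactly the quantities feeding into $u(c_j,p)$) depend only on $\operatorname{sign}(z)$, so that any $z\in(0,1]$ acts like $1$ and any $z\in[-1,0)$ acts like $-1$. Your explicit verification of the quasi-strictness inequalities on the support $\set{x_i^1,x_i^3,x_i^5}$ for all $t_i\in(0,1]$ is the same computation, and the remaining reasoning matches the paper's.
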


\section{Tournament Games}
\label{sec:tournaments}

The hardness results in \secref{sec:weak} leave open the possibility that computing $\es$ is tractable in tournament games (where $\es$ is referred to as $\bp$). Indeed, it often turns out that computational problems become easier to solve when restricting attention to tournaments \citep{BBFHR11a,BrFi08b,KMSc07a}.\footnote{There are also cases in the literature where a computational problem remains hard when restricted to tournaments, but the hardness proof is much more complicated \citep{Alon06a,CTY07a,Coni06a}.}
The reason is that certain structural properties only hold in tournaments.\footnote{For example, \citet{LLL93a} and \citet{FiRy92a} have shown that every tournament game $T$ has a \emph{unique} Nash equilibrium. This Nash equilibrium is quasi-strict and has support $\es(T)=\bp(T)$.}

Nevertheless, we prove that computing possible and necessary $\es$ winners is hard even in tournament games. The technical difficulty in proving these results lies in the fact that the hardness reduction cannot use ``ties'' (\ie non-edges) in (the specified part of) the graph.

\begin{theorem} \label{thm:posBP}
	The possible $\bp$ winner problem (in tournament games) is NP-complete. 
\end{theorem}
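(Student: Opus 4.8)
The plan is to reduce from \textsc{Sat} and to reuse, as much as possible, the weak-tournament gadget $W_\varphi$ from the proof of \thmref{thm:weak-pos}, for which $\varphi$ is satisfiable iff some completion has $d \in \es$. The only obstruction is that $W_\varphi$ uses ties---both the many default non-edges and the balanced equilibrium behaviour they induce---whereas a tournament game must orient every pair of distinct actions by $-1$ or $+1$. So the heart of the argument is a \emph{tie-elimination} transformation $W \mapsto T(W)$ that turns an (incomplete) weak tournament into an (incomplete) tournament while preserving, for the distinguished vertex $d$, the property ``$d$ is played with positive probability in equilibrium,'' and that commutes with completions (each unspecified edge $x_i^1$--$x_i^4$ of $W_\varphi$ maps to an unspecified edge, each specified tie to a fixed orientation). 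Applying this transformation to $W_\varphi$ and tracking the image of $d$ then yields the reduction.

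First I would describe the transformation. Replace every vertex $a$ of $W$ by a \emph{component} $\hat a$ consisting of an odd number of copies arranged as a regular tournament, so that in isolation $\hat a$ behaves like a single balanced vertex. Between two components $\hat a,\hat b$: if $a \succ b$ in $W$, let every copy in $\hat a$ beat every copy in $\hat b$ (so $\hat a$ is a module dominating $\hat b$); if the relation between $a$ and $b$ is unspecified, leave these bipartite edges unspecified but coupled so that the component-level orientation is the single free choice; and if $a$ and $b$ are tied, orient the bipartite edges between $\hat a$ and $\hat b$ in a \emph{balanced} fashion, i.e.\ so that against the uniform mixture over the opposing component every copy obtains expected payoff exactly $0$. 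The point of balancing is that, relative to any strategy that is uniform within each component, a tied pair of components contributes nothing to either player's payoff, exactly mirroring a $0$ entry; the point of making each component a regular tournament is that its own equilibrium is uniform, so that ``uniform within each component'' is self-consistent. The delicate bookkeeping here is to choose component sizes so that internal regularity (forcing odd size) and balanced bipartite orientations between tied components can coexist.

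The key lemma I would then prove is that the unique Nash equilibrium of any completion $T'=T(W')$ is supported on a union of whole components, is uniform within each, and assigns positive probability to $\hat a$ exactly when $a \in \es(W')$. Here I would lean on the facts recorded earlier: every tournament game has a \emph{unique} equilibrium, which is quasi-strict and has support $\bp(T')$, and on the composition-consistency of the bipartisan set to handle the genuinely oriented (dominating/dominated) component relations. The tie relations are handled by the balancing property together with the interchangeability of the copies inside a component, which forces equal probabilities and lets the component-level marginal satisfy precisely the best-response/indifference conditions of the weak-tournament game $W'$. In particular the support of the unique equilibrium of $T'$ contains the copies of $d$ iff $d \in \es(W')$, which by correctness of the $W_\varphi$ construction holds for some completion iff $\varphi$ is satisfiable.

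Membership in NP is immediate: guess a completion of the incomplete tournament game, compute its unique equilibrium (equivalently $\bp$) in polynomial time by linear programming, and check whether the queried action lies in its support. I expect the main obstacle to be the analysis of the balanced tie-gadgets under the \emph{uniqueness} of the tournament equilibrium. Unlike in the weak-tournament proof, we can no longer select a convenient equilibrium among many, so the balancing must be exact and the argument must rule out that breaking ties into $\pm 1$ edges perturbs the support of the one equilibrium---in particular that it neither splits a component unevenly nor creates a spurious best response that drags an unwanted component, or $\hat d$ itself, into or out of $\bp$.
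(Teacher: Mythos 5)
Your overall strategy---reduce from satisfiability and eliminate ties by blowing each vertex up into a component---correctly identifies the central difficulty, but the key step fails for a parity reason. If $\hat a$ and $\hat b$ are components of odd sizes $s$ and $t$ (odd because you need them to be regular tournaments with uniform internal equilibrium), then a copy in $\hat a$ that beats $k$ of the $t$ copies in $\hat b$ gets expected payoff $(2k-t)/t$ against the uniform mixture on $\hat b$; since $t$ is odd, $2k-t$ is odd and hence never zero. So the ``exactly balanced'' bipartite orientation you need between tied components does not exist, and the coexistence problem you flag as ``delicate bookkeeping'' is in fact an impossibility. The natural fallback---make the imbalance $O(1/t)$ and argue the equilibrium is only slightly perturbed---is not innocuous here: membership in $\es$/$\bp$ hinges on knife-edge indifferences (in $W_\varphi$ one has $u(d,p)=0$ exactly), the tournament game's \emph{unique} equilibrium is sensitive to such perturbations, and once the bipartite relations are unbalanced the equilibrium need no longer be uniform within components, which destroys the inductive structure your key lemma relies on. Your proposal neither constructs the gadget nor supplies the perturbation analysis, so there is a genuine gap at the heart of the reduction.

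The paper does not simulate $W_\varphi$; it builds a new incomplete tournament $T_\varphi$ directly from a \textsc{3Sat} instance. ``Balance'' is obtained structurally rather than by tie-simulation: the clause vertices form a cyclone, each variable is a two-element component $\set{x_i^+,x_i^-}$ carrying the unspecified edge, and these sit with $n{+}1$ further components in a cyclone of size $2n+1$, so composition-consistency forces each component to receive probability exactly $\frac{1}{2n+1}$. Satisfaction is then detected by counting: $p(D_X(c_j))=\frac{1}{2n+1}(n-2+f_j^\alpha)<\frac{1}{2}$ iff at most two literals of $C_j$ are false, which is where the bounded clause width of \textsc{3Sat} is used (your reuse of the general \textsc{Sat} gadget would not support this count). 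A recursively composed component $B_K$ supplies a vertex $x^*$ whose equilibrium probability is below $N^{-N}$, which is needed to separate cases in the analysis of $d$ and $d'$, and the possible-winner version is obtained from the necessary-winner lemma simply by reversing all edges incident to $d$. Your NP-membership argument is fine, but the hardness direction needs this (or some other) genuinely tie-free construction.
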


\begin{theorem} \label{thm:necBP}
	The necessary $\bp$ winner problem (in tournament games) is coNP-complete. 
\end{theorem}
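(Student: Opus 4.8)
For membership in coNP, I would argue that the complementary problem---deciding whether there \emph{exists} a completion $T'$ with $a \notin \bp(T')$---lies in NP. A completion of an incomplete tournament game is specified simply by choosing an orientation ($+1$ or $-1$) for each unspecified off-diagonal entry, so a completion can be guessed using polynomially many bits. As noted before \thmref{thm:posBP}, every (complete) tournament game has a \emph{unique} Nash equilibrium, which is quasi-strict and whose support is exactly $\bp(T')$; moreover, this equilibrium and hence $\bp(T')$ can be computed in polynomial time (\secref{sec:prelims}). One can therefore guess a completion, compute $\bp(T')$, and verify that $a \notin \bp(T')$ in polynomial time, placing the problem in coNP.

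For coNP-hardness, the plan is to reduce from the complement of \textsc{Sat}, reusing and augmenting the possible-$\bp$-winner construction of \thmref{thm:posBP} in the same spirit in which \thmref{thm:weak-nec} is obtained from \thmref{thm:weak-pos}. Recall that the reduction behind \thmref{thm:posBP} produces, from a formula $\varphi$, an incomplete tournament game together with a distinguished action $d$ such that $\varphi$ is satisfiable if and only if some completion has $d \in \bp$. I would introduce a fresh target action $a$ (together with a constant number of auxiliary vertices) and wire it into the graph so that, in \emph{every} completion $T'$, one has $a \in \bp(T')$ precisely when $d \notin \bp(T')$. Granting such an ``anti-correlation'' gadget, $a$ is a necessary $\bp$ winner if and only if $d \notin \bp(T')$ for all completions, if and only if no completion places $d$ in $\bp$, if and only if $\varphi$ is unsatisfiable. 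Since unsatisfiability is coNP-complete, this yields coNP-hardness, and together with membership gives coNP-completeness.

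The crux, and the step I expect to be the main obstacle, is building this flip gadget within the restrictions of \emph{tournament} games, exactly the difficulty flagged before \thmref{thm:posBP}. Because a tournament game has a unique equilibrium, $\bp$ is a well-defined function of the completion (so the target biconditional is meaningful), but we simultaneously lose the two conveniences exploited elsewhere: we cannot select among several equilibria, and we cannot use ``tie'' entries of value $0$, since every pair of distinct actions must carry an oriented $\pm 1$ edge. Thus all edges incident to $a$ and to the auxiliary vertices---including those connecting the gadget to the clause/variable part of \thmref{thm:posBP}---must be oriented so as to (i) leave unchanged whether $d$ sits in the unique equilibrium support of the original component, and (ii) pull $a$ into the support exactly when $d$ drops out of it. I anticipate the real work being this simultaneous control: choosing the orientations of the ``filler'' edges (playing the role of the large-payoff $H/-H$ devices of \thmref{thm:matrix-nec}, but now necessarily $\pm 1$) and then verifying, via the characterization of $\es$ through quasi-strict equilibria, that the anti-correlation between $a$ and $d$ holds in the unique equilibrium of each of the polynomially-many-relevant completions while the clause/variable analysis underpinning \thmref{thm:posBP} is preserved intact.
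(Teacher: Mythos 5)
Your coNP-membership argument is correct and matches the paper's: guess a completion, compute the unique (quasi-strict) equilibrium in polynomial time, and check that $a$ is outside its support. The gap is in the hardness direction. Your reduction skeleton is fine, but it rests entirely on an ``anti-correlation gadget''---a fresh vertex $a$ plus a constant number of auxiliary vertices wired so that, in every completion, $a \in \bp(T')$ exactly when $d \notin \bp(T')$---and you explicitly leave this gadget unconstructed, identifying it as the main obstacle. That obstacle is precisely where all the technical content of the theorem lives, so the proposal as written does not constitute a proof. Moreover, the paper's own realization of this idea suggests that a constant-size local gadget is not available: the construction $T_\varphi$ needs a component $B_K$ with $K = N\log N$ (hence superlinearly many auxiliary vertices), a distinguished vertex $x^*$ whose equilibrium probability $3^{-K}$ is driven below the Fisher--Ryan lower bound $N^{-N}$ on nonzero equilibrium probabilities, and composition-consistency arguments; the ``flip'' between $d$ and the auxiliary vertex $d'$ is established by a global counting argument on dominions (Steps 1 and 2 of the proof of \lemref{lem:unsat}), not by local wiring. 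The difficulty is exactly the one you flag---without ties, any added vertex must beat or lose to every existing vertex, and this perturbs the equilibrium of the whole tournament---but flagging it is not the same as resolving it.

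It is also worth noting that the paper's logical order is the reverse of yours: the base construction $T_\varphi$ is built so that $d \in \bp(T_\varphi^\alpha)$ iff $\alpha$ \emph{fails} to satisfy $\varphi$ (\lemref{lem:unsat}), which yields the necessary-winner hardness (\thmref{thm:necBP}) directly from unsatisfiability; the possible-winner result (\thmref{thm:posBP}) is then derived by reversing all edges incident to $d$, which flips membership of $d$ in $\bp$ via \lemref{lem:half}~(\emph{ii}) because the subtournament on $A \setminus \set{d}$ is unchanged. That edge-reversal trick is the clean, provable substitute for the gadget you are hoping for---but it flips the status of the \emph{same} vertex $d$ across two different instances, rather than anti-correlating a new vertex with $d$ inside one instance, which is a materially easier thing to achieve.
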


Observe that these results neither imply the results for weak tournament games in \secref{sec:weak} (where completions can use ``ties'') nor the results for general matrix games in \secref{sec:zerosum} (where completions can be asymmetric).

\section{MIP for Weak Tournament Games}
\label{sec:mip}

Of course, the fact that a problem is NP-hard does not make it go away; it
is still desirable to find algorithms that scale reasonably well (or very
well on natural instances).  NP-hard problems in game theory often allow
such algorithms.  In particular, formulating the problem as a \emph{mixed-integer
program (MIP)} and calling a general-purpose solver often provides good results.
In this section, we formulate the possible $\es$ winner problem in weak tournament games as a MIP.

\subsection{Mixed-Integer Programming Formulation}

Let $W=(w(i,j))_{i,j \in A}$ be an incomplete weak tournament game. 
For every entry $w(i,j)$ of $W$, we define two binary variables $x_{ij}^{\text{pos}}$ and $x_{ij}^{\text{neg}}$.
Setting $w(i,j)$ to $w_{ij} \in \set{-1,0,1}$ corresponds to setting $x_{ij}^{\text{pos}}$ and $x_{ij}^{\text{neg}}$ in such a way that $(x_{ij}^{\text{pos}}, x_{ij}^{\text{neg}}) \neq (1,1)$ and $x_{ij}^{\text{pos}} - x_{ij}^{\text{neg}} = w_{ij}$.
For each action $j$, there is a variable $p_j$ corresponding to the probability that the column player assigns to $j$. 
Finally, $z_{ij}$ is a variable that, in every feasible solution, equals $w_{ij} p_j$.

To determine whether an action $k \in A$ is a possible $\es$ winner of $W$, we solve the following MIP. Every feasible solution of this MIP corresponds to a completion of $W$ and a Nash equilibrium of this completion.
  \begin{alignat*}{3}
	  &\text{maximize } p_k && \\
	  &\text{subject to} && \\
	   	&x_{ij}^{\text{neg}}-x_{ji}^{\text{pos}}=0, \, \forall i,j 	& \quad 		   		&x_{ij}^{\text{pos}}=1, \, \text{if } w(i,j)=1\\
		&x_{ij}^{\text{pos}} + x_{ij}^{\text{neg}} \le 1, \,  \forall i,j & \quad 			&x_{ij}^{\text{neg}}=1, \, \text{if } w(i,j)=-1 \\
		&x_{ij}^{\text{pos}}=x_{ij}^{\text{neg}}=0, \, \text{if } w(i,j)=0 & \quad 			&x_{ij}^{\text{pos}}, x_{ij}^{\text{neg}} \in \set{0,1}, \, \forall i,j\\
		&z_{ij} \ge p_j-2(1-x_{ij}^{\text{pos}}), \, \forall i,j & 			   				&\textstyle{\sum_{j \in A} z_{ij}} \le 0,  \, \forall i \\
		&z_{ij} \ge -p_j-2(1-x_{ij}^{\text{neg}}), \, \forall i,j & 			   			&\textstyle{\sum_{j \in A}p_j} = 1\\
		&z_{ij} \ge -2x_{ij}^{\text{pos}} - 2x_{ij}^{\text{neg}}, \, \forall i,j &			&p_j \ge 0, \, \forall j\\
   \end{alignat*}

\vspace{-0.5cm}
  Here, indices $i$ and $j$ range over the set $A$ of actions.
  Most interesting are the constraints on $z_{ij}$; we note that exactly
  one of the three will be binding depending on the values of
  $x_{ij}^{\text{pos}}$ and $x_{ij}^{\text{neg}}$. The net effect of these constraints is
  to ensure that $z_{ij} \ge w_{ij}p_j$. (Since we also have the constraint $\sum_{j \in A} z_{ij} \le 0$ and because the value of every completion is zero, $z_{ij} = w_{ij}p_j$ in every feasible solution.)
  All other constraints containing $x_{ij}^{\text{pos}}$ or $x_{ij}^{\text{neg}}$ are to impose symmetry and consistency on the entries. The remaining constraints make sure that $p$ is a well-defined probability distribution and that no row yields positive payoff for player $1$. 
  
  It is possible to adapt this MIP to compute possible and necessary $\bp$ winners in tournament games. All that is required is to replace inequality constraints of the form $x_{ij}^{\text{pos}} + x_{ij}^{\text{neg}} \le 1$ by equalities, thus eliminating the possibility to set $w_{ij}=0$. Since tournament games have a unique equilibrium, checking whether action $k$ is a possible or necessary $\bp$ winner can be done by maximizing and minimizing the objective function $p_k$, respectively. The reason that this approach does not extend to the computation of necessary winners in \emph{weak} tournament games is that weak tournaments may have multiple equilibria, some of them not quasi-strict. Since our MIP optimizes over the set of all (not necessarily quasi-strict) equilibria, we may encounter cases where the MIP finds a completion with $p_k=0$, but $k$ is still a necessary winner because it  is played with positive probability in every \emph{quasi-strict} equilibrium.

\subsection{Experimental Results}

\begin{figure}[tb]
	\centering
    \includegraphics[width=7cm]{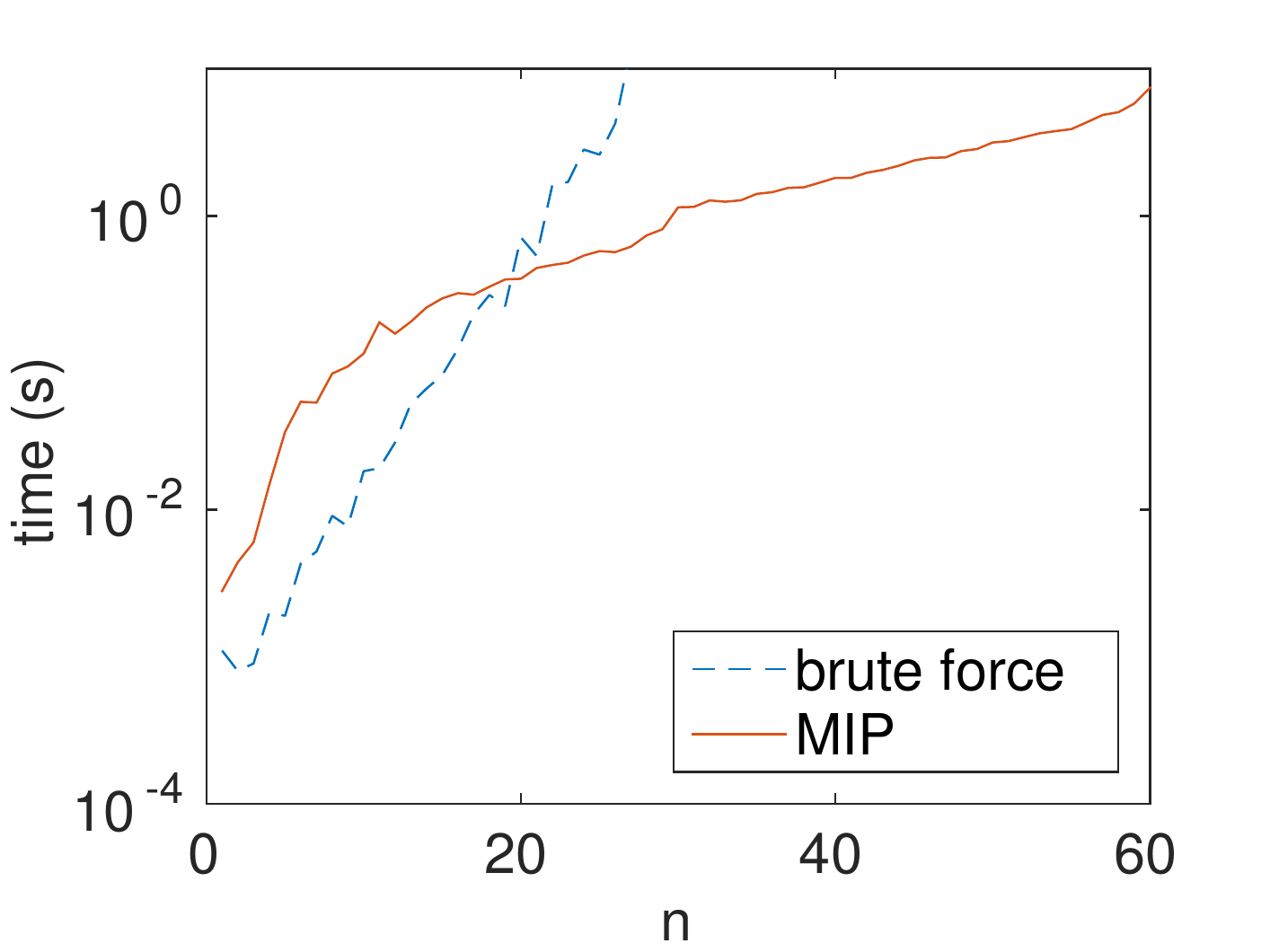}
	\hfill
	\includegraphics[width=7cm]{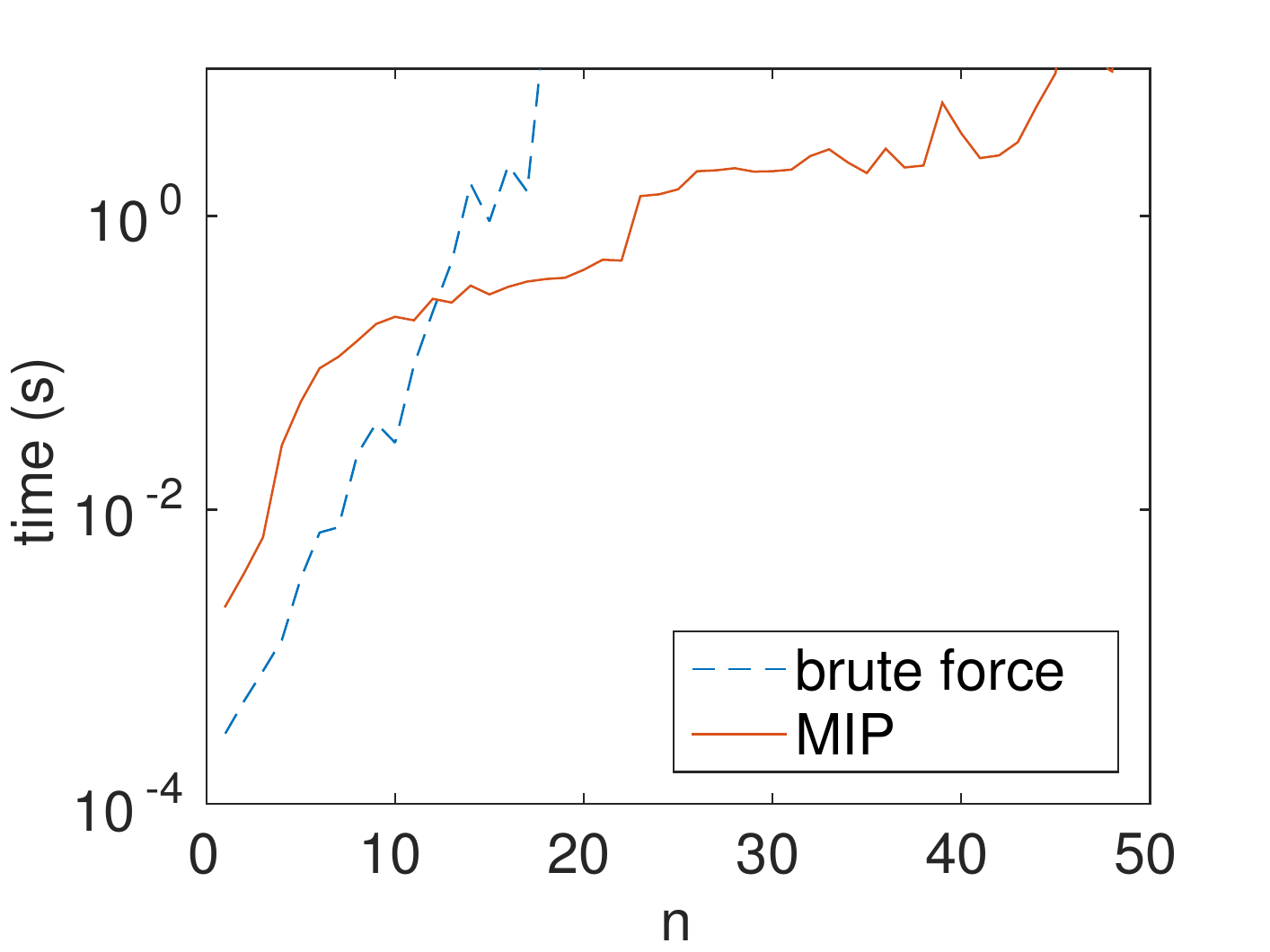}
  \caption{Average runtime (log scale) for $\frac{n}{2}$ unspecified entries (left) and $n$ unspecified entries (right).}
  \label{fig:graphs}
\end{figure}

We tested our MIP for the possible $\es$ winner problem in weak tournament games containing either $\frac{n}{2}$ or $n$ unspecified entries, where $n=|A|$ is the number of actions available to each player. For each $n$, we examined the average time required to solve 100 random instances\footnote{Random instances were generated by randomly choosing each entry from $\{-1,0,1\}$ and imposing symmetry, then randomly choosing the fixed number of entries to be unspecified.} of size~$n$, using CPLEX 12.6 to solve the MIP.  Results are shown in \figref{fig:graphs}, with algorithms cut off once the average time to find a solution exceeds 10 seconds.

We compared the performance of our MIP with a simple brute
force algorithm. The brute force algorithm performs a  depth-first search over the space of all completions, terminating when it finds a certificate of a yes instance or after it has exhausted all completions.
We observe that for even relatively small values of $n$, the MIP begins to
significantly outperform the brute-force algorithm.

\section{Conclusion}
\label{sec:conclusion}

Often, a designer has some, but limited, control over the game being
played, and wants to exert this control to her advantage.  In this paper,
we studied how computationally hard it is for the designer to decide
whether she can choose payoffs in an incompletely specified game to achieve
some goal in equilibrium, and found that this is NP-hard even in quite
restricted cases of two-player zero-sum games.  
Our framework and our results also apply in cases where there is no designer but we are just uncertain about the payoffs, either because further exploration is needed to determine what they are, or because they vary based on conditions (e.g., weather). In such settings one might simply be interested in potential and unavoidable equilibrium outcomes.

Future work may address the
following questions.  Are there classes of games for which these problems
are efficiently solvable?  Can we extend the MIP approach to broader
classes of games?  What results can we obtain for general-sum games?  Note
that just as hardness for symmetric zero-sum games does not imply hardness
for zero-sum games in general (because in the latter the game does not need
to be kept symmetric), in fact hardness for zero-sum games does not imply
hardness for general-sum games (because in the latter the game does not
need to be kept zero-sum).  However, this raises the question of which
solution concept should be used---Nash equilibrium, correlated equilibrium,
Stackelberg mixed strategies, etc.  (All of these coincide in two-player
zero-sum games.)  All in all, we believe that models where a designer has
limited, but not full, control over the game are a particularly natural
domain of study for AI researchers and computer scientists in general, due
to the problems' inherent computational complexity and potential to address
real-world settings.

\section*{Acknowledgements}

We are thankful for support from NSF under awards IIS-1527434,
IIS-0953756, CCF-1101659, and CCF-1337215, ARO under grants
W911NF-12-1-0550 and W911NF-11-1-0332, ERC under StG 639945 (ACCORD), a Guggenheim Fellowship, and a Feodor Lynen research fellowship of the Alexander von Humboldt
Foundation.
This work was done in part while Conitzer was visiting the Simons
Institute for the Theory of Computing.
We thank Martin Bullinger for helpful comments.

\clearpage

\appendix

\section{Omitted Proofs from \secref{sec:zerosum}}

\subsection{Definition and properties of \cyc games}

Consider a triple $(n,d,H)$, where
$n$ is an odd natural number greater or equal to $3$,
$d=(d_1, \ldots, d_n) \in \set{-1,0,1}^n$, and 
$H$ is a rational number greater or equal to $1$. %
The \emph{\cyc game} $C(n,d,H)$ is the matrix game given by $M=(m(i,j))_{i \in I, j \in J}$ with $I=J=[n]$ and  
\[
m(i,j) = 
\begin{cases}
d_i	            &\text{, if } i=j \\
(-1)^{j-i-1} H  &\text{, if } i<j \\
-m(j,i)         &\text{, if } i>j \text.
\end{cases}
\]

\noindent
For example, the \cyc game $C(5,(0,1,0,0,-1),10)$ has payoff matrix
\[
\left(
\begin{array}{rrrrr}
0 & 10 & -10 & 10 & -10 \\	
-10 & 1 & 10 & -10 & 10 \\
10 & -10 & 0 & 10 & -10 \\
-10 & 10 & -10 & 0 & 10 \\
10 & -10 & 10 & -10 & -1
\end{array}
\right) \text.
\]	

An \cyc game $C(n,d,H)$ is symmetric (and equivalent to a weak tournament
game) if and only if $d=(0,\ldots,0)$. Let $\tr(C)$ denote the trace of the payoff matrix of $C$, \ie $\tr(C)=\sum_{i \in [n]} d_i$. We call $C$ \emph{balanced} if $\tr(C)=0$.

\begin{lemma} \label{balanced}
	Let $C = C(n,d,H)$ be an alternating game, let $H \ge (n-1)^2n^3$,  and let $v$ denote the value of $C$.
	\begin{enumerate}\romanenumi
		\item If $C$ is balanced, then $v=0$. \label{lem:i}
		\item If $C$ is not balanced, then $|v| \ge \frac{1}{n^3}$. \label{lem:ii}
		\item Let $\delta > 0$. If a player plays an action with probability $p \notin [ \frac{1}{n}-\frac{(n-1)^2}{H} \delta, \frac{1}{n}+\frac{(n-1)^2}{H} \delta ]$, then the other player can achieve a payoff of at least $\delta -1$.  \label{lem:iii}
\end{enumerate}
\end{lemma}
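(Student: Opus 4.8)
The plan is to derive all three parts from one structural identity for the payoff matrix $M$ of the alternating game $C(n,d,H)$. Write $M=A+D$, where $A$ is the skew-symmetric off-diagonal part and $D=\mathrm{diag}(d_1,\dots,d_n)$. Because $n$ is odd, the alternating signs $(-1)^{j-i-1}$ make every off-diagonal row and column sum cancel, so that $\sum_j m(i,j)=d_i$ and $\sum_i m(i,j)=d_j$. The key computation, which I would carry out first, is that adding two cyclically consecutive rows collapses to just two entries: writing $e_k$ for the $k$-th unit vector and reading indices mod $n$,
\[
(\text{row } i)+(\text{row } i{+}1)=(d_i-H)\,e_i+(d_{i+1}+H)\,e_{i+1},
\]
and dually $(\text{col } j)+(\text{col } j{+}1)$ is supported on rows $j,j+1$ with values $H-d_j$ and $-(H+d_{j+1})$. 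Oddness of $n$ is precisely what makes this identity hold for the wrap-around pair $\set{n,1}$, and verifying it is the engine for everything else.

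Given the identity, part (iii) is immediate to set up. If the row player uses $\sigma$, the column action $\tfrac12(e_j+e_{j+1})$ earns $\tfrac{H}{2}(\sigma_{j+1}-\sigma_j)+\tfrac12(d_j\sigma_j+d_{j+1}\sigma_{j+1})$. I would argue that a single coordinate at distance more than $\tfrac{(n-1)^2}{H}\delta$ from $\tfrac1n$ forces, via $\sum_i\sigma_i=1$, a coordinate at comparable distance on the opposite side; telescoping the consecutive gaps around the cycle (at most $n-1$ of them) then produces a pair, oriented so the later index is larger, with $\sigma_{j+1}-\sigma_j>\tfrac{n\delta}{H}$. Substituting makes the first term exceed $\tfrac{n\delta}{2}$ while the correction term is at least $-\tfrac12$, and since $n\ge 3$ this already beats $\delta-1$.

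For part (i) I would exhibit an exact equalizing strategy. Writing $\sigma=\sum_i\beta_i(e_i+e_{i+1})$, so $\sigma_i=\beta_i+\beta_{i-1}$, the identity gives $(\sigma^\top M)_j=d_j\sigma_j+H(\beta_{j-1}-\beta_j)$; demanding $(\sigma^\top M)_j=0$ for all $j$ reduces to the cyclic recurrence $\beta_j=\beta_{j-1}\tfrac{H+d_j}{H-d_j}$. This closes up consistently iff $\prod_k(H+d_k)=\prod_k(H-d_k)$, which for $d_k\in\set{-1,0,1}$ holds exactly when the numbers of $+1$'s and $-1$'s coincide, \ie when $C$ is balanced. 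All factors are positive since $H>1$, so after normalization $\sigma$ is a genuine full-support distribution guaranteeing the row player exactly $0$; the identical recurrence on the column side guarantees the row player at most $0$, whence $v=0$.

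For part (ii) I would sidestep solving the recurrence and instead combine (iii) with the averaging identity $\sum_j(\sigma^\top M)_j=\langle\sigma,d\rangle$ (from the vanishing row sums). A row-optimal $\sigma^\ast$ then satisfies $v\le\tfrac1n\langle\sigma^\ast,d\rangle$ and a column-optimal $\tau^\ast$ satisfies $v\ge\tfrac1n\langle\tau^\ast,d\rangle$. Playing uniform already shows $|v|\le\tfrac1n$, so (iii) (pushing $\delta$ down to $1\mp v$) forces every coordinate of $\sigma^\ast$ and $\tau^\ast$ within $\tfrac{2(n-1)^2}{H}$ of $\tfrac1n$; substituting yields $|v-\tr(C)/n^2|\le\tfrac{2(n-1)^2}{H}$. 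When $C$ is unbalanced $|\tr(C)|\ge1$, hence $|v|\ge\tfrac1{n^2}-\tfrac{2(n-1)^2}{H}$, and the hypothesis $H\ge(n-1)^2n^3$ bounds the second term by $\tfrac2{n^3}$, leaving $|v|\ge\tfrac{n-2}{n^3}\ge\tfrac1{n^3}$. The main obstacle I anticipate is exactly this quantitative core: establishing near-uniformity of optimal play to within $O(1/H)$ from (iii), and then tracking constants carefully enough that the $\Theta(1/H)$ deviation is dominated by the $\ge 1/n^2$ ``signal'' coming from a nonzero trace, precisely under the stated bound on $H$.
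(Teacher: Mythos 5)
Your proposal is correct, but it reaches the three claims by a genuinely different route than the paper, and the comparison is instructive. The common engine is the same cyclic structure, but you package it as the two-consecutive-lines identity $(\text{row }i)+(\text{row }i{+}1)=(d_i-H)e_i+(d_{i+1}+H)e_{i+1}$, whereas the paper aggregates over the half-sets $\underline{i},\overline{i}$ of $\frac{n-1}{2}$ rows scoring $-H$ (resp.\ $+H$) against column $i$ and runs the contrapositive of (iii) through a chain of inequalities on those sets; your telescoping of individual consecutive gaps plus the half-half deviation $\tfrac12(e_j+e_{j+1})$ is a more direct and slightly cleaner way to land the same $\frac{(n-1)^2}{H}\delta$ bound. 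The divergence is largest in part (i): the paper computes $\det(C)=\prod_i(-H+d_i)+\prod_i(H+d_i)$ via the same column-pairing trick, proves the game is completely mixed using (iii), and then invokes Kaplansky's theorem that a completely mixed game with vanishing determinant has value zero; you instead solve the cyclic recurrence $\beta_j=\beta_{j-1}\frac{H+d_j}{H-d_j}$ to exhibit explicit full-support equalizing strategies for both players, which closes up exactly under the same product condition $\prod(H+d_i)=\prod(H-d_i)$ that makes the paper's determinant vanish (for odd $n$) --- your version is self-contained and avoids both the external theorem and the completely-mixed step. For (ii) the paper reduces without loss of generality to one surplus $+1$ (by monotonicity of the value) and lower-bounds the uniform row strategy's payoff against a near-uniform column strategy, while your two-sided sandwich $|v-\tr(C)/n^2|\le \frac{2(n-1)^2}{H}$ via the vanishing row/column sums handles both signs of the trace symmetrically; the constants check out ($\frac{1}{n^2}-\frac{2}{n^3}=\frac{n-2}{n^3}\ge\frac{1}{n^3}$ for $n\ge 3$). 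One immaterial slip: the column-pair identity should read $(\text{col }j)+(\text{col }j{+}1)=(H+d_j)e_j+(d_{j+1}-H)e_{j+1}$, i.e.\ your diagonal corrections have the wrong sign on the $d$'s, but since only $|d_i|\le 1$ enters the estimates nothing downstream is affected.
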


\begin{proof}
We first prove~\ref{lem:iii}. We prove the case where the row player plays an action with probability $p \notin [ \frac{1}{n}-\frac{(n-1)^2}{H} \delta, \frac{1}{n}+\frac{(n-1)^2}{H} \delta ]$, and the column player obtains a guaranteed payoff. The other case follows by symmetry. Note that $C$ is isomorphic to a game $C'$ in which $C'(i,j)= H$ for $j \in \{ i+1, i+2, \ldots, i+\frac{n-1}{2} \}$, $C'(i,j)= -H$ for $j \in \{ i-1, i-2, \ldots, i-\frac{n-1}{2} \}$ (addition is performed modulo $n$), and $C'(i,i) = C(j,j)$ for some $j$. In particular, if $C$ is balanced then the sum of the entries on the diagonal of $C'$ is zero.
	
First consider the case where $C'$ has only zeros on the diagonal. In this special case we will prove a lower bound of $\delta$ on the colun player's payoff, instead of $\delta-1$. Extending to the general case with the weaker bound will follow easily. So, suppose that the column player can achieve a payoff of no more than $\delta$. We will argue that the probability that the row player places on each row is within the interval $[ \frac{1}{n}-\frac{(n-1)^2}{H} \delta, \frac{1}{n}+\frac{(n-1)^2}{H} \delta ]$. 
	
	Note that for every column $i \in [n]$, we can associate with $i$ a set of $\frac{n-1}{2}$ rows which obtain payoff of $-H$ when the column player plays $i$. We will denote this set of rows by $\underline{i}$, and the set of rows which obtain payoff of $H$ when column $i$ is played by $\overline{i}$. Note that the column player's payoff for playing column $i$ is equal to $H(p(\underline{i})-p(\overline{i})) \le \delta$, for all $i$. Also note that $\overline{i}=\underline{i+\frac{n-1}{2}}$. By applying these (in)equations several times, we have that for any $i,i'$,
	\begin{equation} \label{eqn:every-stretch-close}
		|H(p(\underline{i}) - p(\underline{i'}))| \le (n-1) \delta.
\end{equation}
Now consider the two sets of rows $\underline{j}$ and $\underline{j+1}$. The only difference between the two sets is that row $j+1 \in \underline{j} \setminus \underline{j+1}$ and row $j-\frac{n-1}{2} \in \underline{j+1} \setminus \underline{j}$. Therefore $|H(p(j) - p(j-\frac{n-1}{2}))| \le (n-1)\delta$, or else Equation~\ref{eqn:every-stretch-close} would be violated. Noting that if we start from $j$ and subtract $\frac{n-1}{2}$ iteratively then we will hit every row before returning to row $j$, we can say that $H(p(j)-p(j')) \le (n-1)^2 \delta$ for every row $j'$. Therefore $p(j)$ is within $\frac{(n-1)^2}{H} \delta$ of the average probability placed on each row, which is $\frac{1}{n}$. 

To extend to the case with general diagonal vector $d$, consider the case where $d = (1, \ldots, 1)$, minimizing the payoff for the column player. Then for any outcome, the column player obtains either the same payoff as the $d=(0, \ldots , 0)$ case (if the outcome does not lie on the diagonal), or exactly one less than the $d=(0, \ldots, 0)$ case (if the outcome lies on the diagonal). Therefore the column player can obtain payoff of $\delta -1$ in the general case.	

We now prove~\ref{lem:i}. It will be helpful to associate $C$ with a matrix in the obvious way. We first prove a claim regarding the determinant of $C$, $det(C)$.

\begin{claim} \label{claim:determinant}
 $det(C) = \prod_{i=1}^n(-H+d_i) + \prod_{i=1}^n(H+d_i)$.
\end{claim}

\begin{proof}
Define $\hat{C}$ to be the matrix with $\hat{C}(i,j) = C(i,j-1)+C(i,j)$ (with addition performed modulo $n$). Each row of $\hat{C}$ has only two non-zero entries; in particular, $\hat{C}(i,i)=-H+d_i$ and $\hat{C}(i,i+1)=H+d_i$. Note that $det(\hat{C})=det(C)$. Now to compute $det(\hat{C})$, consider the cofactor expansion using the first column. We obtain
\begin{align}
det(\hat{C}) &= (-H+d_1)det(\hat{C}_{1,1}) + (H+d_n)det( \hat{C}_{n,1}), \label{eqn:cofactor}
\end{align}
where $\hat{C}_{i,j}$ is the matrix obtained by deleting row $i$ and column $j$ from $\hat{C}$. Note that $\hat{C}_{1,1}$ is an upper triangular matrix with $\hat{C}_{1,1}(i,i)=(-H+d_{i+1})$ and $\hat{C}_{n,1}$ is a lower triangular matrix with $\hat{C}_{n,1}(i,i)=(H+d_i)$. Since the determinant of a triangular matrix is the product of its diagonal, Equation~\ref{eqn:cofactor} becomes
\begin{align*}
\begin{split}
det(\hat{C}) &= (-H+d_1)\prod_{i=1}^{n-1}(-H+d_{i+1}) \\&\qquad + (H+d_n)\prod_{i=1}^{n-1}(H+d_i) 
\end{split}\\
&=  \prod_{i=1}^n(-H+d_i) + \prod_{i=1}^n(H+d_i) \qedhere
\end{align*} 
\end{proof}
Next we will use a theorem from \citet{Kapl45a} which states that the value of a completely mixed game is zero if the determinant is zero.\footnote{The theorem is actually more powerful, but this is all that we need.} First we show that $C$ is completely mixed 

\begin{claim}
If $H > (n-1)^2n^3$, then both players'equilibirum strategies place non-zero probability on each available action.
\end{claim}

\begin{proof}
Suppose that the row player places zero probability on some action (the case for the column player is symmetric). Then, by~\ref{lem:iii}, the column player can achieve a payoff of $\frac{H}{n(n-1)^2}-1 > 1$. But by simply placing uniform probability on every row, the column player can do no better than a payoff of $\frac{1}{n}<1$. Therefore the strategy that places zero probability on some row is not a minimax strategy, therefore not an equilibrium strategy.
\end{proof}

We complete the proof by showing that $det(C) = 0$ when $C$ is balanced, which immediately gives us that $v=0$.

\begin{claim}
If $C$ is balanced then $det(C)=0$.
\end{claim}

\begin{proof}
By Claim~\ref{claim:determinant}, it suffices to show that $\prod_{i=1}^n(-H+d_i) + \prod_{i=1}^n(H+d_i)=0$ when $C$ is balanced. Let $B=\prod_{i=1}^n(-H+d_i)$ and $A=\prod_{i=1}^n(H+d_i)=0$. Since $H>1$, $H+d_i>0$ and $-H+d_i<0$ for all $i$. Therefore, since $n$ is odd by definition, $B<0<A$. We will show that $B=-A$. Let $k$ be the number of occurrences of 1  in $d$. Since $C$ is balanced, there must also be $k$ occurrences of $-1$ (and $n-2k$ zeros). To prove the claim, note that
\begin{equation*}
A = (H+1)^k(H-1)^kH^{n-2k},
\end{equation*}
and
\begin{align*}
B &= (-H+1)^k(-H-1)^k(-H)^{n-2k}\\
&= (-1)^k(H-1)^k(-1)^k(H+1)^kH^{n-2k}(-1)^{n-2k}\\
&= (-1)^n(H+1)^k(H-1)^kH^{n-2k}\\
&= -A \qedhere
\end{align*}
\end{proof}

Last we prove~\ref{lem:ii}. Suppose that $C$ is unbalanced and that $d$ contains more 1 entries than $-1$ entries (the opposite case will follow by symmetry). Also suppose that there is exactly one more 1 than $-1$ (since we are showing a lower bound on the value, and the value does not decrease if we increase some entries in the matrix, this will imply the other cases). Let the number of 1 entries be $k$. Suppose for contradiction that $v < \frac{1}{n^3}$.

Suppose that the column player plays some action with probability $p \notin [\frac{1}{n}-\frac{1}{n^2}, \frac{1}{n}+\frac{1}{n^2}]$. Then, by~\ref{lem:iii}, the row player can obtain payoff $\frac{H}{n^2(n-1)^2}-1\ge n-1 \ge 1 > v$, so no such strategy is an equilibrium strategy. Therefore the column player plays every action $i$ with probability $p_i \in  [\frac{1}{n}-\frac{1}{n^2}, \frac{1}{n}+\frac{1}{n^2}]$. There exists some response for the row player such that they receive payoff at least $\frac{1}{n^3}$. In particular, assuming in the worst case that the column player puts maximum probability on the columns with $d_i=-1$ and minimum probability on those with $d_i=1$, playing each row with equal probability gives payoff 
\begin{align*}
k\frac{1}{n}(\frac{1}{n}-\frac{1}{n^2}) - (k-1)\frac{1}{n}(\frac{1}{n}+\frac{1}{n^2}) &= \frac{1}{n^2}-\frac{2k-1}{n^3}\\
&\ge \frac{1}{n^2}-\frac{n-1}{n^3} = \frac{1}{n^3} \qedhere
\end{align*}

\end{proof}

\subsection{Proof of Theorem~\ref{thm:matrix-nec}}

	Membership in coNP is straightforward. To verify a ``no'' instance, it suffices to guess a completion of the game in which the action is not essential.
		
	For hardness, we give a reduction from \textsc{SetCover}. An instance of \textsc{SetCover} is given by a collection $\{ S_1, \ldots, S_n \}$ of subsets of a universe $U$, and an integer $k$; the question
        is whether we can cover $U$ using only $k$ of the subsets. 
We may assume that $k$ is odd (it is always possible to add a singleton
subset with an element not covered by anything else and increase $k$ by $1$). Define an incomplete matrix game $M$ where the row player has $3n-k+1$ actions, and the column player has $2n-k+|U|$ actions. The row player's actions are given by $\{ S_{i,j} \; : \; i \in [n], j \in [2] \} \cup \{ x_i : i \in [n-k]\} \cup \{ r_* \}$.

		\begin{figure} 
		\renewcommand{\arraystretch}{1.8}
		\begin{center}
\footnotesize
\setlength{\tabcolsep}{0.19em}
\scalebox{1.2}{ %
\begin{tabular}{r|c|c|c|c|c||c|c|c|c|c|}
		
		    		  \multicolumn{1}{c}{} & \multicolumn{1}{c}{$c_1$} & \multicolumn{1}{c}{$c_2$} & \multicolumn{1}{c}{$c_3$} & \multicolumn{1}{c}{$c_4$} & \multicolumn{1}{c}{$c_5$} & \multicolumn{1}{c}{$s_1$} & \multicolumn{1}{c}{$s_2$} & \multicolumn{1}{c}{$s_3$} & \multicolumn{1}{c}{$s_4$} & \multicolumn{1}{c}{$s_5$}  \\ \cline{2-11}
		    	    $S_{1,1}$ & $0$ & $\phantom{\scalebox{0.75}[1.0]{\( - \)}}H$ & $\scalebox{0.75}[1.0]{\( - \)}H$ & $\phantom{\scalebox{0.75}[1.0]{\( - \)}}H$ & \phantom{h} $\scalebox{0.75}[1.0]{\( - \)}H$ \phantom{h} & $y$ & $y$ & $0$ & $0$ & $0$  \\ \cline{2-11}
		    	    $S_{1,2}$ & $\{\scalebox{0.75}[1.0]{\( - \)}1, 1 \}$ & $\phantom{\scalebox{0.75}[1.0]{\( - \)}}H$ & $\scalebox{0.75}[1.0]{\( - \)}H$ & $\phantom{\scalebox{0.75}[1.0]{\( - \)}}H$ & $\scalebox{0.75}[1.0]{\( - \)}H$ & $x$ & $x$ & $0$ & $0$ & $0$  \\  \cline{2-11}
		    	    $S_{2,1}$ & $\scalebox{0.75}[1.0]{\( - \)}H$ & $0$ & $\phantom{\scalebox{0.75}[1.0]{\( - \)}}H$ & $\scalebox{0.75}[1.0]{\( - \)}H$ & $\phantom{\scalebox{0.75}[1.0]{\( - \)}}H$ & $0$ & $0$ & $y$ & $y$ & $0$ \\          \cline{2-11}
		    	    $S_{2,2}$ & $\scalebox{0.75}[1.0]{\( - \)}H$ & $\{\scalebox{0.75}[1.0]{\( - \)}1, 1 \}$ & $\phantom{\scalebox{0.75}[1.0]{\( - \)}}H$ & $\scalebox{0.75}[1.0]{\( - \)}H$ & $\phantom{\scalebox{0.75}[1.0]{\( - \)}}H$ & $0$ & $0$ & $x$ & $x$ & $0$\\ \cline{2-11}
		    	    $S_{3,1}$ & $\phantom{\scalebox{0.75}[1.0]{\( - \)}}H$ & $\scalebox{0.75}[1.0]{\( - \)}H$ & $0$ & $\phantom{\scalebox{0.75}[1.0]{\( - \)}}H$ & $\scalebox{0.75}[1.0]{\( - \)}H$ & $0$ & $0$ & $y$ & $0$ & $y$\\          \cline{2-11}
		    	    $S_{3,2}$ & $\phantom{\scalebox{0.75}[1.0]{\( - \)}}H$ & $\scalebox{0.75}[1.0]{\( - \)}H$ & $\{\scalebox{0.75}[1.0]{\( - \)}1, 1 \}$ & $\phantom{\scalebox{0.75}[1.0]{\( - \)}}H$ & $\scalebox{0.75}[1.0]{\( - \)}H$ & $0$ & $0$ & $x$ & $0$ & $x$ \\ \cline{2-11}
		    	    $S_{4,1}$ & ${\scalebox{0.75}[1.0]{\( - \)}}H$ & $\phantom{\scalebox{0.75}[1.0]{\( - \)}}H$ & $\scalebox{0.75}[1.0]{\( - \)}H$ & $0$ & $\phantom{\scalebox{0.75}[1.0]{\( - \)}}H$  & $0$ & $y$ & $0$ & $y$ & 0 \\          \cline{2-11}
		    	    $S_{4,2}$ & ${\scalebox{0.75}[1.0]{\( - \)}}H$ & $\phantom{\scalebox{0.75}[1.0]{\( - \)}}H$ & $\scalebox{0.75}[1.0]{\( - \)}H$ & $\{\scalebox{0.75}[1.0]{\( - \)}1, 1 \}$ & $\phantom{\scalebox{0.75}[1.0]{\( - \)}}H$ &  $0$ & $x$ & $0$ & $x$ & $0$\\ \cline{2-11}
		    		$x_1$     & $\phantom{\scalebox{0.75}[1.0]{\( - \)}}H$ & $\scalebox{0.75}[1.0]{\( - \)}H$ & $\phantom{\scalebox{0.75}[1.0]{\( - \)}}H$ & $\phantom{\scalebox{0.75}[1.0]{\( - \)}}H$ & $\scalebox{0.75}[1.0]{\( - \)}1$ & $0$ & $0$ & $0$ & $0$ & $0$  \\ \hhline{~|=|=|=|=|=#-|-|-|-|-|}
\multicolumn{1}{c|}{$r_*$} & \multicolumn{1}{c|}{$-v$} & \multicolumn{1}{c|}{$-v$} & \multicolumn{1}{c|}{$-v$} & \multicolumn{1}{c|}{$-v$} & \multicolumn{1}{c|}{$-v$} & \multicolumn{1}{c|}{G} & \multicolumn{1}{c|}{G} & \multicolumn{1}{c|}{G} & \multicolumn{1}{c|}{G} & \multicolumn{1}{c|}{G} 

\\       \cline{2-11}
		  \end{tabular}
		  }
		\caption{The incomplete matrix game $M$ used in the proof of \thmref{thm:matrix-nec} for the \textsc{SetCover} instance given by $|U|=5$, $n=4$, $k=3$, $S_1=\{s_1, s_2 \}$, $S_2 = \{s_3, s_4 \}$, $S_3=\{s_3, s_5 \}$, and $S_4 = \{ s_2, s_4 \}$. $L$ lies in the top left, indicated by double lines.
		}
		\label{fig:example}
		\end{center}
		\end{figure}

                Let $L$ denote the restriction of the game to the first
                $N := 2n-k$ columns and $3n-k$ rows. $N$ is odd by assumption that $k$ is odd. We denote the column player's actions in
                this part of the game by $\{ c_1, \ldots, c_{2n-k} \}$. We
                set $m(S_{i,1},c_i)=0$ and $m(S_{i,2},c_i)=\{ -1, 1 \}$ for all $i \in [n]$ and
                $m(x_{i}, c_{n+i})=-1$ for all $i \in [n-k]$. We fill in the remaining entries with
                $H$ and $-H$, where $H = (N-1)^2N^3$, so that $L$
                acts as an \cyc game of size $N$ when exactly one of the actions
                $S_{i,1}$ and $S_{i,2}$ is removed for each~$i$.
	
Let $p_{min}=\frac{1}{N}-\frac{2}{N^3}$ and $p_{max}=\frac{1}{N}+\frac{2}{N^3}$. Here, we denote the columns by $\{ s_1, \ldots, s_{|U|} \}$. For column $s_j$, we set the payoff to be $y = \frac{2n}{p_{min}} >0$ in row $S_{i,1}$ if $s_j \in S_i$, and 0 otherwise. We set the entry in row $S_{i,2}$ to be $x = -\frac{2}{p_{max}}<0$ if $s_j \in S_i$ and 0 otherwise. For row $r_*$, set the entry for each of the first $N$ columns to be $-v \coloneqq \frac{-1}{n^4}$ and all other entries are $G = \frac{2H}{\epsilon^*}$, where $\epsilon^* \coloneqq \frac{1}{N^2(y+x)+1}$. \figref{fig:example} illustrates this construction for a small instance of set cover.
	
	Observe that in any completion $M'=(m'(\cdot,\cdot))$ of $M$, if the column player plays
        only actions from $L$, then for each pair $(S_{i,1},
        S_{i,2})$, the row player will play at most one of them with
        positive probability in equilibrium, since one of the rows will
        weakly dominate the other in $L$ and the column player's strategy will
        have full support in $L$ (see \lemref{balanced}). Also note that on any pair of rows $(S_{i,1}, S_{i,2})$, the total probability mass at an equilibrium in which $r_*$ is not played must lie within the interval $[p_{min},p_{max} ]$. If not, then Lemma~\ref{balanced} says that the column player can obtain payoff at least 1, which violates the value of $M$ (which is at least $-v>-1$ for any completion $M'$). 
	Intuitively, setting the entry $m'(S_{i,2},c_i)$ to $-1$
        corresponds to choosing $S_i$ and setting
        $m'(S_{i,2},c_i)$ to $1$ corresponds to not choosing $S_i$.
		We show that $r_* \in ES(M')$ if, and only if, $M'$ does not correspond to a set cover of size $k$.
	
	First, suppose that we complete the game in a way that corresponds
        to a set cover of size $k$. We show that an equilbrium for $L$ is in fact a quasi-strict equilibrium of the completed game $M'$, therefore excluding $r_*$ from $ES(M')$. If the column player
        plays only columns from $L$, then we have already observed that $M$
        behaves as an \cyc game from the perspective of the row player. By
        \lemref{balanced}, the value of $L$ is zero since the row player has an equal number of (undominated) actions with a $1$ and a $-1$ on the diagonal ($n-k$ of each). $r_*$ is therefore not a best response for the row player. Now let us examine the
        payoff for the column player from playing any action not in~$L$. For any action $s_i$ there is some chosen set $S_j$ that
        covers element $s_i$ and for which there is a probability at least $p_{min}$ that the row player plays $S_{j,1}$, and a probability at most $p_{max}$ that the row player plays $S_{l,2}$ for any $l \not= j$ (of which there are at most $n-1$ possibilities). Thus the column player's payoff for playing $s_i$ is at most $-(n-1)p_{max}x-p_{min}y = -2$, so they will not play $s_i$.

Now suppose that we complete the game in a way that does not correspond to
a set cover of size $k$. There are three cases: 
(1) we set $k$ payoffs to $-1$ but they do not correspond to a set cover; 
(2) we set too many unspecified payoffs to $-1$; or 
(3) we set too few unspecified payoffs to $-1$. 
We show that in each case the essential set contains $r_*$
(or---another possibility in the third case---a set cover in fact exists).
	
\textbf{Case 1:} Exactly $k$ unspecified entries are set to $-1$, but some element $s_i \in U$ is not covered. Suppose for the sake of contradiction that there exists a quasi-strict equilibrium which does not contain $r_*$ in its support. We show that the column player puts at least $\epsilon^*$ probability on columns in $R$. Suppose this was not the case. Then, in particular, column $s_i$ is played with probability less than $\epsilon^*$.  Let $S_j \ni s_i$. We distinguish two cases, and derive a contradiction in each.

\emph{Case 1a:} Column $c_j$ is played with probability at most $\frac{1-\epsilon^*}{N^2}$, which is a distance of more than $\frac{1-\epsilon^*}{N^2}$ from $\frac{1-\epsilon^*}{N}$. Therefore the row player can obtain an expected payoff of at least $\frac{H}{N^2(N-1)^2}-1 = N-1$ whenever the column player plays an action from $L$. Even if the column player gets payoff $x$ every time they play an action from $R$ ($x$ is the best possible payoff for the column player in $R$), the total payoff for the row player is at least
\begin{align*}
(1- \epsilon^*)(N-1)-\epsilon^* x &= (1 - \epsilon^*)(N-1) - \frac{x}{N^2(x+y)+1}\\ & = N-1-\epsilon^*(N+x-1) \\ &> 0 \text,
\end{align*}
which violates the value of the game (which is at most zero, since the column player can guarantee themselves zero by playing only columns from $L$). 

\emph{Case 1b:} Column $c_j$ is played with probability greater than $\frac{1-\epsilon^*}{N^2}$. Then, row $S_{i,2}$ is strictly better than $S_{i,1}$ for the row player, since the difference in payoff between the two rows is greater than
\begin{align*}
\frac{1-\epsilon^*}{N^2}-(x+y)\epsilon^* = 0
\end{align*}
by the choice of $\epsilon^*$. Therefore for every set $S_j$ containing element $s_i$, row $S_{j,1}$ is never played. Therefore the payoff for the column player for playing column $s_i$ is at least $-xp_{min}= 2$, which also violates the value of the game.

Combining these two cases, we conclude that the column player places at least $\epsilon^*$ probability on actions in $R$. But then the row player could obtain a payoff of at least $\epsilon^*G-v = 2H-v>H$ by playing $r_*$, which is a best response to the column player's strategy. This contradicts that $r_*$ is not played at equilibrium.

                 \textbf{Case 2:} More than $k$ entries, say $\ell>k$ entries, are set to $-1$. Note
                 that the value of $M'-r_*$ will be negative; indeed by Lemma~\ref{balanced}, the column player can guarantee herself a payoff of at least $\frac{1}{n^3}$ by playing only columns from $L$. Since $\frac{-1}{n^3} < -v$, the row player will play the pure strategy $r_*$.

                 \textbf{Case 3:} Fewer than $k$ unspecified entries are set to
                 $-1$. There are two possibilities. First, the way of
                 setting the entries corresponds to a set cover. In this
                 case there is necessarily a set cover of size $k$ as well,
                 meaning that $M$ is a ``yes'' instance. Second, the payoffs do
                 not correspond to a set cover. Then some element $s_i$ is uncovered and, by the same argument
                 as for Case 1, the row player puts non-zero probability on action $r_*$.

Together, this shows that $r_*$ is a necessary equilibrium action if and only if there does not exist a set cover of size~$k$.

\subsection{Proof of Theorem~\ref{thm:matrix-pos}}

	Membership in NP is straightforward, as we can guess a completion $M'$ and compute $\es(M')$.

	For hardness, we again give a reduction from \textsc{SetCover}. Given an instance of \textsc{SetCover}, we construct a similar game to the game used in the necessary equilibrium action proof. However, instead of having $n-k$ rows $x_i$, we have $k$ such rows, and choosing a set will now correspond to setting the correspond entry to $1$ instead of $-1$. Note that $L$ now consists of $2n+k$ rows and $N := n+k$ columns. We will assume that $n+k$ is odd (if not, simply add an additional set $S_{n+1}$ containing no elements). We also swap every instance of $x$ and $y$ and we replace $r_*$ with a new row, call it $r'$, with payoff entries 0 for the first $N$ columns and $-G$ for the others.
See \figref{fig:example2} for an example.

		\begin{figure*} 
		\renewcommand{\arraystretch}{1.8}
		\begin{center}
\footnotesize
\setlength{\tabcolsep}{0.19em}
\scalebox{1.2}{
\begin{tabular}{r|c|c|c|c|c|c|c||c|c|c|c|c|}
		
		    		  \multicolumn{1}{c}{} & \multicolumn{1}{c}{$c_1$} & \multicolumn{1}{c}{$c_2$} & \multicolumn{1}{c}{$c_3$} & \multicolumn{1}{c}{$c_4$} & \multicolumn{1}{c}{$c_5$} & \multicolumn{1}{c}{$s_1$} & \multicolumn{1}{c}{$s_2$} & \multicolumn{1}{c}{$s_3$} & \multicolumn{1}{c}{$s_4$} & \multicolumn{1}{c}{$s_5$}  \\ \cline{2-13}
		    	    $S_{1,1}$ & $0$ & $\phantom{\scalebox{0.75}[1.0]{\( - \)}}H$ & $\scalebox{0.75}[1.0]{\( - \)}H$ & $\phantom{\scalebox{0.75}[1.0]{\( - \)}}H$ & \phantom{h} $\scalebox{0.75}[1.0]{\( - \)}H$ \phantom{h} & $\phantom{\scalebox{0.75}[1.0]{\( - \)}}H$ & \phantom{h} $\scalebox{0.75}[1.0]{\( - \)}H$ \phantom{h} & $x$ & $x$ & $0$ & $0$ & $0$  \\ \cline{2-13}
		    	    $S_{1,2}$ & $\{\scalebox{0.75}[1.0]{\( - \)}1, 1 \}$ & $\phantom{\scalebox{0.75}[1.0]{\( - \)}}H$ & $\scalebox{0.75}[1.0]{\( - \)}H$ & $\phantom{\scalebox{0.75}[1.0]{\( - \)}}H$ & $\scalebox{0.75}[1.0]{\( - \)}H$ & $\phantom{\scalebox{0.75}[1.0]{\( - \)}}H$ & $\scalebox{0.75}[1.0]{\( - \)}H$ & $y$ & $y$ & $0$ & $0$ & $0$  \\  \cline{2-13}
		    	    $S_{2,1}$ & $\scalebox{0.75}[1.0]{\( - \)}H$ & $0$ & $\phantom{\scalebox{0.75}[1.0]{\( - \)}}H$ & $\scalebox{0.75}[1.0]{\( - \)}H$ & $\phantom{\scalebox{0.75}[1.0]{\( - \)}}H$ & $\scalebox{0.75}[1.0]{\( - \)}H$ & $\phantom{\scalebox{0.75}[1.0]{\( - \)}}H$ & $0$ & $0$ & $x$ & $x$ & $0$ \\          \cline{2-13}
		    	    $S_{2,2}$ & $\scalebox{0.75}[1.0]{\( - \)}H$ & $\{\scalebox{0.75}[1.0]{\( - \)}1, 1 \}$ & $\phantom{\scalebox{0.75}[1.0]{\( - \)}}H$ & $\scalebox{0.75}[1.0]{\( - \)}H$ & $\phantom{\scalebox{0.75}[1.0]{\( - \)}}H$ & $\scalebox{0.75}[1.0]{\( - \)}H$ & $\phantom{\scalebox{0.75}[1.0]{\( - \)}}H$ & $0$ & $0$ & $y$ & $y$ & $0$\\ \cline{2-13}
		    	    $S_{3,1}$ & $\phantom{\scalebox{0.75}[1.0]{\( - \)}}H$ & $\scalebox{0.75}[1.0]{\( - \)}H$ & $0$ & $\phantom{\scalebox{0.75}[1.0]{\( - \)}}H$ & $\scalebox{0.75}[1.0]{\( - \)}H$ & $\phantom{\scalebox{0.75}[1.0]{\( - \)}}H$ & $\scalebox{0.75}[1.0]{\( - \)}H$ & $0$ & $0$ & $x$ & $0$ & $x$\\          \cline{2-13}
		    	    $S_{3,2}$ & $\phantom{\scalebox{0.75}[1.0]{\( - \)}}H$ & $\scalebox{0.75}[1.0]{\( - \)}H$ & $\{\scalebox{0.75}[1.0]{\( - \)}1, 1 \}$ & $\phantom{\scalebox{0.75}[1.0]{\( - \)}}H$ & $\scalebox{0.75}[1.0]{\( - \)}H$ & $\phantom{\scalebox{0.75}[1.0]{\( - \)}}H$ & $\scalebox{0.75}[1.0]{\( - \)}H$ & $0$ & $0$ & $y$ & $0$ & $y$ \\ \cline{2-13}
		    	    $S_{4,1}$ & ${\scalebox{0.75}[1.0]{\( - \)}}H$ & $\phantom{\scalebox{0.75}[1.0]{\( - \)}}H$ & $\scalebox{0.75}[1.0]{\( - \)}H$ & $0$ & $\phantom{\scalebox{0.75}[1.0]{\( - \)}}H$ & $\scalebox{0.75}[1.0]{\( - \)}H$ & $\phantom{\scalebox{0.75}[1.0]{\( - \)}}H$ & $0$ & $x$ & $0$ & $x$ & 0 \\          \cline{2-13}
		    	    $S_{4,2}$ & ${\scalebox{0.75}[1.0]{\( - \)}}H$ & $\phantom{\scalebox{0.75}[1.0]{\( - \)}}H$ & $\scalebox{0.75}[1.0]{\( - \)}H$ & $\{\scalebox{0.75}[1.0]{\( - \)}1, 1 \}$ & $\phantom{\scalebox{0.75}[1.0]{\( - \)}}H$ & $\scalebox{0.75}[1.0]{\( - \)}H$ & $\phantom{\scalebox{0.75}[1.0]{\( - \)}}H$ &  $0$ & $y$ & $0$ & $y$ & $0$\\ \cline{2-13}
		    		$x_1$     & $\phantom{\scalebox{0.75}[1.0]{\( - \)}}H$ & $\scalebox{0.75}[1.0]{\( - \)}H$ & $\phantom{\scalebox{0.75}[1.0]{\( - \)}}H$ & $\scalebox{0.75}[1.0]{\( - \)}H$ & $-1$ & $\phantom{\scalebox{0.75}[1.0]{\( - \)}}H$ & $\scalebox{0.75}[1.0]{\( - \)}H$ & $0$ & $0$ & $0$ & $0$ & $0$  \\ \cline{2-13}
					$x_2$     & $\scalebox{0.75}[1.0]{\( - \)}H$ & $\phantom{\scalebox{0.75}[1.0]{\( - \)}}H$ & $\scalebox{0.75}[1.0]{\( - \)}H$ & $\phantom{\scalebox{0.75}[1.0]{\( - \)}}H$ & $\scalebox{0.75}[1.0]{\( - \)}H$ & $-1$ & $\phantom{\scalebox{0.75}[1.0]{\( - \)}}H$ & $0$ & $0$ & $0$ & $0$ & $0$  \\ \cline{2-13}
					$x_1$     & $\phantom{\scalebox{0.75}[1.0]{\( - \)}}H$ & $\scalebox{0.75}[1.0]{\( - \)}H$ & $\phantom{\scalebox{0.75}[1.0]{\( - \)}}H$ & $\scalebox{0.75}[1.0]{\( - \)}H$ & $\phantom{\scalebox{0.75}[1.0]{\( - \)}}H$ & $\scalebox{0.75}[1.0]{\( - \)}H$ & $-1$ & $0$ & $0$ & $0$ & $0$ & $0$  \\ \hhline{~|=|=|=|=|=|=|=#-|-|-|-|-|}
\multicolumn{1}{c|}{$r'$} & \multicolumn{1}{c|}{$0$} & \multicolumn{1}{c|}{$0$} & \multicolumn{1}{c|}{$0$} & \multicolumn{1}{c|}{$0$} & \multicolumn{1}{c|}{$0$} & \multicolumn{1}{c|}{$0$} & \multicolumn{1}{c|}{$0$} & \multicolumn{1}{c|}{-G} & \multicolumn{1}{c|}{-G} & \multicolumn{1}{c|}{-G} & \multicolumn{1}{c|}{-G} & \multicolumn{1}{c|}{-G} 

\\       \cline{2-13}
		  \end{tabular}
		  }
		\caption{The incomplete matrix game $M$ used in the proof of \thmref{thm:matrix-pos} for the \textsc{SetCover} instance given by $|U|=5$, $n=4$, $k=3$, $S_1=\{s_1, s_2 \}$, $S_2 = \{s_3, s_4 \}$, $S_3=\{s_3, s_5 \}$, and $S_4 = \{ s_2, s_4 \}$. $L$ lies in the top left, indicated by double lines.
		}
		\label{fig:example2}
		\end{center}
		\end{figure*}
		
 We will show that $r' \in ES(M')$ if and only if $M'$ corresponds to a set cover of size $k$.

	Suppose that the unspecified entries are set corresponding to a set cover. Then by the same argument as in the proof of Theorem~\ref{thm:matrix-nec}, there is an equilibrium lying completely within $L$ which gives payoff exactly zero to both players. Note that every column outside of $L$ gives strictly negative payoff to the column player. However, this is not a quasi-strict equilibrium because the row player has a best response, row $r'$, which is played with zero probability. Observe that there is necessarily some $\epsilon$ probability which the row player can place on $r'$ such that the column player is strictly better off not to change her strategy, and we have a quasi-strict equilibrium. Therefore, $r' \in ES(M')$ for such a completion.

	Now suppose that the unspecified entries are set in a way that does not correspond to a set cover of size~$k$. There are three cases, again mirroring the proof of Theorem~\ref{thm:matrix-nec}. 
	
	\textbf{Case 1:} Exactly $k$ unspecified entries are set to $1$, but some element $s_i \in U$ is uncovered. Consider a quasi-strict equilibrium of $M'-r'$. By a similar argument as in the proof of Theorem~\ref{thm:matrix-nec}, column $s_i$ is played with at least $\epsilon^*$ probability. Therefore, the row player would obtain a payoff of at most
	\begin{equation*}
		-\epsilon^*G = -2H
	\end{equation*}
by playing $r'$ in response to this strategy, which is less than all other entries in the matrix. Therefore the quasi-strict equilibrium of $M'-r'$ is in fact a quasi-strict equilibrium of $M'$, and so $r' \notin ES(M')$.
	
	\textbf{Case 2:} More than $k$ entries are set to $1$. Then $r'$ is strictly dominated by the row player's equilibrium strategy for $L$. In this strategy, the row player gets strictly positive payoff against any column in $L$ (as there are more $1$ entries than $-1$ entries in $L$), and, at worst, $x>-G$ payoff against columns outside of $L$.
	
	\textbf{Case 3:} Fewer than $k$ entries are set to $1$. Then either the completion corresponds to a set cover, in which case there is a set cover of size $k$ also and the instance is a ``yes'' instance, or there is some element uncovered. For the latter case, $r' \notin ES(M')$ by the same argument as for Case 1.
	
	So $r'$ is a possible equilibrium action if, and only if, there exists a set cover of size $k$.

\section{Omitted Proofs from \secref{sec:weak}}

\subsection{Proof of \thmref{thm:weak-pos}}

	Membership in NP is straightforward as we can guess a completion $W'$ of the incomplete weak tournament game and verify that the action is in $\es(W')$.

	For NP-hardness, we provide a reduction from \textsc{Sat}. 
	Let $\varphi = C_1 \land \ldots \land C_m$ be a Boolean formula in conjunctive normal form over a finite set $V= \{v_1, \ldots, v_n\}$ of variables. 

	We define an incomplete weak tournament $W_\varphi=(A,\succ)$ as follows.
	The set $A$ of vertices is given by $A =  \cup_{i=1}^n X_i \cup \set{c_1, \ldots, c_m} \cup \set{d}$, where $X_i = \set{x_i^1,\ldots,x_i^6}$ for all $i \in [n]$. Vertex $c_j$ corresponds to clause $C_j$ and the set $X_i$ corresponds to variable~$v_i$.
	
	Within each set $X_i$, there is a cycle 
$x_i^1 \succ x_i^2 \succ x_i^3 \succ x_i^4 \succ x_i^5 \succ x_i^6 \succ x_i^1$ and an unspecified edge between $x_i^1$ and $x_i^4$. 
	If variable $v_i$ occurs as a positive literal in clause $C_j$, we
        have edges $c_j \succ x_i^3$ and $x_i^5 \succ c_j$. If variable
        $v_i$ occurs as a negative literal in clause $C_j$, we have edges
        $c_j \succ x_i^6$ and $x_i^2 \succ c_j$. Moreover, there is an edge from $c_j$ to $d$ for every $j \in [m]$. For all pairs of vertices for which neither an edge has been defined, nor an unspecified edge declared, we have a tie.  See \figref{fig:construction} for an example.

	We make two observations about the weak tournament~$W_\varphi$. 
	
	\setcounter{observation}{0}
	\begin{observation} \label{obs1}
		For each completion $W$ of $W_\varphi$, we have $d \in \es(W)$ if and only if $\es(W) \cap \set{c_1, \ldots, c_m} = \emptyset$.
	\end{observation}
	
	\begin{observation} \label{obs2}
		For each $i$, there is exactly one unspecified edge within (and thus exactly three possible completions of) the subtournament $W_\varphi|_{X_i}$. If the we set a tie between $x_i^1$ and $x_i^4$, then all Nash equilibria $p$ of the subtournament $W_\varphi|_{X_i}$ satisfy $p(x_i^1)=p(x_i^3)=p(x_i^5)$ and $p(x_i^2)=p(x_i^4)=p(x_i^6)$. If we set $x_i^1 \succ x_i^4$, then every quasi-strict equilibrium $p$ of $W_\varphi|_{X_i}$ satisfies $p(x_i^2)=p(x_i^4)=p(x_i^6)=0$, $p(x_i^5)>p(x_i^1)>p(x_i^3)>0$, and $p(x_i^1)+p(x_i^3)>p(x_i^5)$. 
	By symmetry, setting $x_i^4 \succ x_i^1$ results in quasi-strict equilibria $p$ with $p(x_i^1)=p(x_i^3)=p(x_i^5)=0$, $p(x_i^4)>p(x_i^6)>p(x_i^2)>0$, and $p(x_i^2)+p(x_i^6)>p(x_i^4)$. 
	\end{observation}
	
	We now show hat $\varphi$ is satisfiable if and only if there is a completion $W$ of $W_\varphi$ with $d \in \es(W)$. For the direction from left to right, let $\alpha$ be a satisfying assignment and define the completion $W^\alpha$ of $W_\varphi$ as follows: if $v_i$ is set to ``true'' under $\alpha$, add edge $x_i^1 \succ x_i^4$; otherwise, add edge $x_i^4 \succ x_i^1$.
	The following lemma shows that $d \in \es(W^\alpha)$. 
	
	\begin{lemma} \label{lem:union}
		If $\alpha$ is a satisfying assignment, then $\es(W^\alpha) = \cup_{i \in [n]} \es(W|_{X_i}) \cup \set{d}$.
	\end{lemma}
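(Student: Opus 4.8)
The plan is to exhibit a \emph{single} quasi-strict equilibrium of $W^\alpha$ whose support is exactly $\bigcup_{i}\es(W|_{X_i})\cup\set{d}$. Since all quasi-strict equilibria of a matrix game share the same support and the essential set equals that common support \citep{BrFi08a}, producing one such equilibrium establishes the claimed equality in a single stroke, with no need to argue the two inclusions separately.

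First I would assemble the ``local'' equilibria into a global strategy. For each $i$, let $p_i$ be a quasi-strict equilibrium of the completed gadget $W^\alpha|_{X_i}$; by \obsref{obs2} its support is $\set{x_i^1,x_i^3,x_i^5}$ or $\set{x_i^2,x_i^4,x_i^6}$ according to whether $v_i$ is true or false, with the strict probability orderings stated there. Put $\sigma=\frac1n\sum_i p_i$. The structural fact I would exploit is that the only non-tie edges of $W_\varphi$ run inside a single gadget, between some $c_j$ and specific gadget vertices, or from $c_j$ to $d$; in particular any two vertices in distinct gadgets are tied, and $d$ is tied with every gadget vertex. Consequently the payoff of $x_i^k$ against $\sigma$ equals $\frac1n$ times its in-gadget payoff against $p_i$ (all cross-gadget terms vanish), which is $0$ on the local support and strictly negative off it by quasi-strictness of $p_i$; and $d$ scores exactly $0$ against $\sigma$. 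As $W^\alpha$ is symmetric zero-sum its value is $0$, so each of these is a best response or a strict non-best-response, as appropriate.

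The crux is to show that every clause vertex $c_j$ earns a \emph{strictly negative} payoff against $\sigma$. Since $\alpha$ satisfies $C_j$, some literal of $C_j$ is true; say $v_i$ occurs positively and is true, so the local support is $\set{x_i^1,x_i^3,x_i^5}$ and $c_j$ beats $x_i^3$ but loses to $x_i^5$, contributing $\frac1n\bigl(p_i(x_i^3)-p_i(x_i^5)\bigr)<0$ by the ordering $p_i(x_i^5)>p_i(x_i^3)$ of \obsref{obs2} (the negative-literal case is analogous). Every other gadget $i'$ contributes exactly $0$: if $v_{i'}$ does not occur in $C_j$ then $c_j$ ties all of $X_{i'}$, and if $v_{i'}$ occurs but its literal is false under $\alpha$ then the two gadget vertices adjacent to $c_j$ lie outside the local support, so $c_j$ again ties every vertex $p_{i'}$ plays. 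Summing, $c_j$'s payoff against $\sigma$ is a sum of nonpositive terms with at least one strictly negative, hence $<0$. The careful bookkeeping of which gadget vertices fall inside or outside the support --- so that a satisfied literal strictly hurts $c_j$ while everything else is exactly neutral --- is the main obstacle, and it is precisely here that the orderings of \obsref{obs2} are indispensable.

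Finally I would perturb $\sigma$ toward $d$. Let $\tau=(1-\epsilon)\sigma+\epsilon\,\mathbf{1}_d$ for a small $\epsilon>0$. Against $\tau$, every local-support vertex and $d$ itself still score $0$ (they tie $d$), every off-support gadget vertex still scores strictly below $0$, and each $c_j$ scores $(1-\epsilon)\cdot(\text{strictly negative})+\epsilon\cdot 1$, which remains strictly negative once $\epsilon$ is smaller than the uniform positive gap established above --- this is the only place the edge $c_j\succ d$ could have helped $c_j$, and it is swamped. Thus $(\tau,\tau)$ is a Nash equilibrium in which exactly the vertices of $\bigcup_i\es(W|_{X_i})\cup\set{d}$ receive positive probability while all others are strict non-best-responses, so $\tau$ is quasi-strict with that support. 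Invoking once more that the essential set is the support of any quasi-strict equilibrium \citep{BrFi08a} gives $\es(W^\alpha)=\bigcup_{i\in[n]}\es(W|_{X_i})\cup\set{d}$, as desired.
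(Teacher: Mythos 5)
Your proof is correct and follows essentially the same route as the paper's: both exhibit a single quasi-strict equilibrium supported on $\bigcup_i \es(W^\alpha|_{X_i}) \cup \set{d}$, verify that each clause vertex earns strictly negative payoff because a satisfied literal contributes $p_i(x_i^3)-p_i(x_i^5)<0$ (resp.\ the negative-literal analogue), and invoke the common support of quasi-strict equilibria. The only difference is presentational: the paper writes down explicit weights $(3,0,2,0,4,0)/Z$ per gadget and $1/Z$ on $d$, which is precisely your $\tau=(1-\epsilon)\sigma+\epsilon\,\mathbf{1}_d$ for $\epsilon=1/(9n+1)$, whereas you argue qualitatively from \obsref{obs2} and a small-$\epsilon$ perturbation.
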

	
\begin{proof}
	Let $\alpha$ is a satisfying assignment. Define a probability distribution $p$ on $A$ as follows. ($Z$ is a normalizing factor that ensures $\sum_{a \in A} p(a)=1$.)
	\begin{itemize}
		\item $p(d)=\frac{1}{Z}$;
		\item $p(c_j)=0$ for all $j \in [m]$;
		\item if $x_i$ is set to ``true'' under $\alpha$, 
		\[ (p(x_i^1),\ldots,p(x_i^6)) = \left(\frac{3}{Z},0,\frac{2}{Z},0,\frac{4}{Z},0 \right) \text; \]
		\item if $x_i$ is set to ``false'' under $\alpha$, 
		\[ (p(x_i^1),\ldots,p(x_i^6)) = \left(0,\frac{4}{Z},0,\frac{3}{Z},0,\frac{2}{Z} \right) \text. \]
	\end{itemize}
We show that $p$ is a quasi-strict Nash equilibrium of the weak tournament game corresponding to $W^\alpha$. For a vertex $a \in A$, let $u(a,p)$ be the payoff for a player if she plays (pure) strategy $a$ and the other player plays (mixed) strategy $p$. (Since the game is symmetric, the roles of the players are interchangeable and the value of the game is $0$.)
It suffices to show that 
$u(a,p)=0$ for all $a \in \supp(p)$, and $u(a,p)<0$ for all $a \in A \setminus \supp(p)$.

\begin{itemize}
	\item $u(d,p)=0$ since $d$ has no edges to any other vertex in $\supp(p)$. 
	\item For $j \in [m]$, we have $u(c_j,p)=\frac{1}{Z} - t_j^\alpha (\frac{2}{Z}-\frac{4}{Z}) = \frac{1}{Z} (1-2t_j^\alpha)$, where $t_j^\alpha$ is the number of literals in clause $C_j$ that are set to ``true'' under $\alpha$. Since $\alpha$ satisfies $\varphi$, $t_j^\alpha \ge 1$ for all $j \in [m]$. Therefore, $u(c_j,p)<0$. 
	\item If $x_i$ is set to ``true'' under $\alpha$, we have
	\begin{itemize}
		\item $u(x_i^1,p) = u(x_i^3,p) = u(x_i^5,p) = 0$,
		\item $u(x_i^2,p) = \frac{2}{Z} - \frac{3}{Z} <0$,
		\item $u(x_i^4,p) = \frac{4}{Z} - \frac{3}{Z} - \frac{2}{Z} <0$, and
		\item $u(x_i^6,p) = \frac{3}{Z} - \frac{4}{Z} <0$.
	\end{itemize}
	\item If $x_i$ is set to ``false'' under $\alpha$, we have
	\begin{itemize}
		\item $u(x_i^2,p) = u(x_i^4,p) = u(x_i^6,p) = 0$,
		\item $u(x_i^1,p) = \frac{4}{Z} - \frac{3}{Z} - \frac{2}{Z} <0$,
		\item $u(x_i^3,p) = \frac{3}{Z} - \frac{4}{Z} <0$, and
		\item $u(x_i^5,p) = \frac{2}{Z} - \frac{3}{Z} <0$.
	\end{itemize}
\end{itemize}

Since all quasi-strict Nash-equilibria have the same support, we know that $\es(W^\alpha)=\supp(p) = \cup_{i \in [n]} \es(W^\alpha|_{X_i}) \cup \set{d}$. 
\end{proof}

	For the direction from right to left, let $W$ be a completion of $W_\varphi$ with $d \in \es(W)$. Define the assignment $\alpha$ by setting variable $v_i$ to ``true'' if $x_i^1 \succ x_i^4$ and to ``false'' if $x_i^4 \succ x_i^1$. If there is a tie between $x_i^1$ and $x_i^4$, we set the truth value of $v_i$ arbitrarily. We show that $\alpha$ satisfies $\varphi$.
	
	Since $d \in \es(W)$, we know by the first observation above that $c_j \notin \es(W)$ for all $j \in [m]$. Moreover, there are no edges between $d$ and $X_i$, or between $X_i$ and $X_j$ for $i \neq j$. Therefore, $\es(W) = \cup_{i \in [n]} \es(W^\alpha|_{X_i}) \cup \set{d}$.

	Let $p$ be a quasi-strict Nash equilibrium of the weak tournament game corresponding to $W$ and fix $j \in [m]$. As in the proof of \lemref{lem:union}, let $u(a,p)$ be the payoff for a player if she plays strategy $a$ and the other player plays strategy $p$. Since $c_j \notin \es(W)$, we have $u(c_j,p)<0$. The payoff $u(c_j,p)$ can be written as 
	\begin{align*}
		u(c_j,p) &= \sum_{a \in A} p(a) u(c_j,a) \\
			     &= p(d) u(c_j,d) + \sum_{i=1}^n \left(  \sum_{k=1}^6 p(x_i^k) u(c_j,x_i^k) \right)
	\end{align*}
	Since $u(c_j,d)=1$ and $u(c_j,p)<0$, there exists $i \in [n]$ such that $\sum_{k=1}^6 p(x_i^k) u(c_j,x_i^k) <0$. Recalling \obsref{obs2}, we see that the latter only holds if either $C_j$ contains~$x_i$ and $x_i$ is set to ``true'' under $\alpha$ (\ie $x_i^1 \succ x_i^4$ in $W$) or $C_j$ contains~$\lnot x_i$ and $x_i$ is set to ``false'' under $\alpha$ (\ie $x_i^4 \succ x_i^1$ in $W$). Therefore, $\alpha$ satisfies $C_j$.

\subsection{Proof of \thmref{thm:weak-nec}}

Membership in coNP is straightforward as we can guess a completion $W'$ of an incomplete weak tournament game $W$ and verify that the action is in $\es(W')$.

For hardness, we provide a reduction from \textsc{UnSat}. For a CNF formula $\varphi = C_1 \land \ldots \land C_m$ over $V= \{v_1, \ldots, v_n\}$, consider the $W_\varphi = (A,\succ)$ defined in the proof of \thmref{thm:weak-pos}. We construct a new incomplete weak tournament $\hat W_\varphi$ by adding an additional vertex $d'$ to $W_\varphi$. Formally $\hat W_\varphi = (A \cup \set{d'},\succ)$ with $d \succ d'$ and ties between $d'$ and all $a \in A \setminus \set{d}$. See \figref{fig:construction_nec} for an example. 

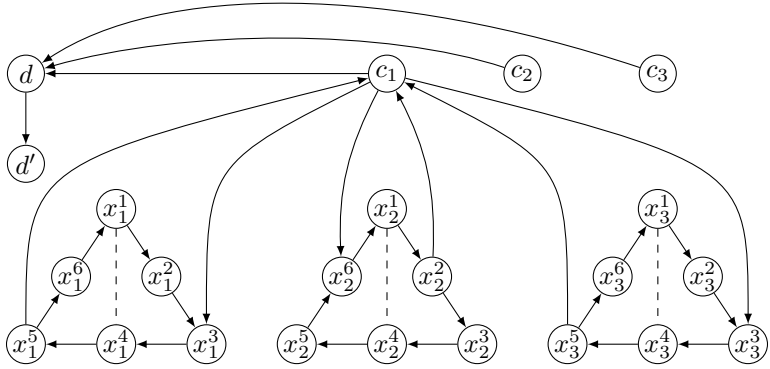
\begin{figure}[tb]
\centering
\begin{tikzpicture}[scale=1.2]
	\tikzstyle{every node}=[circle,draw,minimum size=1.4em,inner sep=0pt]

	\draw(-4,3.5) node(d){$d$};
	
	\draw(-4,2.5) node(d'){$d'$};

	\draw[-latex] (d) -- (d');

	\draw (0,3.5) node(c1){$c_1$} ++(1.5,0) node(c2){$c_2$} ++(1.5,0) node(c3){$c_3$};

		\draw[-latex] (c1) -- (d);
		\draw[-latex] (c2) .. controls  (0,4) and (-2.25,4) .. (d);
		\draw[-latex] (c3) .. controls  (0,4.5) and (-2.5,4.5) .. (d);

	\draw (-3,2) node(x11){$x_1^1$}
	++(0.5,-.75) node(x12){$x_1^2$}
	++(0.5,-.75) node(x13){$x_1^3$}
	++(-1,0)     node(x14){$x_1^4$}
	++(-1,0)     node(x15){$x_1^5$}
	++(0.5,.75)  node(x16){$x_1^6$};
	\foreach \x / \y in {x11/x12,x12/x13,x13/x14,x14/x15,x15/x16,x16/x11}
		{ \draw[-latex,] (\x) -- (\y); }
	\draw[dashed] (x11) -- (x14);

	\draw (0,2) node(x21){$x_2^1$}
	++(0.5,-.75) node(x22){$x_2^2$}
	++(0.5,-.75) node(x23){$x_2^3$}
	++(-1,0)     node(x24){$x_2^4$}
	++(-1,0)     node(x25){$x_2^5$}
	++(0.5,.75)  node(x26){$x_2^6$};
	\foreach \x / \y in {x21/x22,x22/x23,x23/x24,x24/x25,x25/x26,x26/x21}
		{ \draw[-latex,] (\x) -- (\y); }
	\draw[dashed] (x21) -- (x24);

	\draw (3,2) node(x31){$x_3^1$}
	++(0.5,-.75) node(x32){$x_3^2$}
	++(0.5,-.75) node(x33){$x_3^3$}
	++(-1,0)     node(x34){$x_3^4$}
	++(-1,0)     node(x35){$x_3^5$}
	++(0.5,.75)  node(x36){$x_3^6$};
	\foreach \x / \y in {x31/x32,x32/x33,x33/x34,x34/x35,x35/x36,x36/x31}
		{ \draw[-latex,] (\x) -- (\y); }
	\draw[dashed] (x31) -- (x34);

	\draw[-latex] (c1) .. controls (-2,2.5)  .. (x13);
	\draw[-latex] (x15) .. controls (-4,2.5)  .. (c1);

	\draw[-latex] (c1) to [bend right=15] (x26);
	\draw[-latex] (x22) to [bend right=15] (c1);

	\draw[-latex] (c1) .. controls (4,2.5)  .. (x33);
	\draw[-latex] (x35).. controls (2,2.5)  .. (c1);

\end{tikzpicture}
\caption{The weak tournament game $\hat W_\varphi$ for formula $\varphi= C_1 \wedge C_2 \wedge C_3$ with $C_1 = x_1 \lor \lnot x_2 \lor x_3$. Dashed lines indicate unspecified edges. For improved readability, edges connecting $c_2$ and $c_3$ to $X$ have been omitted.}
\label{fig:construction_nec}
\end{figure}

Let $W$ be a completion of $\hat W_\varphi$. Since the edge from $d$ to $d'$ is the only incoming edge of $d'$ in $W$ (and $d'$ does not have any outgoing edges), we have $d' \in \es(\hat W_\varphi)$ if and only if $d \notin \es(\hat W_\varphi)$. 
Furthermore, $\es(W) \cap A = \es(W|_A)$.

Therefore, $d'$ is a necessary $\es$ winner of $\hat W_\varphi$ if and only if $d$ is \emph{not} a possible winner of $W_\varphi$. In the proof of \thmref{thm:weak-pos}, we have shown that the latter is equivalent to $\varphi$ being unsatisfiable.

\subsection{Proof of \propref{prop:continuous}}

Let $W$ be a completion of $W_\varphi$. We have already observed that $d \in \es(W)$ if and only if $\es(W) \cap \set{c_1,\ldots,c_m} = \emptyset$ if and only if $\es(W)=\set{d} \cup \bigcup_{i \in [n]} \es(W|_{X_i})$. 	

We are therefore interested in the equilibria of the weak tournaments $W_\varphi|_{X_i}$ with $i \in [n]$. Fix $i \in [n]$. The incomplete weak tournament $W_\varphi|_{X_i}$ has exactly one unspecified edge, namely $m(x_i^1,x_i^4)=[-1,1]$. For $z \in [-1,1]$, let $W^z$ be the completion of $W_\varphi|_{X_i}$ that has $m(x_i^1,x_i^4)=z$ and, by symmetry, $m(x_i^4,x_i^1)=-z$. 

\begin{lemma}
Let $z \in [-1,1]$ and let $p$ be a quasi-strict Nash equilibrium of $W^z$. 
\begin{itemize}
	\item If $z=0$, then $p(x_i^3)-p(x_i^5)=0$ and $p(x_i^6)-p(x_i^2)=0$.
	
	\item If $z>0$, then $p(x_i^3)-p(x_i^5)<0$ and $p(x_i^6)-p(x_i^2)=0$.
	               
	\item If $z<0$, then $p(x_i^6)-p(x_i^2)<0$ and $p(x_i^3)-p(x_i^5)=0$.
\end{itemize}
\end{lemma}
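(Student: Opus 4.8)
The plan is to reduce the whole statement to two $z$-independent algebraic identities plus a determination of the support of the quasi-strict equilibria. First I would write out the $6\times 6$ payoff matrix $M$ of $W^z$ on the actions $x_i^1,\dots,x_i^6$: the six cycle edges $x_i^1\succ x_i^2\succ\cdots\succ x_i^6\succ x_i^1$ give the $\pm1$ entries, the chord gives $m(x_i^1,x_i^4)=z$ and $m(x_i^4,x_i^1)=-z$, and every other off-diagonal pair is a tie. Since $W^z$ is symmetric zero-sum, its value is $0$, and (restricting attention to a symmetric quasi-strict equilibrium $(p,p)$, as we may in a symmetric game) the relevant characterization is that, in the notation $u(a,p)$ of the proof of \lemref{lem:union}, $u(x_i^k,p)=0$ for $x_i^k\in\supp(p)$ and $u(x_i^k,p)<0$ for $x_i^k\notin\supp(p)$. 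By the cited fact that all quasi-strict equilibria share the same support \citep{BrFi08a}, the quantities in the statement are well defined and it suffices to analyze one equilibrium per case.

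The crucial observation is that actions $x_i^2,x_i^3,x_i^5,x_i^6$ never touch the chord $\set{x_i^1,x_i^4}$, so their best-response payoffs are independent of $z$: one computes $u(x_i^2,p)=p(x_i^3)-p(x_i^1)$, $u(x_i^6,p)=p(x_i^1)-p(x_i^5)$, $u(x_i^3,p)=p(x_i^4)-p(x_i^2)$, and $u(x_i^5,p)=p(x_i^6)-p(x_i^4)$. Adding the first pair and the second pair telescopes to the identities
\[ p(x_i^3)-p(x_i^5)=u(x_i^2,p)+u(x_i^6,p), \qquad p(x_i^6)-p(x_i^2)=u(x_i^3,p)+u(x_i^5,p). \]
Every summand on the right is $\le 0$ in any equilibrium, and equals $0$ exactly when its index lies in $\supp(p)$. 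Hence $p(x_i^3)=p(x_i^5)$ iff $\set{x_i^2,x_i^6}\subseteq\supp(p)$ (and $p(x_i^3)<p(x_i^5)$ otherwise), and symmetrically $p(x_i^6)=p(x_i^2)$ iff $\set{x_i^3,x_i^5}\subseteq\supp(p)$ (and $p(x_i^6)<p(x_i^2)$ otherwise).

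It then remains to pin down the support for each sign of $z$, which I would do by exhibiting a single quasi-strict equilibrium. For $z=0$ the uniform distribution gives $u(x_i^k,p)=0$ for every $k$, so it is quasi-strict with full support, forcing both differences to vanish. For $z>0$ I would check that a distribution supported on $\set{x_i^1,x_i^3,x_i^5}$ (with $p(x_i^2)=p(x_i^4)=p(x_i^6)=0$) is quasi-strict exactly when
\[ p(x_i^3)<p(x_i^1)<p(x_i^5)<p(x_i^3)+z\,p(x_i^1), \]
a system that is feasible precisely because $z>0$; since the support is then $\set{x_i^1,x_i^3,x_i^5}$, the equivalences above give $p(x_i^3)<p(x_i^5)$ and $p(x_i^6)=p(x_i^2)$. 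Symmetrically, for $z<0$ a distribution on $\set{x_i^2,x_i^4,x_i^6}$ is quasi-strict, the support is $\set{x_i^2,x_i^4,x_i^6}$, and the equivalences give $p(x_i^6)<p(x_i^2)$ and $p(x_i^3)=p(x_i^5)$. This matches the three claimed cases exactly.

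The main obstacle is this final step: producing the explicit quasi-strict equilibria for $z$ ranging over the open intervals and verifying that the required strict inequalities among $p(x_i^1),p(x_i^3),p(x_i^5)$ (respectively $p(x_i^2),p(x_i^4),p(x_i^6)$) are simultaneously satisfiable and hinge on the sign of $z$. This is the "setting the constants so the desired equilibrium properties hold" work; once the supports are known, the sign conclusions follow mechanically from the two telescoping identities, and notably the entire argument avoids \obsref{obs2} and is self-contained.
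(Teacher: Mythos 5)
Your proof is correct. Note that the paper itself states this lemma \emph{without} proof (the appendix section for \propref{prop:continuous} simply asserts it and moves on), so there is no ``paper proof'' to compare against; your argument supplies the missing details. The two ingredients you use are sound: the payoff computations $u(x_i^2,p)=p(x_i^3)-p(x_i^1)$, $u(x_i^6,p)=p(x_i^1)-p(x_i^5)$, $u(x_i^3,p)=p(x_i^4)-p(x_i^2)$, $u(x_i^5,p)=p(x_i^6)-p(x_i^4)$ are all correct (none of these four actions meets the chord, so they are indeed $z$-independent), the telescoping identities follow, and the sign dichotomy ``$\le 0$ always, $=0$ iff in the support'' is exactly what quasi-strictness plus the zero value of a symmetric zero-sum game gives. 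The support determination is also right: for $z>0$ the three quasi-strictness constraints for support $\set{x_i^1,x_i^3,x_i^5}$ reduce to $p(x_i^3)<p(x_i^1)<p(x_i^5)<p(x_i^3)+z\,p(x_i^1)$, which is feasible iff $z>0$ (e.g.\ the normalization of $(3,0,2,0,4,0)$ works for $z=1$, matching the explicit equilibrium in the proof of \lemref{lem:union}), and the relabelling $k\mapsto k+3 \bmod 6$ is an isomorphism between $W^z$ and $W^{-z}$ that handles the $z<0$ case; the equal-support property of quasi-strict equilibria then transfers the conclusion from the exhibited equilibrium to an arbitrary one. Your argument is in fact slightly stronger than what the paper records, since it also recovers \obsref{obs2} as the special cases $z\in\set{-1,0,1}$ rather than relying on it.
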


Therefore, setting $z$ to any value in $(0,1]$ has the same effect as setting $z$ to $1$, and setting $z$ to any value in $[-1,0)$ has the same effect as setting $z$ to $-1$. The proof of \propref{prop:continuous} thus proceeds exactly like the proofs of Theorems \thmref{thm:weak-pos} and \thmref{thm:weak-nec}.

\section{Omitted Proofs from \secref{sec:tournaments}}

\newcommand{\bip}{\mathit{NE}}

In this section, we prove Theorems \ref{thm:posBP} and \ref{thm:necBP}. 
We utilize the one-to-one correspondence between tournament games and directed graphs that are orientations of complete graphs (so-called \emph{tournaments}).

\subsection{Notation}

The following notation will be used throughout this section.
For a tournament $T=(A,\succ)$, a vertex $a \in A$, and a subset $B \subseteq A$ of vertices, we denote by $D_{B,\succ}(a)$ the \emph{dominion} of $a$ in $B$, \ie 
\[D_{B,\succ}(a)=\{\,b\in B\midd a \succ b\,\}\text{,}\] 
and by $\dom_{B,\succ}(a)$ the \emph{dominators} of $a$ in $B$, \ie 
\[\dom_{B,\succ}(a)=\{\,b\in B\midd b \succ a\,\}\text{.}\]
Whenever the dominance relation $\succ$ is known from the context or $B$ is the set of all vertices $A$, the respective subscript will be omitted to improve readability.
We further write $T|_B=(B,\{(a,b)\in B\times B\midd a\succ b\})$ for the restriction of~$T$ to~$B$.

\subsection{Tournament Graphs}
\label{sec:tournaments_app}

An important structural notion in the context of tournaments is that of a \emph{component}. A component is a subset of vertices that bear the same relationship to all vertices not in the set.
\begin{definition}
Let $T=(A,{\succ})$ be a tournament. A nonempty subset $B$ of $A$ is a \emph{component} of $T$ if for all $a\in A\setminus B$, either $B\succ \set{a}$ or $\set{a}\succ B$.  
\end{definition}

For a given tournament $\tilde{T}$, a new tournament~$T$ can be constructed by replacing each vertex with a component. %
\begin{definition}
Let $B_1,\dots,B_k$ be pairwise disjoint sets and consider tournaments $\tilde{T}=(\{1,\dots,k\},\tilde{\succ})$ and $T_1=(B_1,\succ_1)$, \dots, $T_k=(B_k,\succ_k)$. The \emph{product} of $T_1,\dots,T_k$ with respect to $\tilde{T}$, denoted by $\Pi(\tilde{T},T_1,\dots,T_k)$, is the tournament $(A,{\succ})$ such that $A=\bigcup_{i=1}^kB_i$ and for all $b_1\in B_i, b_2\in B_j$, 
\[
b_1\succ b_2 \quad \Leftrightarrow \quad \text{$i=j$ and $b_1\succ_i b_2$, or $i\ne j$ and $i\mathrel{\tilde{\succ}} j$.}
\]
\end{definition}

Our proof also makes use of two special classes of tournaments.   

\begin{definition}
	Let~$n$ be an odd integer and $A=\{a_1, \ldots, a_{n}\}$ an ordered set of size $|A|=n$. The \emph{cyclone on $A$} is the tournament $C_n = (A,\succ)$ such that $a_i\succ a_j$ if and only if $j-i\bmod n\in \set{1,\ldots,\frac{n-1}{2}}$.
\end{definition}

	\figref{fig:c7} depicts a cyclone on 7 vertices.
	
	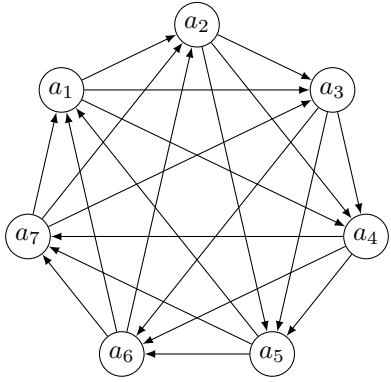
\begin{figure}[tb]
		\centering

		\scalebox{1}{

		\begin{tikzpicture}[scale=1,auto, vertex/.style={circle,draw=black!100,fill=black!00,minimum size=3pt,inner sep=2pt, node distance=0.3},
			component/.style={circle,fill=black!28,draw=black!50,minimum size=3em},	
			componentlight/.style={circle,fill=black!10,draw=black!25,minimum size=3em},
			dottedvertex/.style={vertex,draw=black!99!white,dotted,thick},
			normalvertex/.style={vertex,draw=black!99!white},
			originvertex/.style={vertex,draw=black,double}]

		\path	
			(0,-4) 				node[normalvertex](T4) {$a_6$}
			++(0*51.4285714:2) 	node[normalvertex](T5) {$a_5$}
			++(1*51.4285714:2) 	node[normalvertex](T6) {$a_4$}
			++(2*51.4285714:2) 	node[normalvertex](T7) {$a_3$}
			++(3*51.4285714:2) 	node[normalvertex](T1) {$a_2$}
			++(4*51.4285714:2) 	node[normalvertex](T2) {$a_1$}
			++(5*51.4285714:2) 	node[normalvertex](T3) {$a_7$};

		\foreach \x / \y in {1/2,1/3,1/4,2/3,2/4,2/5,3/4,3/5,3/6,4/5,4/6,4/7,5/6,5/7,5/1,6/7,6/1,6/2,7/1,7/2,7/3}
			{ \draw[latex-] (T\x) -- (T\y); }
	\end{tikzpicture}
	}
	\caption{The tournament $C_7$.}
	\label{fig:c7}
	\end{figure}

\begin{definition}\label{def:babushka}
	Let $k\ge 0$ be an integer. The tournament $B_k$ is inductively defined as follows. $B_0=(A_0,\succ_0)$ is the tournament with a single vertex $x^*$ (and $\succ_0=\emptyset$). Given $B_{k-1}=(A_{k-1},\succ)$, let $B_k = (A_{k-1} \cup \set{a_k,b_k},\succ_k)$, where $a_k, b_k \notin A_{k-1}$ are two new vertices. The edges in $B_k$ are defined such that $\succ_k|_{A_{k-1}} = \succ_{k-1}$ and $A_{k-1} \succ a_k \succ b_k \succ A_{k-1}$.
	  Thus, $A_k = \set{x^*} \cup \bigcup_{i=1}^k \{a_i, b_i\}$.
\end{definition}  
	
	The structure of $B_k$ is illustrated in \figref{fig:babushka}.
	For every $\ell \in \set{0,\ldots,k}$, the set $\set{x^*} \cup \bigcup_{i=1}^\ell \{a_i, b_i\}$ is a component of $B_k$.
	
	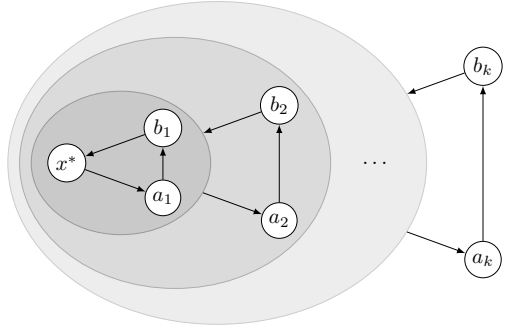
\begin{figure}[tb]
		\centering

		\scalebox{.8}{
			\begin{tikzpicture}[auto,vertex/.style={circle,draw=black!100,fill=black!00,minimum size=3pt,inner sep=2pt, node distance=0.3},
				component/.style={circle,fill=black!28,draw=black!50,minimum size=3em},	
		]
			\draw (0,0) node[ellipse, inner xsep=70, inner ysep=54,fill=black!7,draw=black!20](Ti){}
			+(20: 4.7) node[vertex] (bi){$b_k$}
			+(340:4.7) node[vertex](ai){$a_k$}
			+(0:2.65) node(dots){$\dots$}
			++(180:.7) node[ellipse,inner xsep=52, inner ysep=42,fill=black!14,draw=black!32](T3){}
			++(180:.9) node[ellipse,inner xsep=30, inner ysep=24,fill=black!21,draw=black!40](T2){}
			+(20: 2.8) node[vertex] (b2){$b_2$}
			+(340:2.8) node[vertex](a2){$a_2$}
			++(180:.9) node[vertex](T){$x^*$}			
			+(20: 1.7) node[vertex](b){$b_1$} 
			+(340:1.7) node[vertex](a){$a_1$}
			;

			\foreach \x/\y in {ai/bi,Ti/ai,bi/Ti,a2/b2,T2/a2,b2/T2,a/b,b/T,T/a} {
			\draw[-latex] (\x) -- (\y);
						}
			\end{tikzpicture}
			}
		\caption{Tournament $B_k$. Gray ellipses represent components.}
		\label{fig:babushka}
	\end{figure}

\subsection{Properties of Nash Equilibrium in Tournament Games}

Every tournament game $T$ has unique Nash equilibrium $(p,q)$. This Nash equilibrium is quasi-strict and symmetric, \ie $p=q$. Let $\bip(T)$ denote the equilibrium strategy for one player. If $p=\bip(T)$ and $B \subseteq A$, we let $p(B)$ denote $\sum_{b \in B} p(b)$.  
The following properties will be used repeatedly in our proof.

\begin{lemma}\label{lem:half}
	Let $T=(A,\succ)$ be a tournament.
	\begin{enumerate} \romanenumi
		\item Let $p = \bip(T)$. Then, $p(D(a)) \le p(\dom(a))$ for all $a \in A$ and $p(D(a)) = p(\dom(a))$ if and only if $a \in \bp(T)$. In particular, $p(D(a))\le \frac{1}{2}$ for all $a \in A$. \label{lem:half1}  
		\item Let $a \in A$ and $p = \bip(T|_{A \setminus \set{a}})$. Then, $a \in \bp(T)$ if and only if $p(D(a)) > p(\dom(a))$ if and only if $p(D(a))> \frac{1}{2}$ if and only if $p(\dom(a))< \frac{1}{2}$. \label{lem:half2}
		\item %
		If $a \notin \bp(T)$, then $\bp(T|_{A \setminus \set{a}}) = \bp(T)$. \label{lem:ssp}
	\end{enumerate}
\end{lemma}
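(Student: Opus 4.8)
The plan is to work entirely from the three facts recorded just above the lemma: every tournament game has a \emph{unique} Nash equilibrium, it is symmetric and quasi-strict, and its support is exactly $\bp(T)$. The key computation is that, writing $p=\bip(T)$, the payoff of the pure strategy $a$ against $p$ is
\[
u(a,p)=\sum_{b\in A}p(b)\,m(a,b)=p(D(a))-p(\dom(a)),
\]
since in a tournament game $m(a,b)=1$ when $a\succ b$, $m(a,b)=-1$ when $b\succ a$, and $m(a,a)=0$, while the sets $\set{a}$, $D(a)$, $\dom(a)$ partition $A$; in particular $p(D(a))+p(\dom(a))=1-p(a)$. Because the game is symmetric and zero-sum, its value is $0$. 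I would prove the parts in the order (i), (iii), (ii).

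For (i), $(p,p)$ is a Nash equilibrium of value $0$, so $u(a,p)\le 0$ for every $a\in A$; this is exactly $p(D(a))\le p(\dom(a))$. The equilibrium condition gives $u(a,p)=0$ whenever $a\in\supp(p)$, and \emph{quasi-strictness} gives $u(a,p)<0$ whenever $a\notin\supp(p)$. Since $\supp(p)=\bp(T)$, equality $p(D(a))=p(\dom(a))$ holds iff $a\in\bp(T)$. Finally, combining $p(D(a))\le p(\dom(a))$ with $p(D(a))+p(\dom(a))\le 1$ yields $2p(D(a))\le 1$, i.e.\ $p(D(a))\le\tfrac12$.

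For (iii), the idea is a ``transplant'' argument based on uniqueness. Let $q=\bip(T)$ with $a\notin\supp(q)=\bp(T)$. Then $q$ is already a distribution on $A\setminus\set{a}$, and because $q(a)=0$ the payoffs $u(b,q)$ are unaffected by deleting $a$, so $(q,q)$ is a symmetric Nash equilibrium of $T|_{A\setminus\set{a}}$ of value $0$; by uniqueness $q=\bip(T|_{A\setminus\set{a}})$, whence $\bp(T|_{A\setminus\set{a}})=\supp(q)=\bp(T)$. For (ii), put $p=\bip(T|_{A\setminus\set{a}})$; since $p$ lives on $A\setminus\set{a}$ and every vertex there lies in $D(a)$ or $\dom(a)$, we have $p(D(a))+p(\dom(a))=1$, which makes the three numerical conditions $p(D(a))>p(\dom(a))$, $p(D(a))>\tfrac12$, $p(\dom(a))<\tfrac12$ equivalent. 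It remains to show $a\in\bp(T)\iff p(D(a))>p(\dom(a))$, again by two transplants. If $p(D(a))\le p(\dom(a))$, then $u(a,p)\le 0$, while for $b\neq a$ we have $u(b,p)\le 0$ with equality on $\supp(p)$ (unchanged by adding $a$); hence $(p,p)$ is a Nash equilibrium of $T$, so by uniqueness $p=\bip(T)$ and $a\notin\supp(p)=\bp(T)$. Conversely, if $a\notin\bp(T)$, then (iii) gives $\bip(T|_{A\setminus\set{a}})=\bip(T)=:q$, and part~(i) applied in $T$ yields $p(D(a))-p(\dom(a))=u(a,q)\le 0$. Taking contrapositives of these two implications establishes the equivalence.

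The only genuinely delicate point is the transplant step: one must verify that a strategy optimal in one tournament remains an equilibrium strategy of the other precisely when the added or removed vertex carries zero probability, and then invoke uniqueness to identify the two strategies. Everything else is bookkeeping with the identity $u(a,p)=p(D(a))-p(\dom(a))$ and the partition $A=\set{a}\cup D(a)\cup\dom(a)$. I expect no quantitative estimates to be required; the whole lemma reduces to the equilibrium and quasi-strictness conditions together with uniqueness of the equilibrium.
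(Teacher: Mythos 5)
Your proof is correct. Note, though, that the paper does not actually prove this lemma at all: its ``proof'' is a pointer to \citet{Lasl97a}, Chapter~6, so there is no argument in the paper to compare against step by step. What you supply is a self-contained derivation from exactly the three facts the paper records in its footnote (uniqueness, symmetry, and quasi-strictness of the equilibrium, with support $\bp(T)$), organized around the identity $u(a,p)=p(D(a))-p(\dom(a))$ and the two ``transplant'' steps that move an equilibrium between $T$ and $T|_{A\setminus\set{a}}$ when the deleted vertex carries zero probability. All the individual steps check out: the value is $0$ by symmetry, so the best-response inequalities give part~(i), with quasi-strictness supplying the strict inequality off the support; the transplant in~(iii) is valid because $q(a)=0$ makes every payoff $u(b,q)$ identical in the two games, and uniqueness then identifies the equilibria; and in~(ii) the partition $A\setminus\set{a}=D(a)\cup\dom(a)$ gives $p(D(a))+p(\dom(a))=1$, which collapses the three numerical conditions into one. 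One small presentational caveat: in the converse direction of~(ii) you invoke ``(iii)'' to conclude $\bip(T|_{A\setminus\set{a}})=\bip(T)$, which is stronger than the statement of~(iii) (equality of supports); this is fine only because your proof of~(iii) actually establishes equality of the equilibrium distributions, so you should cite that stronger intermediate fact rather than the lemma's statement. Overall your argument buys a reader a complete proof where the paper offers only a citation, at the cost of about a page of elementary bookkeeping.
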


\begin{proof}
	See \citet{Lasl97a}, Chapter 6.
\end{proof}

Another important property of Nash equilibrium in tournament games is \emph{composition-consistency}. That is, if a node of a tournament is replaced by a component, the equilibrium probability of a node in the component is the product of the equilibrium probability of the original node and the equilibrium probability that the node gets in the subtournament consisting of the component only. 

\begin{lemma}\label{lem:compcons}
	Let $T = \Pi(\tilde{T},T_1,\dots,T_k)$, where $\tilde{T}=(\{1,\dots,k\},\tilde{\succ})$ and $T_1=(B_1,\succ_1)$, \dots, $T_k=(B_k,\succ_k)$. Let $p_T=\bip(T)$, $p_{\tilde{T}}=\bip(\tilde{T})$, and $p_{T_i}=\bip(T_i)$ for $i=1,\ldots,k$. For a vertex $a \in B_i$, we have
	\[p_T(a) = p_{\tilde{T}}(i) \cdot p_{T_i}(a) \text. \]
\end{lemma}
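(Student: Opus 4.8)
The plan is to leverage the uniqueness of the Nash equilibrium in tournament games (stated just above the lemma): I would define the candidate product distribution $p^*$ on $A=\bigcup_i B_i$ by $p^*(a)=p_{\tilde T}(i)\,p_{T_i}(a)$ for $a\in B_i$, and then argue that $p^*$ is itself a symmetric Nash equilibrium strategy of $T$. Once this is established, uniqueness of the equilibrium forces $p^*=\bip(T)$, which is exactly the claimed identity. A quick preliminary check is that $p^*$ is a probability distribution: summing within each $B_i$ gives $p_{\tilde T}(i)\sum_{a\in B_i}p_{T_i}(a)=p_{\tilde T}(i)$, and summing over $i$ gives $1$.

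First I would recall that, since $T$ is symmetric zero-sum, the symmetric profile $(p^*,p^*)$ is a Nash equilibrium exactly when no pure deviation is profitable, \ie $u_T(a,p^*)\le 0$ for every $a\in A$, where $u_T(a,p^*)=\sum_{b\in A}p^*(b)\,m(a,b)$ is the payoff from playing $a$ against $p^*$ (the value $u_T(p^*,p^*)=0$ by antisymmetry of $m$). The core of the proof is a decomposition of this payoff. Fix $a\in B_i$ and split the sum over $b$ according to the component containing $b$. Within $B_i$ the payoffs are governed by $\succ_i$, so the within-component contribution factors as $p_{\tilde T}(i)\,u_{T_i}(a,p_{T_i})$. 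For $b\in B_j$ with $j\ne i$, the product construction gives $m(a,b)=\tilde m(i,j)$ (the payoff in $\tilde T$), which is independent of $b$; hence, using $\sum_{b\in B_j}p_{T_j}(b)=1$ together with $\tilde m(i,i)=0$, the cross-component contribution collapses to $\sum_{j}p_{\tilde T}(j)\,\tilde m(i,j)=u_{\tilde T}(i,p_{\tilde T})$. Combining the two pieces yields the identity
\[
u_T(a,p^*)\;=\;p_{\tilde T}(i)\,u_{T_i}(a,p_{T_i})\;+\;u_{\tilde T}(i,p_{\tilde T}).
\]

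To finish, I would invoke that the value of every tournament game is $0$, so the equilibrium strategy holds every pure strategy to a nonpositive payoff: $u_{T_i}(a,p_{T_i})\le 0$ and $u_{\tilde T}(i,p_{\tilde T})\le 0$. Both summands above are therefore $\le 0$, giving $u_T(a,p^*)\le 0$ for all $a\in A$, so $(p^*,p^*)$ is a Nash equilibrium and uniqueness completes the argument. The main obstacle is really just the bookkeeping in the decomposition—being careful that the cross-component terms depend only on the index pair $(i,j)$, and that the diagonal term $\tilde m(i,i)=0$ lets the cross sum be rewritten as the full $\tilde T$-payoff $u_{\tilde T}(i,p_{\tilde T})$. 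No quasi-strictness argument is needed for the equality itself; it is delivered for free by uniqueness. As a byproduct, the identity also shows $\supp(p^*)=\bigcup_{i\in\supp(p_{\tilde T})}\supp(p_{T_i})$, which recovers composition-consistency of $\bp$.
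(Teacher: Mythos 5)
Your proof is correct. Note, however, that the paper itself gives no proof of this lemma: composition-consistency is stated as a known property of the (unique) equilibrium of tournament games, in the same spirit as Lemma~\ref{lem:half}, whose proof is a pointer to Laslier's book. So there is no paper argument to compare against; what you have supplied is a correct self-contained derivation of the cited fact. Your route --- define the product distribution $p^*$, verify the payoff decomposition $u_T(a,p^*)=p_{\tilde T}(i)\,u_{T_i}(a,p_{T_i})+u_{\tilde T}(i,p_{\tilde T})$ for $a\in B_i$, conclude $u_T(a,p^*)\le 0$ from the fact that every symmetric zero-sum game has value $0$, and invoke uniqueness of the equilibrium --- is exactly the standard argument, and each step checks out: the within-component sum factors because $\succ$ restricted to $B_i$ is $\succ_i$, the cross-component terms depend only on $(i,j)$ by the definition of the product, and the diagonal term vanishes so the cross sum is the full $\tilde T$-payoff. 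The reduction of ``$(p^*,p^*)$ is a Nash equilibrium'' to ``$u_T(a,p^*)\le 0$ for all pure $a$'' is also valid, since in a symmetric zero-sum game the two players' best-response conditions coincide by antisymmetry of the payoff matrix. The only stylistic remark is that you rely on uniqueness of the equilibrium (footnoted in Section~\ref{sec:tournaments} via Laffond et al.\ and Fisher--Ryan), which is fine here since the lemma is only ever applied to tournament games; for weak tournament games one would instead have to argue about the set of equilibria.
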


It can moreover be shown that non-zero equilibrium probabilities are bounded away from zero. 

\begin{lemma}\label{lem:minprob}
	Let $T=(A,\succ)$ be a tournament with $|T|=N$, $a \in A$, and $p=\bip(T)$. If $p(a)>0$, then $p(a)>\frac{1}{N^N}$.
\end{lemma}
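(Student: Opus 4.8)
The plan is to restrict attention to the support of the equilibrium and then read the equilibrium probabilities off via Kaplansky's theorem on completely mixed skew-symmetric games, which exhibits them as ratios of integers that can be bounded explicitly.

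First I would set $B=\bp(T)=\supp(p)$ and pass to the restricted tournament game $T|_B$, whose payoff matrix $M_B$ is skew-symmetric with entries in $\{-1,0,1\}$ (the value $0$ occurring only on the diagonal). Since $p=\bip(T)$ is the unique equilibrium of $T$, has support $B$, and the value of $T$ is $0$, every action of $B$ is a best response, so $(M p)_i = 0$ for all $i\in B$. As $p$ vanishes outside $B$, this gives $M_B\,p_B = 0$, where $p_B$ is the restriction of $p$ to $B$. Consequently $p_B$ is an optimal strategy of $T|_B$ that assigns positive probability to every vertex of $B$; that is, $T|_B$ is \emph{completely mixed}. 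Because $p(a)=p_B(a)$ for $a\in B$ and $|B|\le N$, it suffices to lower-bound $p_B(a)$ for $a\in B$.

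Next I would invoke Kaplansky's theorem \citep{Kapl45a} for completely mixed skew-symmetric games: $|B|$ is odd, $M_B$ has rank $|B|-1$, and the unique optimal strategy satisfies $p_B(a_i)=|f_i|/\sum_{j}|f_j|$, where $f_i$ denotes the Pfaffian of the even-order skew-symmetric matrix obtained from $M_B$ by deleting the row and column indexed by $a_i$. Since these deleted matrices have entries in $\{-1,0,1\}$, each $f_i$ is an integer, and the completely-mixed condition guarantees $f_i\ne 0$ for every $i$; hence every numerator $|f_i|$ is a positive integer and in particular $|f_i|\ge 1$.

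Finally I would bound the denominator. Writing $b=|B|$, the Pfaffian of a $(b-1)\times(b-1)$ skew-symmetric matrix is a signed sum over the $(b-2)!!$ perfect matchings of $b-1$ points, each term being a product of $\pm 1$ entries; therefore $|f_i|\le (b-2)!! < (b-1)!$. This yields $\sum_{j}|f_j|\le b\,(b-1)! = b!\le N!$, and hence $p(a)=p_B(a)\ge 1/N! > 1/N^N$, the last inequality holding for all $N\ge 2$ (the case $N=1$ being immediate). The main obstacle is the appeal to the two structural facts that $T|_B$ is completely mixed and that its optimal strategy is given by the deletion-Pfaffian formula with all $f_i$ nonzero of common sign; these are precisely the content of Kaplansky's characterization already relied upon elsewhere in the paper, after which the integrality of the $f_i$ and the counting bound make the estimate routine.
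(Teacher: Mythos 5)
Your proof is correct, but it takes a more self-contained route than the paper. The paper's proof is a one-line citation: \citet{FiRy95b} showed that the least common denominator of the equilibrium probabilities of a tournament game on $N$ vertices is at most $N^{(N-3)/4}\sqrt{N^2-4}<N^N$, from which the bound is immediate. You instead rederive a (coarser but sufficient) denominator bound from first principles: restrict to the support, observe the restricted game is completely mixed, and read the probabilities off as ratios of Pfaffians via \citet{Kapl45a}, bounding each Pfaffian by the number of perfect matchings to get $p(a)\ge 1/N! > 1/N^N$. The underlying mathematics is the same (the Fisher--Ryan bound itself comes from this Pfaffian representation), but your version buys independence from the cited sharp estimate at the cost of a weaker constant, which still suffices here. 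Two small points worth tightening: (i) exhibiting one fully mixed optimal strategy of $T|_B$ does not by itself make the game completely mixed; you should note that $T|_B$ is again a tournament game and hence has a unique equilibrium (a fact the paper already records), so the fully mixed strategy you construct is the only optimal strategy. (ii) The strict inequality $1/N!>1/N^N$ requires $N\ge 2$, and for $N=1$ the stated strict inequality $p(a)>1/N^N$ is actually false as written ($1>1$); this degenerate case is harmless for the application but your phrase ``the case $N=1$ being immediate'' glosses over it.
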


\begin{proof}
	\citet{FiRy95b} have shown that the least common denominator of the equilibrium probabilities for a tournament on $N$ vertices is upper bounded $N^{(N-3)/4} \sqrt{N^2-4} < N^N$. It follows that the smallest non-zero equilibrium probability is greater than $\frac{1}{N^N}$. 
\end{proof}

It is easy to compute the equilibrium probabilities for the tournament classes defined at the end of \secref{sec:tournaments_app}.

\begin{lemma}\label{lem:probs}
	\begin{enumerate} \romanenumi
		\item Let $T$ be a cyclone on $A$ with $|A|=n$ and let $p=\bip(T)$. Then, $p(a)=\frac{1}{n}$ for all $a \in A$. 
		\item Let $B_k$ be the tournament defined in \defref{def:babushka} and let $p=\bip(B_k)$. Then, for all $i \le k$, $p(a_i)=p(b_i)=\frac{1}{3^{k-i+1}}$ and $p(x^*)=\frac{1}{3^k}$.
	\end{enumerate}
\end{lemma}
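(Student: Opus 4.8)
The plan is to prove the two parts separately, handling the cyclone directly and then bootstrapping $B_k$ from it via composition-consistency.

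For part \emph{(i)}, I would guess the answer and verify it, appealing to uniqueness of the equilibrium. Recall (from the subsection above) that every tournament game has a \emph{unique} Nash equilibrium and that it is symmetric; it therefore suffices to exhibit the uniform strategy $p$ with $p(a)=\frac{1}{n}$ for all $a$ as a symmetric equilibrium. The cyclone $C_n$ is \emph{regular}: by definition each vertex $a_i$ dominates exactly the $\frac{n-1}{2}$ vertices $a_{i+1},\dots,a_{i+(n-1)/2}$ (indices mod $n$) and is dominated by the remaining $\frac{n-1}{2}$. Hence the expected payoff of any pure action $a_i$ against the uniform strategy is $\frac{1}{n}\bigl(\frac{n-1}{2}-\frac{n-1}{2}\bigr)=0$, so every action is a best response to uniform and $(p,p)$ is an equilibrium. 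Uniqueness then forces $\bip(C_n)=p$. (Equivalently, the cyclic shift $a_i\mapsto a_{i+1}$ is an automorphism of $C_n$, so the unique equilibrium must be invariant under it and hence uniform.)

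For part \emph{(ii)}, I would induct on $k$, exploiting the recursive component structure of $B_k$ together with \lemref{lem:compcons}. The key structural observation is that, by \defref{def:babushka}, the set $A_{k-1}$ satisfies $A_{k-1}\succ a_k\succ b_k\succ A_{k-1}$; since $a_k$ and $b_k$ each relate uniformly to all of $A_{k-1}$, the sets $A_{k-1}$, $\{a_k\}$, $\{b_k\}$ are all components, and collapsing them yields exactly the $3$-cycle (the cyclone $C_3$). Thus $B_k=\Pi(C_3,B_{k-1},\{a_k\},\{b_k\})$. By part \emph{(i)} the equilibrium of $C_3$ places weight $\frac{1}{3}$ on each vertex, so \lemref{lem:compcons} gives $p_{B_k}(a_k)=p_{B_k}(b_k)=\frac{1}{3}$ and $p_{B_k}(a)=\frac{1}{3}\,p_{B_{k-1}}(a)$ for every $a\in A_{k-1}$. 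The base case $B_0$ is the single vertex $x^*$ with $p(x^*)=1=\frac{1}{3^0}$. For the inductive step, assuming $p_{B_{k-1}}(a_i)=p_{B_{k-1}}(b_i)=\frac{1}{3^{k-i}}$ for $i\le k-1$ and $p_{B_{k-1}}(x^*)=\frac{1}{3^{k-1}}$, the scaling above immediately yields $p_{B_k}(a_k)=p_{B_k}(b_k)=\frac{1}{3^{1}}$, $p_{B_k}(a_i)=p_{B_k}(b_i)=\frac{1}{3^{k-i+1}}$ for $i\le k-1$, and $p_{B_k}(x^*)=\frac{1}{3^{k}}$, exactly as claimed.

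The arithmetic is trivial, so the only real content is the structural claim that $B_k$ decomposes as $\Pi(C_3,B_{k-1},\{a_k\},\{b_k\})$; this is where I would be most careful, verifying directly from the edge relations in \defref{def:babushka} that $A_{k-1}$, $\{a_k\}$, and $\{b_k\}$ are genuine components and that the induced quotient tournament is precisely the $3$-cycle. Everything else follows from a one-line application of composition-consistency and the induction hypothesis.
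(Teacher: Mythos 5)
Your proof is correct and follows essentially the same route as the paper's (which is only a sketch): part \emph{(i)} by symmetry/regularity of the cyclone plus uniqueness of the equilibrium, and part \emph{(ii)} by repeatedly applying composition-consistency to the decomposition $B_k=\Pi(C_3,B_{k-1},\set{a_k},\set{b_k})$. Your version simply supplies the details the paper omits.
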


\begin{proof}[Proof sketch]
	Uniform equilibrium probabilities for cyclones are due to symmetry. The equilibrium probabilities for $B_k$ can be easily calculated by applying composition-consistency (\lemref{lem:compcons}) repeatedly.  
\end{proof}

\subsection{Construction of Incomplete Tournament $T_\varphi$}
\label{sec:Tvarphi}

\newcommand{\p}{d}

Our hardness proofs are based on a reduction from \textsc{3Sat}. For a given Boolean formula $\varphi$ in 3-CNF, we are going to construct an incomplete tournament $T_\varphi$. 

Let $\varphi = C_1 \wedge \ldots \wedge C_m$ be a 3-CNF formula over variables $x_1, \ldots, x_n$. Without loss of generality assume that $m$ is odd and that every variable occurs at most once in each clause. 
Let $N = m+3n+1$ and $K = N \log N$.

Define the tournament $T_\varphi = (A,\succ)$ as follows. The set $A$ of vertices is given by 
\[A=C \cup \set{\p,\p'} \cup X \text,\]
where $A = \set{c_1, \ldots, c_m}$ and $X = \bigcup_{i=1}^{2n+1} X_i$. 
For $i=1,\ldots,n$, $X_i=\{x_i^+,x_i^-\}$.
For $i=n+1$, $X_{n+1}$ is given by the set $A_K$ from \defref{def:babushka}, \ie 
\[X_{n+1} = \set{x^*} \cup \bigcup_{i=1}^K \set{a_i,b_i} \text.\]
For $i=n+2,\ldots,2n+1$, $X_i=\{x_i\}$.

For $j=1,\dots,m$, vertex $c_j$ corresponds to clause $C_j$; for $i=1,\ldots,n$, the pair of vertices $X_i=\{x_i^+,x_i^-\}$ corresponds to variable $x_i$.

We now define the edges of $T_\varphi$. The subtournament $T_\varphi|_C$ is a cyclone on $C$. Furthermore, $\set{\p,\p'} \succ C$ and $\p \succ \p'$.
The subtournament $T_\varphi|_X$ has components $X_1, \ldots, X_{2n+1}$, which are connected by a cyclone of size $2n+1$. 
For $i=1,\ldots,n$, the subtournament $T_\varphi|_{X_i}$ is incomplete, \ie there is no edge between $x_i^+$ and $x_i^-$.
For $i=n+1$, the subtournament  $T_\varphi|_{X_{n+1}}$ is given by $B_K$.
(For $i=n+2,\ldots,2n+1$, $T_\varphi|_{X_i}$ consists of a single vertex $x_i$ and thus has no edges.)

We go on to define the edges between $\set{\p,\p'}$ and $X$.
The edges incident to $\p$ are defined such that
\begin{align*}
 &D_X(\p) = \bigcup_{i=n+2}^{2n+1} X_i \cup \set{a_1,\ldots,a_K}   \text, \quad \text{and} \\
 &\dom_X(\p) = \bigcup_{i=1}^{n} X_i \cup \set{x^*,b_1,\ldots,b_K} \text.
\end{align*}
The edges incident to $\p'$ are defined such that
\begin{align*}
 &D_X(\p') = \bigcup_{i=1}^{n} X_i \cup \set{b_1,\ldots,b_K}  \text, \quad \text{and} \\
 &\dom_X(\p') = \bigcup_{i=n+2}^{2n+1} X_i \cup \set{x^*,a_1,\ldots,a_K}  \text.
\end{align*}
Thus, we have $D_X(\p') \cup \set {x^*} = \dom_X(d)$.

It remains to define the edges between $C$ and $X$. 
\begin{itemize}
	\item For $i=1,\ldots,n$, consider clause $C_j$ ($j \in {1,\ldots,m}$).
	\begin{itemize}
		\item If $x_i$ occurs as a positive literal in $C_j$, then $x_i^+ \succ c_j \succ x_i^-$;
		\item if $x_i$ occurs as a negative literal in $C_j$, then $x_i^- \succ c_j \succ x_i^+$; and
		\item if $x_i$ does not occur in $C_j$, then $\set{c_j} \succ \set{x_i^+,x_i^-}$.
	\end{itemize}
	\item For $i=n+1$, $C \succ X_{n+1}$.
	\item For $i=n+2,\ldots,2n+1$, $X_i \succ C$.
\end{itemize}

This completes the definition of $T_\varphi$.

\subsection{Properties of $T_\varphi$}

The only edges in $T_\varphi$ that are \emph{not} specified are the edges between $x_i^+$ and $x_i^-$, for $i=1,\ldots,n$. Every completion of $T_\varphi$ corresponds to an assignment. 

\begin{definition}
	For assignment $\alpha$, let $T_\varphi^\alpha \in [T_\varphi]$ be the completion of $T_\varphi$ with 
	\begin{itemize}
		\item $x_i^+ \succ x_i^-$ if $\alpha(x_i)=\text{true}$, and
		\item $x_i^- \succ x_i^+$ if $\alpha(x_i)=\text{false}$.
	\end{itemize}
\end{definition} 

\begin{lemma}\label{lem:sat}
	An assignment $\alpha$ satisfies $\varphi$ if and only if $\bp(T_\varphi^\alpha) \subseteq X$.
\end{lemma}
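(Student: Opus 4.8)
The plan is to identify the unique symmetric Nash equilibrium $\bip(T_\varphi^\alpha|_X)$ of the $X$-part, extend it by zeros to all of $A$, and show that this extension is an equilibrium of the \emph{whole} game $T_\varphi^\alpha$ precisely when $\alpha$ satisfies $\varphi$. Since every tournament game has a unique equilibrium whose support equals its bipartisan set, establishing this equivalence proves $\bp(T_\varphi^\alpha)\subseteq X \iff \alpha \models \varphi$.

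First I would pin down $p_X := \bip(T_\varphi^\alpha|_X)$. As $T_\varphi^\alpha|_X$ is the product of the components $X_1,\dots,X_{2n+1}$ with respect to a cyclone of size $2n+1$, composition-consistency (\lemref{lem:compcons}) together with the uniform cyclone equilibrium (\lemref{lem:probs}) shows that each component carries total mass $\frac{1}{2n+1}$: within $X_i$ for $i\le n$ all of it sits on the ``winning'' vertex selected by $\alpha$ (namely $x_i^+$ if $\alpha(x_i)=\mathrm{true}$, else $x_i^-$); within $X_{n+1}=B_K$ it is spread according to \lemref{lem:probs}(ii); and each singleton $X_i$ with $i\ge n+2$ gets $\frac{1}{2n+1}$. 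Because $p_X$ is supported on $X$ and $T_\varphi^\alpha$ is a tournament, the payoff of any outside vertex $a$ against $p_X$ is $u(a,p_X)=2\,p_X(D_X(a))-1$, so everything reduces to computing $p_X(D_X(a))$ for $a\in C\cup\set{d,d'}$. The two computations I would carry out are $p_X(D_X(d))=p_X(D_X(d'))=\tfrac12-\tfrac{3^{-K}}{2(2n+1)}<\tfrac12$ (the geometric babushka tail is exactly what pulls these just below $\tfrac12$), and $p_X(D_X(c_j))=\frac{\,n-s_j+1\,}{2n+1}$, where $s_j$ is the number of literals of $C_j$ made true by $\alpha$. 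The latter exceeds $\tfrac12$ if and only if $s_j=0$, i.e.\ iff $C_j$ is unsatisfied.

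For the forward direction, if $\alpha\models\varphi$ then every $s_j\ge1$, so $u(c_j,p_X)<0$ for all $j$, and $u(d,p_X),u(d',p_X)<0$ as well; combined with the fact that $u(a,p_X)\le0$ already holds for every $a\in X$ (these payoffs are unchanged from the subgame, since $p_X$ lives on $X$), this makes $p_X$ a symmetric Nash equilibrium of $T_\varphi^\alpha$. By uniqueness it is \emph{the} equilibrium, so $\bp(T_\varphi^\alpha)=\supp(p_X)\subseteq X$. For the converse, suppose some clause $C_j$ is unsatisfied, so $u(c_j,p_X)=\frac{1}{2n+1}>0$, and assume for contradiction that $\bp(T_\varphi^\alpha)\subseteq X$. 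Then the unique equilibrium $q$ is supported on $X$, whence its restriction is a Nash equilibrium of $T_\varphi^\alpha|_X$; by uniqueness of the equilibrium in $T_\varphi^\alpha|_X$ we get $q=p_X$, contradicting $u(c_j,p_X)>0$. Hence $\bp(T_\varphi^\alpha)\not\subseteq X$, which is the contrapositive of the remaining direction.

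The delicate part is the simultaneous tuning of the thresholds: the babushka $B_K$ must be sized so that $p_X(D_X(d))$ and $p_X(D_X(d'))$ land \emph{strictly} below $\tfrac12$ (keeping $d,d'$ out of the bipartisan set in the satisfiable case), while the contribution of a single true literal has to suffice to push $p_X(D_X(c_j))$ below $\tfrac12$. Verifying these inequalities is exactly where the quantitative structure of $T_\varphi$ is used --- the $2n+1$ cyclone weights, the geometric $B_K$ probabilities, and the way $D_X(d),D_X(d')$ split the variable gadgets from the singletons and the $a_i/b_i$ layers. Everything else is routine bookkeeping with the tournament-solution facts recorded in \lemref{lem:half}.
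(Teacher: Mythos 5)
Your proposal is correct and follows essentially the same route as the paper's proof: both extend the subgame equilibrium on $X$ by zeros, verify via \lemref{lem:half} and composition-consistency that $p(D_X(c_j))<\tfrac12$ iff $C_j$ has a true literal (your $\frac{n+1-s_j}{2n+1}$ is the paper's $\frac{n-2+f_j^\alpha}{2n+1}$), and check $p(D_X(d)),p(D_X(d'))<\tfrac12$ using the geometric tail of $B_K$. The only cosmetic difference is that you phrase the converse as a uniqueness-based contradiction rather than applying \lemref{lem:half} directly to the whole-game equilibrium.
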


\begin{proof}
	For the direction from right to left, assume that $\bp(T_\varphi^\alpha) \subseteq X$ and let $p=\bip(T_\varphi^\alpha)$. Since $T_\varphi^\alpha|_X$ is a composed tournament with summary $C_{2n+1}$, composition-consistency implies $p(X_i)=\frac{1}{2n+1}$ for all $i=1,\ldots,n$. In particular, for $i=1,\ldots,n$, 
	\[ p(x_i^+) = 
	\begin{cases}
		\frac{1}{2n+1}, &\text{if } \alpha(x_i)=\text{true} \\
		0,               &\text{if } \alpha(x_i)=\text{false}
	\end{cases}
	\]
	and
	\[ p(x_i^-) = 
	\begin{cases}
		\frac{1}{2n+1}, &\text{if } \alpha(x_i)=\text{false} \\
		0,               &\text{if } \alpha(x_i)=\text{true}
	\end{cases}
	\]
Now consider $D_X(c_j)$, where $j \in \set{1,\ldots,m}$. Since $\bp(T_\varphi^\alpha) \subseteq X$ and $c_j \notin \bp(T_\varphi^\alpha)$, \lemref{lem:half} implies $p(D_X(c_j))<\frac{1}{2}$. Let $f_j^\alpha$ denote the number of literals in clause $C_j$ that are set to ``false'' under assignment $\alpha$. We have 
\[ p(D_X(c_j)) = \frac{1}{2n+1} (n-2+f_j^\alpha)\text. \]
Therefore, $p(D_X(c_j))<\frac{1}{2}$ is equivalent to $f_j^\alpha < \frac{5}{2}$. Consequently, at least one literal in $C_j$ is set to ``true'' under $\alpha$. Since this holds for all $j \in \set{1,\ldots,m}$, assignment $\alpha$ satisfies $\varphi$.

For the direction from left to right, assume that $\alpha$ satisfies~$\varphi$ and let $p = \bip(T_\varphi^\alpha|_X)$. We show that $p$ (appended with $p(a)=0$ for all $a \in A \setminus X$) is also the Nash equilibrium of~$T_\varphi^\alpha$. According to \lemref{lem:half}, it is sufficient to show that 
\begin{align}\label{smaller}
	p(D_X(a)) < \frac{1}{2} \quad \text{for all $a \in A \setminus X$.}
\end{align} 
Since $A \setminus X = C \cup \set{\p,\p'}$, there are three cases to consider. 

If $a \in C$, \ie $a = c_j$ for some $j \in \set{1,\ldots,m}$, we can apply the same reasoning as above to get $p(D_X(c_j)) < \frac{1}{2}$ if and only if $f_j^\alpha < \frac{5}{2}$, and the latter holds because $\alpha$ satisfies~$C_j$.

For $\p$, we have 
\begin{align*}
	p(D_X(\p)) &= p\left(\bigcup_{i=n+2}^{2n+1} X_i \cup D_{X_{n+1}}(\p)\right) \\
				 &= \frac{1}{2n+1} \left(n+ \frac{1-\frac{1}{3^K}}{2}\right) \\
				 &< \frac{1}{2n+1} \left(n+ \frac{1}{2}\right) = \frac{1}{2}\text.
\end{align*}
Similarly, we get 
\begin{align*}
	p(D_X(\p')) &= p\left(\bigcup_{i=1}^{n} X_i \cup \dom_{X_{n+1}}(\p)\right) < \frac{1}{2}\text.
\end{align*}
This proves (\ref{smaller}). Consequently, we have $p = \bip(T_\varphi^\alpha)$ and, in particular, $\bp(T_\varphi^\alpha) \subseteq X$.
\end{proof}

\begin{lemma}\label{lem:unsat}
	An assignment $\alpha$ does not satisfy $\varphi$ if and only if $\p \in \bp(T_\varphi^\alpha)$.
\end{lemma}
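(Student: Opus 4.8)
The plan is to prove the two implications separately, using Lemma~\ref{lem:sat} for the trivial direction and the bipartisan-set characterizations in Lemma~\ref{lem:half} for the substantive one. Throughout I write $p=\bip(T_\varphi^\alpha)$.

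\textbf{Easy direction.} I would prove the contrapositive of ``$d\in\bp(T_\varphi^\alpha)\Rightarrow\alpha$ does not satisfy $\varphi$''. If $\alpha$ satisfies $\varphi$, then Lemma~\ref{lem:sat} gives $\bp(T_\varphi^\alpha)\subseteq X$, and since $d\notin X$ we conclude $d\notin\bp(T_\varphi^\alpha)$.

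\textbf{Hard direction (contrapositive form).} Assuming $d\notin\bp(T_\varphi^\alpha)$, I would show that $\alpha$ satisfies $\varphi$. The first step is to extract a strong numerical constraint from the bare assumption $d\notin\bp$. By the construction, $d$ beats exactly $\set{d'}\cup C\cup D_X(d)$ and is beaten exactly by $\dom_X(d)=D_X(d')\cup\set{x^*}$, while $d'$ beats $C\cup D_X(d')$ and is beaten by $\set{d}\cup\dom_X(d')$, where $\dom_X(d')=D_X(d)\cup\set{x^*}$. Applying part~(i) of Lemma~\ref{lem:half} to $d$ (with \emph{strict} inequality, since $d\notin\bp$) and to $d'$, substituting $p(d)=0$ together with $p(D_X(d'))+p(\dom_X(d'))=p(X)=1-p(C)-p(d')$, I expect a short elimination to yield $2\,p(C)+p(d')<2\,p(x^*)$. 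In particular $p(C)<p(x^*)$ and $p(d')<2\,p(x^*)$.

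\textbf{From the constraint to satisfiability.} The idea is that $p(x^*)$ is forced to be negligibly small: $x^*$ sits at the centre of the nested component $X_{n+1}=B_K$, whose internal equilibrium assigns it only $3^{-K}$ with $K=N\log N$ (Lemma~\ref{lem:probs}), and I would argue this suppression survives in $T_\varphi^\alpha$. Consequently $p(C)$ and $p(d')$ are negligible as well, so almost all mass lies on $X$ and the induced distribution there is a negligible perturbation of the cyclone-of-components equilibrium, giving $p(X_i)\approx\frac{1}{2n+1}$ for every block. Now fix a clause $c_j$; since $c_j\notin\bp$ (it is beaten by $d,d'$ and carries vanishing mass), part~(i) of Lemma~\ref{lem:half} gives $p(D_X(c_j))\le p(D(c_j))\le\frac12$. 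By the count already used in Lemma~\ref{lem:sat}, $p(D_X(c_j))=\frac{1}{2n+1}(n-2+f_j^\alpha)+o(1)$, where $f_j^\alpha$ is the number of false literals of $C_j$, so $p(D_X(c_j))\le\frac12$ forces $f_j^\alpha\le 2$, i.e.\ $C_j$ has a true literal. As $j$ was arbitrary, $\alpha$ satisfies $\varphi$.

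\textbf{Main obstacle.} The delicate part is entirely quantitative: making rigorous that $p(x^*)$—and hence, via the constraint above, $p(C)$ and $p(d')$—is small enough that the distribution on $X$ cannot deviate from the cyclone values by more than the $\Theta(1/n)$ slack that separates $\frac{n-2+f_j}{2n+1}$ from $\frac12$. Because $X_{n+1}$ is \emph{not} a component of the full game (the vertices $d$ and $d'$ deliberately cut through $B_K$, beating the $a_i$ while losing to $x^*$ and the $b_i$), composition-consistency does not apply directly, so bounding $p(x^*)$ and propagating the perturbation back to the clause counts is where the real work lies. This is precisely the role of the calibrated depth $K=N\log N$, which makes $3^{-K}$ overwhelm the inverse-polynomial gaps, together with the minimum-probability bound of Lemma~\ref{lem:minprob}.
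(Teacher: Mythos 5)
Your easy direction is fine, and your derivation of $2p(C)+p(d')<2p(x^*)$ from Lemma~\ref{lem:half}~\ref{lem:half1} applied strictly to $d$ and weakly to $d'$ (using $p(d)=0$ and $\dom_X(d)=D_X(d')\cup\set{x^*}$, $\dom_X(d')=D_X(d)\cup\set{x^*}$) is correct and is a genuinely nice observation. The gap is precisely the step you defer to your ``main obstacle'' paragraph: bounding $p(x^*)$, and then arguing that the equilibrium restricted to $X$ is within the $\frac{1}{2(2n+1)}$ slack of the cyclone-of-components values needed for the clause count. This is not a finishing technicality --- it is the entire content of the hard direction, and the route you sketch does not go through with the tools available. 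As you note yourself, $X_{n+1}$ is not a component of $T_\varphi^\alpha$ (both $d$ and $d'$ split $B_K$), so Lemma~\ref{lem:compcons} cannot be invoked to get $p(x^*)\le 3^{-K}$ while $d'$ may carry positive probability, and your bound $p(d')<2p(x^*)$ is circular for this purpose: it only makes $d'$ negligible \emph{after} $p(x^*)$ has been shown negligible. Moreover, the paper contains no stability or perturbation lemma that would let you transfer the exact computation of $p(D_X(c_j))=\frac{1}{2n+1}(n-2+f_j^\alpha)$ from the proof of Lemma~\ref{lem:sat} to a nearby equilibrium with a controlled $o(1/n)$ error, nor one guaranteeing that within each $X_i$ the mass still concentrates on the single literal vertex dictated by $\alpha$.

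The way to close the gap is to eliminate $d'$ structurally rather than quantitatively, which is what the paper's proof does. If $d\notin\bp(T_\varphi^\alpha)$ but $d'\in\bp(T_\varphi^\alpha)$, then by Lemma~\ref{lem:half}~\ref{lem:ssp} one may delete $d$ without changing $\bp$; writing $q=\bip(T_\varphi^\alpha|_{A\setminus\set{d,d'}})$ and observing that $D_{A\setminus\set{d,d'}}(d')=C\cup\bigcup_{i=1}^n X_i\cup\set{b_1,\ldots,b_K}=\dom_{A\setminus\set{d,d'}}(x^*)$, the membership $d'\in\bp$ forces $q(\dom(x^*))>\frac12$, hence $x^*\notin\bp$ of that subtournament, hence by composition-consistency all of $X_{n+1}$ receives probability $0$, and then $q(D(x_{2n+1}))=q(\dom(x^*))>\frac12$, contradicting Lemma~\ref{lem:half}~\ref{lem:half1}. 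Once $p(d)=p(d')=0$ is established, the equilibrium is that of $T_\varphi^\alpha|_{X\cup C}$, where $X_{n+1}$ \emph{is} a component; Lemmas~\ref{lem:compcons} and~\ref{lem:probs} give $p(x^*)\le 3^{-K}<N^{-N}$ exactly, your inequality yields $p(C)<N^{-N}$, Lemma~\ref{lem:minprob} (applied after collapsing $X_{n+1}$ to a single vertex, so the tournament has $N$ vertices) forces $p(C)=0$, and Lemma~\ref{lem:sat} finishes. No perturbation analysis is needed anywhere; the calibration $K=N\log N$ is used only for the clean inequality $3^{-K}<N^{-N}$, not to overwhelm an approximation error.
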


\begin{proof}
	The direction from right to left follows immediately from \lemref{lem:sat}. For the direction from left to right, assume that $\alpha$ does not satisfy $\varphi$ and let $p=\bip(T_\varphi^\alpha)$. \lemref{lem:sat} implies that $\bp(T_\varphi^\alpha) \not\subseteq X$, \ie $p(X)<1$. We prove that $\p \in \bp(T_\varphi^\alpha)$ in two steps.

	\paragraph{Step 1.} We first show that 
	\begin{align} \label{eq:step1}
		\bp(T_\varphi^\alpha) \cap \set{\p,\p'} \neq \emptyset \text.
	\end{align}
	
	Assume for contradiction that $\bp(T_\varphi^\alpha) \cap \set{\p,\p'} = \emptyset$ and let $p=\bip(T_\varphi^\alpha)$. Then,
	\begin{align} \label{eq:cx}
	\begin{split}
		p(D(\p)) + p(D(\p')) &= p(X \setminus \set{x^*}) + 2p(C)\\
							 &= p(X) + p(C) + p(C) - p(x^*) \\
							 &= 1 + p(C) - p(x^*) \text,
	\end{split}
	\end{align}
	where the last equality is due to $p(X \cup C \cup \set{\p,\p'})=1$ and the assumption that $p(\set{\p,\p'})=0$. 
	
	We are going to show that $p(C) > p(x^*)$. In order to do so, we first derive a lower bound on $p(C)$. \lemref{lem:sat} implies that $p(C)>0$. Let $c \in C$ be a vertex with $p(c)>0$. Since $\p,\p' \notin \bp(T_\varphi^\alpha)$, the equilibrium of $T_\varphi^\alpha$ coincides with the equilibrium of $T_\varphi^\alpha|_{X \cup C}$. Moreover, the set $X_{n+1}$ is a component of $T_\varphi^\alpha|_{X \cup C}$. Therefore, the equilibrium probability of $c$ in $T_\varphi^\alpha$ is equal to the equilibrium probability of $c$ in the tournament in which $\p$ and $\p'$ have been deleted and component $X_{n+1}$ is replaced by a single node. Since the latter tournament has $N = m+3n+1$ nodes, \lemref{lem:minprob} implies that 
	\begin{align}\label{eq:cbound}
		p(C) \ge p(c) > \frac{1}{N^N} \text.
	\end{align} 
	
	We now derive an upper bound on $p(x^*)$. We again use the fact that $X_{n+1}$ is a component of $T_\varphi^\alpha|_{X \cup C}$. Composition-consistency (\lemref{lem:compcons}) implies that $p(x^*) = p(X_{n+1}) \cdot p'(x^*)$, where $p'(x^*)$ is the equilibrium probability of $x^*$ in the subtournament $T_\varphi^\alpha|_{X_{n+1}}$. By \lemref{lem:probs}, $p'(x^*)=\frac{1}{3^K}$. Recalling that $K= N \log N$ was chosen such that $3^K > N^N$, we get 
	\begin{align}\label{eq:xbound}
		p(x^*) = p(X_{n+1}) \cdot p'(x^*) \le p'(x^*) = \frac{1}{3^K} < \frac{1}{N^N} \text.
	\end{align}
	
	Plugging (\ref{eq:cbound}) and (\ref{eq:xbound}) into (\ref{eq:cx}) results in the inequality $p(D(\p)) + p(D(\p')) >1$, which in turn implies that $p(D(\p)) > \frac{1}{2}$ or $p(D(\p')) > \frac{1}{2}$. By \lemref{lem:half}, either case leads to a contradiction. We have thus proven (\ref{eq:step1}).

	\paragraph{Step 2.}
	We now show that $\p \in \bp(T_\varphi^\alpha)$. Assume for contradiction that $\p \notin \bp(T_\varphi^\alpha)$. Then (\ref{eq:step1}) implies $\p' \in \bp(T_\varphi^\alpha)$. Moreover, \lemref{lem:half}~\ref{lem:ssp} implies that $\bp(T_\varphi^\alpha|_{A \setminus \set{\p}}) = \bp(T_\varphi^\alpha)$; in particular, $\p' \in \bp(T_\varphi^\alpha|_{A \setminus \set{\p}})$.
	
	Consider $T' = T_\varphi^\alpha|_{A \setminus \set{\p,\p'}}$ and let $p = \bip(T')$. Since $\p' \in \bp(T_\varphi^\alpha|_{A \setminus \set{\p}})$, \lemref{lem:half}~\ref{lem:half2} implies
	\begin{align}\label{eq:>1/2}
		p(D_{A \setminus \set{\p,\p'}}(\p'))>\frac{1}{2} \text.
	\end{align}
	Observe that $D_{A \setminus \set{\p,\p'}}(\p') = C \cup \bigcup_{i=1}^n X_i \cup \dom_{X_{n+1}}(x^*) = \dom_{A \setminus \set{\p,\p'}}(x^*)$. Therefore, (\ref{eq:>1/2}) implies 
	\begin{align} \label{eq:probx}
			p(\dom_{A \setminus \set{\p,\p'}}(x^*))>\frac{1}{2} \text,
	\end{align}
	which by \lemref{lem:half}~\ref{lem:half2} is equivalent to $x^* \notin \bp(T')$.
	Recall that $X_{n+1}$ is a component of $T'$ and that $\bp(T'|_{X_{n+1}}) = X_{n+1}$. Composition-consistency (\lemref{lem:compcons}) therefore implies that either all or none of the vertices in $X_{n+1}$ are contained in $\bp(T')$. Since we have already shown that $x^* \notin \bp(T')$, we actually have $x \notin \bp(T')$ for all $x \in X_{n+1}$. In other words, $p(X_{n+1})=0$. 
	
	Now consider vertex $x_{2n+1}$. Summing up the equilibrium probabilities of the dominion of this vertex, we get
	\begin{align*}
		p(D_{A \setminus \set{\p,\p'}}(x_{2n+1})) &= p(C \cup \bigcup_{i=1}^n X_i) \\
												  &= p(C \cup \bigcup_{i=1}^n X_i) + 0 \\
												  &= p(C \cup \bigcup_{i=1}^n X_i) + p(\dom_{X_{n+1}}(x^*))\\
												  &= p(\dom_{A \setminus \set{\p,\p'}}(x^*)) >\frac{1}{2} \text,
	\end{align*}
	where the inequality is due to (\ref{eq:probx}).
	By \lemref{lem:half}~\ref{lem:half1}, this is a contradiction. We have thus shown that $\p \in \bp(T_\varphi^\alpha)$, completing the proof.  
\end{proof}

\subsection{Complexity Results}

We are now ready to prove that computing necessary $\bp$ winners is intractable. 

\newtheorem*{thm:necBP}{Theorem \ref{thm:necBP}}
\begin{thm:necBP}
The necessary $\bp$ winner problem (in tournament games) is coNP-complete.
\end{thm:necBP}

\begin{proof}
	Membership in coNP is straightforward: Given an incompletely specified tournament and a distinguished vertex, guess a completion and verify that the vertex is not contained in the bipartisan set of the completion.
	
	Hardness follows from a reduction from \textsc{3Sat}. Given a 3CNF $\varphi$, let $T_\varphi$ be the incomplete tournament defined in \secref{sec:Tvarphi}. Recall that there is a one-to-one correspondence between assignments and completions of $T_\varphi$. \lemref{lem:unsat} states that an assignment $\alpha$ does not satisfy $\varphi$ if and only if $\p \in \bp(T_\varphi^\alpha)$.
	Therefore, $\varphi$ is unsatisfiable if and only if $\p \in \bp(T_\varphi^\alpha)$ for all assignments $\alpha$. The latter is equivalent to $\p$ being a necessary $\bp$ winner of $T_\varphi$.
\end{proof}

\newcommand{\T}{\hat T_\varphi}

For proving hardness of the possible $\bp$ winner problem, we use a slightly modified version of the incomplete tournament $T_\varphi$. For a 3CNF $\varphi$, let $\T=(A,\succ')$ be the incomplete tournament that is identical to $T_\varphi=(A,\succ)$, except that all edges incident to $\p$ are reversed. That is, $\p \succ' a$ if and only if $a \succ \p$ for all $a \in A \setminus \set{\p}$, and $a \succ' b$ if and only if $a \succ b$ for all $a,b \in A \setminus \set{\p}$. The set of unspecified edges in $\T$ is identical to the set of unspecified edges in $T_\varphi$, and completions correspond to assignments in the usual way. The following lemma is the core of our hardness proof. 

\begin{lemma}\label{lem:unsat'}
	An assignment $\alpha$ satisfies $\varphi$ if and only if $\p \in \bp(\T^\alpha)$.
\end{lemma}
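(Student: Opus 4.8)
The plan is to sidestep a from-scratch equilibrium analysis of $\hat T_\varphi^\alpha$ and instead exploit the fact that $\hat T_\varphi^\alpha$ and $T_\varphi^\alpha$ differ \emph{only} in the edges incident to $\p$. Consequently their restrictions to $A \setminus \set{\p}$ are literally the same tournament $T' := T_\varphi^\alpha|_{A \setminus \set{\p}} = \hat T_\varphi^\alpha|_{A \setminus \set{\p}}$; write $p = \bip(T')$. Let $D(\p)$ and $\dom(\p)$ denote the dominion and the dominators of $\p$ in $T_\varphi^\alpha$. Reversing all edges at $\p$ turns its dominators into its dominion, so the dominion of $\p$ in $\hat T_\varphi^\alpha$ is exactly $\dom(\p)$. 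Since in a tournament $D(\p) \sqcup \dom(\p) = A \setminus \set{\p}$ and $p$ is supported on $A \setminus \set{\p}$, this yields the identity $p(\dom(\p)) = 1 - p(D(\p))$.

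Now I would apply \lemref{lem:half}~\ref{lem:half2} twice, crucially against the \emph{same} distribution $p$ (this is precisely where $\hat T_\varphi^\alpha|_{A \setminus \set{\p}} = T_\varphi^\alpha|_{A \setminus \set{\p}}$ is used). For $\hat T_\varphi^\alpha$ the dominion of $\p$ is $\dom(\p)$, so $\p \in \bp(\hat T_\varphi^\alpha)$ iff $p(\dom(\p)) > \frac12$, i.e.\ iff $p(D(\p)) < \frac12$. For $T_\varphi^\alpha$ it gives $\p \in \bp(T_\varphi^\alpha)$ iff $p(D(\p)) > \frac12$. Thus membership of $\p$ in the two bipartisan sets is governed by which side of $\frac12$ the single quantity $p(D(\p))$ lands on.

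The two directions then follow. If $\alpha$ does not satisfy $\varphi$, then \lemref{lem:unsat} gives $\p \in \bp(T_\varphi^\alpha)$, hence $p(D(\p)) > \frac12$; in particular $p(D(\p)) \not< \frac12$, so $\p \notin \bp(\hat T_\varphi^\alpha)$. If $\alpha$ satisfies $\varphi$, then \lemref{lem:sat} gives $\bp(T_\varphi^\alpha) \subseteq X$; in particular $\p \notin \bp(T_\varphi^\alpha)$, so by \lemref{lem:half}~\ref{lem:ssp} the equilibrium $p = \bip(T')$ coincides with $\bip(T_\varphi^\alpha)$ on $A \setminus \set{\p}$. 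Hence $p(C) = p(\p') = 0$, and since $D(\p) = C \cup \set{\p'} \cup D_X(\p)$ we get $p(D(\p)) = p(D_X(\p))$; the computation already carried out in the proof of \lemref{lem:sat} evaluates this as $\frac{1}{2n+1}\bigl(n + \frac{1 - 3^{-K}}{2}\bigr) < \frac12$. Therefore $p(D(\p)) < \frac12$, and \lemref{lem:half}~\ref{lem:half2} gives $\p \in \bp(\hat T_\varphi^\alpha)$.

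The only point requiring care is strictness at the threshold $p(D(\p)) = \frac12$, where $\p$ would belong to neither bipartisan set: I must ensure each case delivers a \emph{strict} inequality. This is automatic here, and is really the whole substance of the argument. In the unsatisfiable case \lemref{lem:half}~\ref{lem:half2} already hands us $p(D(\p)) > \frac12$ strictly for $\p \in \bp(T_\varphi^\alpha)$; in the satisfiable case \lemref{lem:sat} pins $p(D_X(\p))$ strictly below $\frac12$ (by the gap $\frac{1}{2(2n+1)3^{K}}$). So the boundary value never occurs and the equivalence holds. The main conceptual obstacle is not any computation but spotting the complementarity identity together with the observation that both invocations of \lemref{lem:half}~\ref{lem:half2} run against the identical equilibrium $p$; once that is in place, everything reduces to bookkeeping against \lemref{lem:sat} and \lemref{lem:unsat}, with no need to re-derive the equilibrium structure of $\hat T_\varphi^\alpha$.
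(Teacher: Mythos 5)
Your proof is correct and follows essentially the same route as the paper's: both exploit that $T_\varphi^\alpha$ and $\hat T_\varphi^\alpha$ agree on $A\setminus\set{d}$ and apply \lemref{lem:half}~\ref{lem:half2} to the common equilibrium $p$ of that restriction, so that membership of $d$ in either bipartisan set is decided by which side of $\frac{1}{2}$ the single quantity $p(D(d))$ falls on. Your explicit handling of the threshold case $p(D(d))=\frac{1}{2}$ (where $d$ would belong to neither bipartisan set) is in fact slightly more careful than the paper's bare chain of equivalences, which tacitly assumes this case does not arise.
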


\begin{proof}
	\lemref{lem:unsat} shows that $\alpha$ satisfies $\varphi$ if and only if $\p \notin \bp(T_\varphi^\alpha)$. It thus suffices to show that $\p \notin \bp(T_\varphi^\alpha)$ if and only if $\p \in \bp(\T^\alpha)$.  
	Observe that the tournaments $T_\varphi^\alpha|_{A \setminus \set{\p}}$ and $\T^\alpha|_{A \setminus \set{\p}}$ are identical and let 
	\[p = \bip(T_\varphi^\alpha|_{A \setminus \set{\p}}) = \bip(\T^\alpha|_{A \setminus \set{\p}})\text.\] 
	With \lemref{lem:half}~\ref{lem:half2}, we get
	\begin{align*}
		\p \notin \bp(T_\varphi^\alpha) \quad &\Leftrightarrow \quad  p(\dom_{T_\varphi^\alpha}(\p)) > \frac{1}{2} \\
									          &\Leftrightarrow \quad  p(D_{\T^\alpha}(\p)) > \frac{1}{2} \\
									          &\Leftrightarrow \quad  \p \in \bp(\T^\alpha) \text.
	\end{align*}
\end{proof}

\newtheorem*{thm:posBP}{Theorem \ref{thm:posBP}}
\begin{thm:posBP}
The possible $\bp$ winner problem (in tournament games) is NP-complete.
\end{thm:posBP}

\begin{proof}
	Membership in NP is straightforward: Given an incompletely specified tournament and a distinguished vertex, guess a completion and verify that the vertex is contained in the bipartisan set of the completion.
	
	Hardness follows from a reduction from \textsc{3Sat}. Given a 3CNF $\varphi$, let $\T$ be the incomplete tournament defined above. \lemref{lem:unsat'} states that an assignment $\alpha$ satisfies $\varphi$ if and only if $\p \in \bp(\T^\alpha)$.
	Therefore, $\varphi$ is satisfiable if and only if $\p \in \bp(\T^\alpha)$ for some assignment $\alpha$. The latter is equivalent to $\p$ being a possible $\bp$ winner of $\T$.
\end{proof}

\end{document}